\providecommand{\DontPrintSemicolon}{\dontprintsemicolon}
\theoremstyle{plain}
\newtheorem{theorem}{Theorem}[section]
\newtheorem{fact}[theorem]{Fact}
\newtheorem{lemma}[theorem]{Lemma}
\newtheorem{definition}[theorem]{Definition}
\newtheorem{corollary}[theorem]{Corollary}
\theoremstyle{remark}
\newtheorem{remark}[theorem]{Remark}
\newcommand*{\algotitle}[2]{%
	\stepcounter{algocf}%
	\hypertarget{algocf.title.\theHalgocf}{}%
	\NR@gettitle{#1}%
	\label{#2}%
	\addtocounter{algocf}{-1}%
}
\newenvironment{proofof}[1]
{\medskip\noindent{\it #1.}\hspace{1ex}}
\DeclareMathOperator{\opt}{\textsc{opt}}
\newcommand{\fopt}{\opt_{f}}
\DeclareMathOperator*{\argmax}{arg\,max}
\newcommand{\mysetminusD}{\hbox{\tikz{\draw[line width=0.6pt,line cap=round] (3pt,0) -- (0,6pt);}}}
\newcommand{\mysetminusT}{\mysetminusD}
\newcommand{\mysetminusS}{\hbox{\tikz{\draw[line width=0.45pt,line cap=round] (2pt,0) -- (0,4pt);}}}
\newcommand{\mysetminusSS}{\hbox{\tikz{\draw[line width=0.4pt,line cap=round] (1.5pt,0) -- (0,3pt);}}}
\newcommand{\mysetminus}{\mathbin{\mathchoice{\mysetminusD}{\mysetminusT}{\mysetminusS}{\mysetminusSS}}}
\title{On Budget-Feasible Mechanism Design\\ for Symmetric Submodular Objectives\thanks{A conference version appears in WINE 2017.}}
\author{
	Georgios Amanatidis 
	\and
	Georgios Birmpas 
	\and
	Evangelos Markakis \and 
	\normalsize{Athens University of Economics and Business, Department of Informatics.} \\
	\normalsize{\{gamana, gebirbas, markakis\}@aueb.gr}
}
\begin{document}
\maketitle

\begin{abstract}
We study a class of procurement auctions with a budget constraint, where an auctioneer is interested in buying resources or services from a set of agents. Ideally, the auctioneer would like to select a subset of the resources so as to maximize his valuation function, without exceeding a given budget. As the resources are owned by strategic agents however, our overall goal is to design mechanisms that are truthful,  budget-feasible, and obtain a good approximation to the optimal value. Budget-feasibility creates additional challenges, making several approaches inapplicable in this setting.  Previous results on budget-feasible mechanisms have considered mostly monotone valuation functions. In this work, we mainly focus on \textit{symmetric submodular} valuations, a prominent class of non-monotone submodular functions that includes \textit{cut functions}. We begin first with a purely algorithmic result, obtaining a $\frac{2e}{e-1}$-approximation for maximizing symmetric submodular functions under a budget constraint. We view this as a standalone result of independent interest, as it is the best known factor achieved by a deterministic algorithm. 
We then proceed to propose truthful, budget feasible mechanisms (both deterministic and randomized),  paying particular attention on the Budgeted Max Cut problem. Our results significantly improve the known approximation ratios for these objectives, while establishing polynomial running time for  cases where only exponential mechanisms were known. At the heart of our approach lies an appropriate combination of local search algorithms with results for monotone submodular valuations, applied to the derived local optima.
\end{abstract}

\section{Introduction}
\label{sec:intro}


We study a class of procurement auctions---also referred to as reverse auctions---with budget constraints. 
In a  procurement auction, an auctioneer is interested in buying goods or services from a set of agents. 
In this setting, selecting an agent comes at a cost and there is a hard budget constraint that should not be violated.
The goal of the auctioneer then is to select a budget-feasible subset of the agents so as to maximize his valuation function $v(\cdot)$, where $v(S)$ denotes the value derived when $S$ is the selected subset of agents to get services from.

The purely algorithmic version of the problem results in natural ``budgeted'' versions of known optimization problems. Since these problems are typically NP-hard, our focus is on approximation algorithms. Most importantly, in the setting considered here, the true cost of each agent is private information and we would like to design mechanisms that elicit truthful reporting by all agents. Hence, our ideal goal is to have truthful mechanisms that achieve a good approximation to the optimal value for the auctioneer, 
and are {\em budget feasible}, i.e., the sum of the payments to the agents does not exceed the prespecified budget. 
This framework of budget feasible mechanisms is motivated by recent application scenarios including crowdsourcing platforms, where agents can be viewed as workers providing tasks (e.g., \cite{AnariGN14,GoelNS14}), and influence maximization in networks, where agents correspond to influential users  (see e.g., \cite{Singer12,ABM16}, where the chosen objective is a coverage function). 

Budget feasibility makes the problem more challenging, with respect to truthfulness, as it already rules out well known mechanisms such as VCG. 
We note that the algorithmic versions of such problems often admit constant factor approximation algorithms. However, it is not clear how to appropriately convert them into truthful budget feasible mechanisms. 
The question is nontrivial even if we allow exponential time algorithms, since computational power does not necessarily make the problem easier (see the discussion in \cite{DobzinskiPS11}). All these issues create an intriguing landscape, where one needs to strike a balance between the incentives of the agents and the budget constraints.

The first positive results on this topic were obtained by Singer \cite{Singer10}, for the case where $v(\cdot)$ is an additive or a non-decreasing submodular function. Follow-up works provided refinements and further results for richer classes of functions (see the related work section). Most of these works, however, make the assumption that the valuation function is non-decreasing, i.e., $v(S)\leq v(T)$ for $S\subseteq T$, notable exceptions being the works of \cite{DobzinskiPS11} and \cite{BeiCGL12}.
Although monotonicity makes sense in several scenarios, one can think of examples where it is violated. E.g., \cite{DobzinskiPS11} studied the unweighted Budgeted Max Cut problem, as an eminent example of a non-monotone submodular objective function. 
Moreover, when studying models for influence maximization problems in social networks, adding more users to the selected set may some times bring negative influence \cite{BFO10} (some combinations of users may also not be compatible or well fitted together). Related to this, in the setting of \cite{ABM16}, an appealing class of objectives, under mild assumptions about the underlying network, correspond to weighted cut functions.
To further motivate the study of non-monotone submodular objectives, consider the following well-studied sensor placement problem \cite{Caselton84,Cressie93,KrauseSG08}: assume that we want to monitor some spatial phenomenon (e.g., the temperature of a specific environment), modeled as a Gaussian process. We may place sensing devices on some of the prespecified locations, but each location has an associated cost. A criterion for finding an optimal such placement, suggested by Caselton and Zidek \cite{Caselton84} for the unit cost case, is to maximize the {\it mutual information} between chosen and non chosen locations, i.e., we search for the subset of  locations that minimizes the uncertainty about the estimates in the remaining space. 
Such mutual information objectives are submodular but  not monotone. In addition, it is  straightforward to modify this problem to model participatory crowdsensing scenarios where users have incentives to lie about the true cost of installing a sensor.


It becomes apparent that we would like to aim for truthful mechanisms with good performance for subclasses of non-monotone functions. 
At the moment, the few results known for arbitrary non-monotone submodular functions have very large approximation ratios and often superpolynomial running time. Even worse, in most cases, we do not even know of deterministic mechanisms (see Table \ref{tab:results}).
In trying to impose more structure so as to have better positive results, there is an interesting observation to make: the examples that have been mentioned so far, i.e., cut functions and mutual information functions are \textit{symmetric submodular}, a prominent subclass of non-monotone submodular functions, where the value of a set $S$ equals the value of its complement. This subclass has received already considerable attention in operations research, see e.g., \cite{Fujishige83,Q98}, where more examples are also provided. 
We therefore find that symmetric submodular functions form a suitable starting point for the study of non-monotone functions.     
\medskip

\noindent {\bf Contribution:} 
The main focus of this work is on symmetric\footnote{In some works on mechanism design, symmetric submodular functions have a different meaning and refer to the case where $v(S)$ depends only on $|S|$. Here we have adopted the terminology of earlier literature on submodular optimization, e.g., \cite{Fujishige83}.} submodular functions. As suggested in \cite{Q98}, cut functions form a canonical example of this class. Consequently, we use the budgeted Max Cut problem throughout the paper as an illustrative example of how our more general approach could be refined for concrete objectives that have a well-behaved LP formulation. 
Our contributions can be summarized as follows:
\begin{itemize}[label={$\bullet$}]
	\item In Section \ref{sec:alg_ssm} we obtain a purely algorithmic result, namely  a $\frac{2e}{e-1}$-approximation for symmetric submodular functions under a budget constraint. We believe this is of independent interest, as it is the best known factor achieved by a deterministic algorithm (there exists already a randomized $e$-approximation) assuming only a value oracle for the objective function. 
	\item In Sections \ref{sec:exp_ssm} and \ref{sec:poly_ssm} we propose truthful, budget feasible mechanisms for arbitrary symmetric submodular functions, where previously known results regarded only randomized exponential mechanisms. We manage to significantly improve the known approximation ratios of such objectives by providing both randomized and deterministic mechanisms of exponential time. Moreover, we  propose a general scheme for producing constant factor approximation mechanisms that run in polynomial time when the objective function is well-behaved. These results provide partial answers to some of the open questions discussed in \cite{DobzinskiPS11}.
	\item  In the same sections, aside from studying arbitrary symmetric submodular functions, we also pay particular attention on the weighted and unweighted versions of Budgeted Max Cut. For the weighted version we obtain the first deterministic polynomial time mechanism with a 27.25-approximation 
	(where only an exponential randomized algorithm was known with a 768-approximation), while for the unweighted version  we improve the approximation ratio for polynomial randomized mechanisms, from 564 down to 10, and for polynomial deterministic mechanisms, from 5158 down to 27.25. 
	\item Finally, in Section \ref{sec:rand-xos} we briefly study the class of XOS functions, where we improve the current upper bound by a factor of 3.
\end{itemize}

All our contributions in mechanism design are summarized in Table \ref{tab:results}. We also stress that our mechanisms for general symmetric submodular functions use the value query model for oracle access to $v$, which is a much weaker requirement than the demand query model assumed in previous works, e.g., in \cite{DobzinskiPS11}.. 

Regarding the technical contribution of our work, the core idea of our approach is to exploit a combination of local search 
with  mechanisms for non-decreasing submodular functions. The reason local search is convenient for symmetric submodular functions is that it produces two local optima, 
and we can then prove that the function $v(\cdot)$ is non-decreasing within each local optimum. This allows us  to utilize mechanisms for non-decreasing submodular functions on the two subsets and then prove that one of the two solutions will attain a good approximation. The running time is still an issue under this approach, since finding an exact local optimum is not  guaranteed to terminate fast. However, even by finding \textit{approximate} local optima, 
within each of them the objective remains almost non-decreasing in a certain sense. This way we are still able to appropriately  adjust our mechanisms and obtain provable approximation guarantees. 
To the best of our knowledge, this is the first time that this \textit{``robustness under small deviations from monotonicity'}' approach is used to exploit known results for monotone objectives. 

\begin{table}[h]
	\begin{adjustwidth}{}{}
		\renewcommand{\arraystretch}{1.3}
		\setlength{\tabcolsep}{3pt}
		\centering
		{\small
		\begin{tabular}{c|c|c||c|c||c|c|||c|}
			\cline{2-8}
			& \multicolumn{2}{c||}{symmetric submod.} & \multicolumn{2}{c||}{unweighted cut}  & \multicolumn{2}{c|||}{weighted cut} & \multicolumn{1}{c|}{XOS}  \\ \cline{2-8} 
			& rand. & determ.  & rand. & determ. & rand. & determ.  &  rand. \\ \hline
			\multicolumn{1}{|c|}{{\footnotesize Known}} & $768\,^*\,^\dagger$ & --  & 564 & 5158  & $ 768\,^*\,^\dagger$ & -- & $768\,^*$   \\ \hline
			\multicolumn{1}{|c|}{{\footnotesize This paper}} & $10\, ^*$ & $10.90\, ^*$, {\scriptsize $(1\!+\!\rho)\Big(  2\!+\!\rho\! + \!\sqrt{\!\rho^2 \! +\! 4 \rho\! +1}\,\Big) $}  & $10$  & $27.25$  &  \multicolumn{2}{c|||}{$27.25$}  &  $244\,^*$ \\ \hline
		\end{tabular}
	}\smallskip
		\caption{A summary of our results on mechanisms. The asterisk ($*$) indicates that the corresponding mechanism runs in superpolynomial time. The dagger ($\dagger$) indicates that previously known results do not directly apply here; see Remark \ref{rem:n-sub} at the end of Section \ref{sec:prels}. The factor $\rho$ is an upper bound on the ratio of the optimal fractional solution to the integral one, assuming that we can find the former in polynomial time. The factor 768 is due to \cite{ChenGL11}, while the factors 564 and 5158 are due to \cite{DobzinskiPS11}.} \label{tab:results}
	\end{adjustwidth}
\end{table}

\noindent {\bf Related Work:} 
The study of budget feasible mechanisms, as considered here, was
initiated by Singer \cite{Singer10}. 
Later, Chen et al.~\cite{ChenGL11} significantly improved Singer's results, obtaining a randomized $7.91$-approximation mechanism and a deterministic $8.34$-approximation mechanism for non-decreasing submodular functions.
Several modifications of the mechanism by~\cite{ChenGL11} have been proposed that run in polynomial time for special cases \cite{Singer12,HorelIM14,ABM16}.
For subadditive functions, Dobzinski et al.~\cite{DobzinskiPS11} suggested a randomized $O(\log^2 n)$-approximation mechanism, and they gave the first constant factor mechanisms for non-monotone submodular objectives, specifically for \textit{cut functions}. The factor for subadditive functions was later improved to $O\big(\frac{\log n}{\log \log n}\big)$ by Bei et al.~\cite{BeiCGL12}, who also gave a randomized $O(1)$-approximation mechanism for XOS functions, albeit in exponential time, and further initiated the Bayesian analysis in this setting. 
An improved $O(1)$-approximation mechanism for XOS functions is also suggested in \cite{LeonardiMSZ16}. 
Recently, Amanatidis et al.~\cite{ABM16} suggested a $4$-approximation mechanism for a subclass of XOS functions that captures matroid and matching constraints.
There is also a line of related work under the {\em large market} assumption (where no participant can significantly affect the market outcome), which allows for polynomial time mechanisms with improved performance \cite{SinglaK13,AnariGN14,GoelNS14,BalkanskiH16,JT17}. 

On maximization of submodular functions subject to knapsack or other type of constraints, there is a vast literature, going back several decades, see, e.g., \cite{NemhauserWF78,Wolsey82}.  More recently Lee et al.~\cite{LeeMNS10} provided the first constant factor randomized algorithm for submodular maximization under $k$ matroid and  $k$ knapsack constraints, with factors $k+2+\frac{1}{k}$ and $5$ respectively. The problem was also studied by Gupta et al.~\cite{GuptaRST10} who  proposed a randomized algorithm, which achieves a $(4+\alpha)$-approximation\footnote{In the case of symmetric submodular functions the algorithm gives a deterministic 6-approximation.} in case of knapsack constraints, where $\alpha$ is the approximation guarantee of the unconstrained submodular maximization. Later on Chekuri et al.~\cite{ChekuriVZ14} suggested a randomized 3.07-approximation algorithm improving the previously known results. Finally, Feldman et al.~\cite{FeldmanNS11} and Kulik et al.~\cite{KulikST13} proposed their own randomized  algorithms when there are knapsack constraints, achieving an $e$-approximation.\footnote{The algorithm of \cite{KulikST13} can be derandomized without any performance loss, but only assuming an  additional oracle for the extension by expectation, say $V$, of the objective function $v$. When only an oracle for $v$ is available, estimation of $V$ by sampling is required in general.}


\section{Notation and Preliminaries}
\label{sec:prels}

We  use $A =[n]= \{1, 2, ..., n\}$ to denote a set of $n$ agents. Each agent $i$ is associated with a private cost $c_i$, denoting the cost for participating in the solution. 
We consider a procurement auction setting, where the auctioneer is equipped with a valuation function $v:2^{A}\to \mathbb{Q}_{\ge 0}$ and a  budget $B>0$. For $S\subseteq A$, $v(S)$ is the value derived by the auctioneer if the set $S$ is selected (for singletons, we will often write $v(i)$ instead of $v(\{i\})$).  Therefore, the algorithmic goal in all the problems we study is to select a set $S$ that maximizes $v(S)$ subject to the constraint $\sum_{i\in S} c_i \leq B$.
We assume oracle access to $v$ via value queries, i.e., we assume the existence of a polynomial time value oracle that returns $v(S)$ when given as input a set $S$.

We mostly focus on a natural subclass of submodular valuation functions that includes \textit{cut functions}, namely non-negative symmetric submodular functions.
Throughout this work we make the natural assumption that $v(\emptyset)=0$.


\begin{definition}
	A  function $v$, defined on $2^A$ for some set $A$, is \emph{submodular} if $v(S \cup \{i\}) - v(S) \geq v(T\cup \{i\}) - v(T)$ for any $S\subseteq T \subseteq A$, and $i\not\in T$. Moreover, it is \emph{non-decreasing} if $v(S) \le v(T)$ for any $S \subseteq T \subseteq A$, while it is \emph{symmetric} if $v(S) = v(A \mysetminus S) $ for any $S \subseteq A$.
\end{definition}

\noindent It is easy to see that $v$ cannot be both symmetric and non-decreasing unless it is a constant function. In fact, if this is the case 
and $v(\emptyset)=0$, then $v(S)=0$, for all $S\subseteq A$. 
We also state an alternative definition of a submodular function, which will be useful later on.

\begin{theorem}[Nemhauser et al.~\cite{NemhauserWF78}]\label{thm:alt_SM}
A set function $v$ is submodular if and only if for all $S, T \subseteq A$ we have
$v(T) \le v(S) + \sum\limits_{i \in T \mysetminus S} (v(S\cup\{i\}) - v(S)) - \sum\limits_{i \in S \mysetminus T} (v(S\cup T) - v(S\cup T \mysetminus \{i\}))$.
\end{theorem}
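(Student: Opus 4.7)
The plan is to prove the two directions separately, with the forward direction relying on a telescoping decomposition of $v(T) - v(S)$ and the backward direction reducing the inequality to standard pairwise submodularity followed by an induction.

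For the forward direction, I would enumerate $T \mysetminus S = \{t_1,\dots,t_k\}$ and $S \mysetminus T = \{s_1,\dots,s_\ell\}$ in any fixed order, and write
\[
v(T) - v(S) \;=\; \sum_{j=1}^{k}\bigl[v(S\cup\{t_1,\dots,t_j\}) - v(S\cup\{t_1,\dots,t_{j-1}\})\bigr] - \sum_{j=1}^{\ell}\bigl[v((S\cup T)\mysetminus\{s_1,\dots,s_{j-1}\}) - v((S\cup T)\mysetminus\{s_1,\dots,s_j\})\bigr],
\]
which follows by going from $S$ up to $S\cup T$ and then down to $T$. I then apply the standard submodularity inequality to each telescoping term: for the positive terms, the marginal of $t_j$ over the superset $S\cup\{t_1,\dots,t_{j-1}\}$ is at most its marginal over $S$, so the sum is bounded by $\sum_{i\in T\mysetminus S}(v(S\cup\{i\})-v(S))$; for the negative terms, the marginal of $s_j$ over the subset $(S\cup T)\mysetminus\{s_1,\dots,s_j\}$ is at least its marginal over $(S\cup T)\mysetminus\{s_j\}$, so the sum being subtracted is at least $\sum_{i\in S\mysetminus T}(v(S\cup T)-v((S\cup T)\mysetminus\{i\}))$. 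Combining these two bounds yields the claimed inequality.

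For the converse, I would first specialize the hypothesis to $S := X$ and $T := X\cup\{a,b\}$ with distinct $a,b\notin X$. Then $T\mysetminus S = \{a,b\}$ and $S\mysetminus T = \emptyset$, so the inequality collapses to $v(X\cup\{a,b\}) + v(X) \le v(X\cup\{a\}) + v(X\cup\{b\})$, i.e.\ the pairwise diminishing-returns property. To upgrade this to the full submodular definition (the marginal of $i$ decreases along any chain $S\subseteq T$ with $i\notin T$), I would induct on $|T\mysetminus S|$. The base case $|T\mysetminus S|=1$ is exactly the pairwise inequality. For the inductive step, I would pick some $a\in T\mysetminus S$, set $T' = T\mysetminus\{a\}$, and use the inductive hypothesis on $S\subseteq T'$ together with the base case applied to $T'$, $a$, and $i$, chaining the two marginals.

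The only delicate point is the forward direction: one has to be careful about which direction submodularity is applied in the ``removal'' half of the telescope, because the marginal contributions there are being subtracted, so the inequality must be reversed. Once this sign-tracking is handled, both directions are routine; no subtlety is needed for the backward direction beyond recognizing the pairwise form as a particular specialization of the hypothesis.
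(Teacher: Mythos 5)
The paper states this theorem as a citation to Nemhauser, Wolsey, and Fisher and does not include a proof of its own, so there is no in-paper argument to compare against. Your proof is correct, complete, and essentially the standard one.

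Checking the details: in the forward direction, your telescoping identity is exact — the first sum telescopes to $v(S\cup T)-v(S)$ and the second to $v(S\cup T)-v(T)$, so their difference is $v(T)-v(S)$ as claimed. The bounds are applied with the correct orientation: each increment $v(S\cup\{t_1,\dots,t_j\})-v(S\cup\{t_1,\dots,t_{j-1}\})$ is bounded above by $v(S\cup\{t_j\})-v(S)$ because $S$ is a subset of $S\cup\{t_1,\dots,t_{j-1}\}$; and each decrement $v((S\cup T)\mysetminus\{s_1,\dots,s_{j-1}\})-v((S\cup T)\mysetminus\{s_1,\dots,s_j\})$ is bounded below by $v(S\cup T)-v((S\cup T)\mysetminus\{s_j\})$ because $(S\cup T)\mysetminus\{s_1,\dots,s_j\}\subseteq(S\cup T)\mysetminus\{s_j\}$, so the sign flip under subtraction is handled properly — the ``delicate point'' you flag is exactly right and you resolve it correctly. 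For the converse, the specialization $S:=X$, $T:=X\cup\{a,b\}$ with $a,b\notin X$ distinct does yield $v(X\cup\{a,b\})+v(X)\le v(X\cup\{a\})+v(X\cup\{b\})$, and your induction on $|T\mysetminus S|$ to upgrade this to the full chain form (with the base case being exactly the pairwise inequality applied to $X=S$, elements $a$ and $i$) is sound. No gaps.
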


We often need to argue about optimal solutions of sub-instances, from an instance we begin with.
Given a cost vector $\mathbf{c}$, and a subset $X\subseteq A$, we denote by $\mathbf{c}_X$ the projection of $\mathbf{c}$ on $X$, and by $\mathbf{c}_{-X}$ the projection of $\mathbf{c}$ on $A\mysetminus X$.
We also let $\opt(X, v, \mathbf{c}_X, B)$ be 
the value of an optimal solution to the restriction of this instance on $X$,
i.e., $\opt(X, v, \mathbf{c}_X, B) = \max_{S:\,S\subseteq X,\, \mathbf{c}(S)\leq B}v(S)$.
Similarly, $\opt(X, v, \mathbf{c}_X, \infty)$ denotes the value of an optimal solution to the unconstrained version of the problem restricted 
on $X$.
For the sake of readability, we usually drop the valuation function and the cost vector, 
and  write $\opt(X, B)$ or $\opt(X, \infty)$.

Finally, in Sections \ref{sec:alg_ssm}-\ref{sec:poly_ssm} we make one further assumption: we assume that there is at most one item whose cost exceeds the budget. As shown in Lemma \ref{lem:Costs} and Corollary \ref{cor:Costs} in Appendix \ref{app:costs}, this is without loss of generality.
\medskip

\noindent{\bf Local Optima and Local Search.}
Given $v:2^{A}\to \mathbb{Q}$, a set $S\subseteq A$ 
is called a $(1+\epsilon)$-\textit{approximate local optimum} of $v$, if  
$(1+\epsilon) v(S) \ge v(S\mysetminus \{i\})$ and $(1+\epsilon) v(S) \ge v(S\cup \{i\})$  for every  $i\in A$. 
When $\epsilon = 0$, $S$ is called an \textit{exact local optimum} of $v$. Note that if $v$ is symmetric submodular, then $S$ is a $(1+\epsilon)$-approximate local optimum if and only if $A\mysetminus S$ is a $(1+\epsilon)$-approximate local optimum.
 
Approximate local optima  produce good approximations in unconstrained maximization of general submodular functions \cite{FeigeMV11}. However, here they are of interest for a quite different reason that becomes apparent in Lemmata  \ref{lem:almost-monotone} and \ref{lem:local-opt-monotone}. We can efficiently find approximate local optima using the local search algorithm \nameref{fig:alg-ls} of \cite{FeigeMV11}. Note that this is an algorithm for the unconstrained version of the problem, when there are no budget constraints.\medskip

\begin{algorithm}[H]
	\DontPrintSemicolon 
	\NoCaptionOfAlgo
	\algotitle{\textsc{Approx-Local-Search}}{fig:alg-ls.title}
	$S=\{i^*\}$, where $i^*\in\argmax_{i\in A}v(i)$ \;
	\While{there exists some $a$ such that $\max\{v(S\cup\{a\}), v(S\mysetminus\{a\})\} > (1+\epsilon/n^2) v(S)$}{
		\vspace{2 pt}\If{$v(S\cup\{a\}) > (1+\epsilon/n^2) v(S)$}{\vspace{2 pt}$S = S\cup\{a\}$}
		\Else{$S = S\mysetminus\{a\}$}}
	\Return $S$ \;
	\caption{\textsc{Approx-Local-Search}$(A, v, \epsilon)$ \ \cite{FeigeMV11}} \label{fig:alg-ls} 
\end{algorithm}\medskip 

If we care to find an exact local optimum, we can simply set $\epsilon=0$. In this case, however, we cannot argue about the running time of the algorithm  in general.

\begin{lemma}[inferred from \cite{FeigeMV11}]\label{lem:LS}
Given a submodular function $v:2^{[n]}\to \mathbb{Q}_{\ge 0}$ and a value oracle for $v$, 
\nameref{fig:alg-ls}$(A, v, \epsilon)$ outputs a $\left( 1+\frac{\epsilon}{n^2}\right) $-approximate local optimum using $O\left(\frac{1}{\epsilon}n^3 \log n \right)$ calls to the oracle.
\end{lemma}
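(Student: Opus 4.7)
The plan splits into two parts: verifying that the returned set $S$ satisfies the $(1+\epsilon/n^2)$-approximate local optimum property, and bounding the number of oracle calls by $O(n^3 \log n / \epsilon)$. The first part is immediate from the termination condition of the while loop: the algorithm exits only when, for every $a\in A$, both $v(S\cup\{a\})\le(1+\epsilon/n^2)v(S)$ and $v(S\mysetminus\{a\})\le(1+\epsilon/n^2)v(S)$, which is exactly the required definition.

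For the running-time bound, the natural approach is a multiplicative potential argument on $v(S)$. Each successful iteration of the while loop multiplies $v(S)$ by a factor strictly greater than $1+\epsilon/n^2$, so $v(S)$ grows geometrically. The starting value is $v(i^*)=\max_{i}v(i)$, and for any $T\subseteq A$ one has the subadditive upper bound $v(T)\le\sum_{i\in T} v(i)\le n\,v(i^*)$: using $v(\emptyset)=0$ and submodularity, add the elements of $T$ one at a time and note that each marginal is at most $v(i)\le v(i^*)$, invoking non-negativity to drop the intermediate terms that appear. Hence the number of iterations is at most $\log_{1+\epsilon/n^2}(n)$, which is $O(n^2\log n/\epsilon)$ by the elementary estimate $\ln(1+x)\ge x/2$ for $x\in[0,1]$. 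Each iteration scans all $a\in A$ and queries $v(S\cup\{a\})$ and $v(S\mysetminus\{a\})$, costing $O(n)$ oracle calls, yielding the claimed $O(n^3\log n/\epsilon)$ total.

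The only edge case to handle is $v(i^*)=0$, in which case the subadditive bound forces $v\equiv 0$ on all subsets, the while condition is never met, and the algorithm returns $\{i^*\}$ immediately, which is trivially a local optimum. I do not expect any genuine obstacle: each ingredient (subadditivity of non-negative submodular functions, the estimate $\ln(1+x)\ge x/2$) is standard. The only thing to be mildly careful about is that \emph{non-negativity} of $v$, and not merely $v(\emptyset)=0$, is what makes the subadditive upper bound $v(T)\le\sum_{i\in T}v(i)$ usable in pinning down the geometric growth range of the potential.
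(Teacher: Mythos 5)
Your proof is correct and is the standard multiplicative-potential argument; the paper itself gives no proof of this lemma (it is stated as inferred from Feige, Mirrokni, and Vondr\'{a}k), so there is no in-paper argument to compare against, but your reasoning matches that source. One small remark on your closing commentary: the subadditive bound $v(T)\le\sum_{i\in T}v(i)$ already follows from submodularity and $v(\emptyset)=0$ alone, since the telescoping decomposition $v(T)=\sum_{j}\bigl(v(\{i_1,\dots,i_j\})-v(\{i_1,\dots,i_{j-1}\})\bigr)$ has each marginal bounded by $v(\{i_j\})-v(\emptyset)=v(i_j)$ directly by submodularity with $S=\emptyset$, so no intermediate terms need to be ``dropped'' via non-negativity; non-negativity of $v$ is genuinely used only for the degenerate case $v(i^*)=0$ and in the final step $|T|\,v(i^*)\le n\,v(i^*)$.
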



\subsection{Mechanism Design}
In the strategic version that we consider here, every agent $i$ only has his true cost $c_i$ as private information.
A mechanism $\mathcal{M}=(f,p)$ in our context consists of an outcome rule $f$ and a payment rule $p$. Given a vector of cost declarations, 
$\mathbf{b} = (b_i)_{i \in A}$, where $b_i$ denotes the cost reported by agent 
$i$, the outcome rule of the mechanism selects the set $f(\mathbf{b})$. At the same time, it computes payments $p(\mathbf{b}) = (p_i(\mathbf{b}))_{i \in A}$ 
where $p_i(\mathbf{b})$ denotes the payment issued to agent $i$. Hence, the final utility of agent $i$ is $p_i(\mathbf{b}) - c_i$.

The main properties we want to ensure for our mechanisms  are the following. 
\begin{definition}
	A mechanism $\mathcal{M}=(f,p)$ is 
	\item[{\qquad \it 1.}] \emph{truthful}, if reporting $c_i$ is a dominant strategy for every agent $i$.
	\item[{\qquad \it 2.}] \emph{individually rational}, if $p_i(\mathbf{b})\geq 0$ for every $i\in A$, and $p_i(\mathbf{b}) \geq c_i$, for every $i\in f(\mathbf{b})$.
	\item[{\qquad \it 3.}] \emph{budget feasible}, if $\sum_{i\in A} p_i(\mathbf{b}) \leq B$ for every $\mathbf{b}$.
\end{definition}

For randomized mechanisms, we use the notion of \textit{universal truthfulness}, which means that the mechanism is a probability distribution over deterministic truthful mechanisms. 

We say that an outcome rule $f$ is {\em monotone}, if for every agent $i\in A$, and any vector of cost declarations $\mathbf{b}$, if $i\in f(\mathbf{b})$, then $i\in f(b_i', \mathbf{b}_{-i})$ for $b_i' \leq b_i$. This simply means that if an agent $i$ is selected in the outcome 
by declaring cost $b_i$, then by declaring a lower cost he should still be selected.
Myerson's lemma below implies that monotone algorithms admit truthful payment schemes.

\begin{lemma}[Myerson's lemma \cite{Myerson81}]\label{lem:myerson}
	Given a monotone algorithm $f$, there is a unique payment scheme $p$ such that $(f, p)$ is a truthful and individually rational  mechanism, given by
	\begin{displaymath}
	p_i(b)= \left\{ \begin{array}{ll}
	\sup_{b_i\in [c_i, \infty)} \{b_i: i\in f(b_i, b_{-i})\}\,, & \textrm{\emph{ if\ \  }} i\in f(b)\\
	0\,, & \textrm{\emph{ otherwise}}
	\end{array} \right.
	\end{displaymath}
\end{lemma}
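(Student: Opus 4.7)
The plan is to split the argument into an \emph{existence} part---verifying that the payment rule $p$ displayed in the lemma makes $(f,p)$ truthful and individually rational---and a \emph{uniqueness} part, which additionally uses the normalisation $p_i(\mathbf{b})=0$ for $i\notin f(\mathbf{b})$ that is built into the displayed formula. Fix an agent $i$ and a profile $\mathbf{b}_{-i}$ of the others' declarations, and introduce the threshold
\[
t_i(\mathbf{b}_{-i})\ :=\ \sup\{x\geq 0:\ i\in f(x,\mathbf{b}_{-i})\}.
\]
Monotonicity of $f$ makes the set inside the supremum downward closed in $x$, so $i\in f(b_i,\mathbf{b}_{-i})$ whenever $b_i<t_i$. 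Crucially, $t_i$ depends on $\mathbf{b}_{-i}$ alone, and for any agent $i$ selected at declaration $\mathbf{b}$ the prescribed payment equals exactly $t_i(\mathbf{b}_{-i})$.

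For truthfulness I would distinguish two cases by comparing the agent's true cost $c_i$ to $t_i$. If $c_i<t_i$, declaring $c_i$ yields utility $t_i-c_i>0$; any under-report $b_i'<t_i$ preserves selection with the same payment $t_i$, while any over-report $b_i'>t_i$ drops the agent and yields utility $0$. If $c_i>t_i$, truthful reporting gives utility $0$, while any under-report $b_i'\leq t_i$ would produce the negative utility $t_i-c_i$. Hence truth-telling is a (weakly) dominant strategy. Individual rationality is immediate: when $i$ is selected, $p_i=t_i\geq c_i$; when not, $p_i=0\geq 0$.

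For uniqueness, suppose $(f,p')$ is another truthful, individually rational mechanism with $p'_i(\mathbf{b})=0$ whenever $i\notin f(\mathbf{b})$. Applying the incentive-compatibility inequalities in both directions between two declarations $b_i$ and $b_i'$ that both keep $i$ selected forces $p'_i(b_i,\mathbf{b}_{-i})=p'_i(b_i',\mathbf{b}_{-i})$, so the payment when selected depends only on $\mathbf{b}_{-i}$; call it $P(\mathbf{b}_{-i})$. A two-way squeeze then pins $P$ down at $t_i$: if $P<t_i$, an agent with true cost in $(P,t_i)$ would be selected with negative utility, contradicting IR; if $P>t_i$, an agent with true cost in $(t_i,P)$ would gain strictly positive utility by under-declaring to some $b_i'\leq t_i$, contradicting truthfulness.

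The only delicate points are the handling of the boundary $b_i=t_i$ (whether the supremum is attained, which depends on how $f$ breaks ties) and the observation that the payment-when-selected must be independent of the agent's own declaration. Both are essentially bookkeeping; the substantive content is the two-way squeeze that pins the payment down exactly at the threshold price. Since this is a classical result with a textbook proof, I do not anticipate any genuine obstacle.
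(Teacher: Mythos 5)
The paper does not prove this statement: it is quoted as a known result, attributed to Myerson \cite{Myerson81}, and used as a black box. So there is no paper-internal proof to compare yours against. Taken on its own merits, your argument is the standard textbook derivation and is essentially correct: you correctly extract the threshold $t_i(\mathbf{b}_{-i})$, use downward-closedness (from monotonicity) to equate the displayed supremum with $t_i$ for any selected agent, run the two-case analysis for dominance, and carry out the two-way squeeze for uniqueness, correctly noting that uniqueness only holds after imposing the normalisation $p_i=0$ for losers (which is built into the displayed formula, not stated as a separate hypothesis). The only point I would tighten is that your IR claim ``$p_i=t_i\ge c_i$ when selected'' should be justified by observing that, at a truthful profile, $i\in f(c_i,\mathbf{b}_{-i})$ together with downward-closedness of the selection set forces $c_i\le t_i$; and in the uniqueness squeeze you should say explicitly which agent's true cost you are varying in each of the IC inequalities (you gesture at this but leave it to the reader). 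These are exactly the bookkeeping points you flag, and neither is a gap in substance.
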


The payments in Lemma \ref{lem:myerson} are often referred to as {\it threshold payments}, since they indicate the threshold at which an agent stops being selected. Myerson's lemma simplifies the design of truthful mechanisms by focusing only on constructing monotone algorithms and not having to worry about the payment scheme. Nevertheless, in the setting we study here budget feasibility clearly complicates things further. For all the algorithms presented in the next sections, we always assume that the underlying payment scheme is given by Myerson's lemma.
\medskip

\noindent{\bf Mechanisms for Non-Decreasing Submodular Valuations.}
In the mechanisms we design for non-mono\-tone submodular functions, we will repeatedly make use of truthful budget feasible mechanisms for non-decreasing submodular functions as subroutines. 
The best known such mechanisms 
are due to Chen et al.~\cite{ChenGL11}. 
Here, we follow the improved analysis of \cite{JT17} for the approximation ratio of the randomized mechanism \nameref{fig:alg-1a} of \cite{ChenGL11}, stated below. 
\medskip

\begin{algorithm}[H]
	\DontPrintSemicolon 
	\NoCaptionOfAlgo
	\algotitle{\textsc{Rand-Mech-SM}}{fig:alg-1a.title}
	Set $A'=\{i\ |\ c_i\le B\}$ and $i^*\in\argmax_{i\in A'}v(i)$ \;
	with probability $\frac{2}{5}$ \Return $i^*$ \;
	with probability $\frac{3}{5}$ \Return \textsc{Greedy-SM}$(A, v, \mathbf{c}, B/2)$\;
	\caption{\textsc{Rand-Mech-SM}$(A, v, \mathbf{c}, B)$ \ \cite{ChenGL11}}  
	\label{fig:alg-1a}
\end{algorithm} \medskip 

The mechanism \nameref{fig:alg-2} is a greedy algorithm that picks agents according to their ratio of marginal value over cost, given that this cost is not too large. For the sake of presentation, we assume the agents are sorted in descending order with respect to this ratio. The marginal value of each agent is calculated with respect to the previous agents in the ordering, i.e., $1=\argmax_{j\in A}\frac{v(j)}{c_j}$ and $i=\argmax_{j\in A\setminus [i-1]}\frac{v([j])-v([j-1])}{c_j}$ for $i\ge 2$.\medskip

\begin{algorithm}[H]
	\DontPrintSemicolon 
	\NoCaptionOfAlgo
	\algotitle{\textsc{Greedy-SM}}{fig:alg-2.title}
	Let $k=1$ and $S=\emptyset$ \;
	\While{$k\le |A|$ {\rm{\textbf {and}}} $v(S\cup\{k\}) > v(S)$ {\rm{\textbf {and}}} $c_k \le \frac{B}{2}\cdot \frac{v(S\cup\{k\})-v(S)}{v(S\cup\{k\})}$}{
		$S=S\cup\{k\}$ \;
		$k=k+1$
	}
	\Return $S$
	\caption{\textsc{Greedy-SM}$(A, v, \mathbf{c}, B/2)$ \ \cite{ChenGL11}} \label{fig:alg-2} 
\end{algorithm}  

\begin{lemma}[inferred from \cite{ChenGL11} and \cite{JT17}]\label{lem:GreedySM}
	\nameref{fig:alg-2} is monotone. Assuming a non-decreasing submodular function $v$ and the payments of Myerson's lemma, \nameref{fig:alg-2}$(A, v, \mathbf{c}, \allowbreak B/2)$ is budget-feasible and outputs a set $S$ such that $v(S)\ge \frac{1}{3} \cdot \opt(A,  B) - \frac{2}{3}\cdot v(i^*)$. 
\end{lemma}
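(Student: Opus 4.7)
The plan is to verify the three claimed properties in turn---monotonicity of the outcome, budget-feasibility of the induced Myerson payments, and the approximation guarantee---using only submodularity, non-decreasingness, and a Sviridenko-style ratio analysis.

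\textbf{Monotonicity.} I would show that if $i \in S$ under declared cost $b_i$, then $i$ remains selected at any $b_i' < b_i$. Lowering $b_i$ strictly increases the ratio $v(i\mid T)/b_i$ on every prefix $T$, so $i$ slides weakly earlier in the greedy ordering; let $i$'s original position be $k$ and its new position be $j^* \le k$. By induction on $j < j^*$, the greedy under $b_i'$ makes the same pick as under $b_i$ at step $j$: the non-$i$ agents' ratios and threshold conditions depend only on the identical prefix, and $i$ is not chosen at step $j$ by definition of $j^*$. At step $j^*$ the new prefix $S_{j^*-1}' \subseteq S_{k-1}$, so submodularity gives $v(i\mid S_{j^*-1}') \ge v(i\mid S_{k-1})$ and non-decreasingness gives $v(S_{j^*-1}' \cup \{i\}) \le v(S_{k-1} \cup \{i\})$. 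Thus the new threshold $(B/2)\,v(i\mid S_{j^*-1}')/v(S_{j^*-1}'\cup\{i\})$ is at least the original one, which already majorized $b_i > b_i'$, and $i$ is selected.

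\textbf{Budget feasibility.} Under Myerson's lemma, $p_i = \sup\{b_i \mid i \in f(b_i,\mathbf{b}_{-i})\}$. As $b_i$ increases, $i$ slides to later positions until dropped, either because (i) the threshold $b_i \le (B/2)\,v(i\mid T)/v(T\cup\{i\})$ fails, or (ii) some other agent's ratio overtakes $i$'s. Following the charging argument of~\cite{ChenGL11}---a careful comparison of the greedy's executions ``with $i$'' and ``without $i$''---one establishes $p_i \le (B/2)\cdot v(i\mid S\setminus\{i\})/v(S)$. Summing over $i \in S$ and using the elementary submodular identity $\sum_{i\in S} v(i\mid S\setminus\{i\}) \le v(S)$ (from telescoping $v(S) = \sum_i v(i\mid S_{i-1})$ combined with submodularity) yields $\sum_{i\in S} p_i \le B/2 \le B$.

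\textbf{Approximation.} Let $O$ be an optimal budget-$B$ solution. If the greedy halts because no positive marginal remains (or the ground set is exhausted), then submodularity forces $v(o\mid S) = 0$ for every $o\notin S$, hence $v(S) = v(A) \ge \opt(A,B)$. Otherwise it halts at some $k$ with $c_k > (B/2)\,v(k\mid S)/v(T)$, where $T = S\cup\{k\}$. Because $k$ had the highest marginal-to-cost ratio among agents outside $S$, Theorem~\ref{thm:alt_SM} and the ratio inequality give
\[
v(O) - v(S) \;\le\; v(O\cup S) - v(S) \;\le\; \sum_{o\in O\setminus S} v(o\mid S) \;\le\; \frac{c(O\setminus S)}{c_k}\,v(k\mid S) \;\le\; \frac{B}{c_k}\bigl(v(T)-v(S)\bigr).
\]
Substituting the threshold-failure bound $B/c_k < 2\,v(T)/(v(T)-v(S))$ yields $v(O) - v(S) < 2\,v(T)$. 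Since $v(T) \le v(S) + v(k) \le v(S) + v(i^*)$ by submodularity, we obtain $\opt(A,B) < 3v(S) + 2v(i^*)$, i.e.\ the claimed $v(S) \ge \tfrac{1}{3}\opt(A,B) - \tfrac{2}{3}v(i^*)$.

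The main technical obstacle is the payment bound in the budget-feasibility step. The monotonicity and approximation arguments are self-contained applications of submodularity, but deriving $p_i \le (B/2)\,v(i\mid S\setminus\{i\})/v(S)$ requires the delicate comparison between parallel greedy executions first carried out in~\cite{ChenGL11}, whose refined form from~\cite{JT17} is what delivers the precise constants in the stated lemma.
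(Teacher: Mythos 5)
Your monotonicity argument and your approximation argument are both sound and take essentially the same route the paper does: the approximation step is exactly the paper's Lemma~\ref{lem:greedy_approximation} specialized to $\epsilon=0,\beta=1/2$ (the only change from the paper is that you split off the zero-marginal termination case explicitly, whereas the paper subsumes it because $v(S\cup\{\ell+1\})-v(S)=0$ makes the key inequality trivial).

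The budget-feasibility step, however, contains a real error. You assert the per-agent bound
\[
p_i \;\le\; \frac{B}{2}\cdot\frac{v(i\mid S\setminus\{i\})}{v(S)},
\]
which is false. Take $A=\{1,2\}$, $B=10$, with $v(\emptyset)=0$, $v(\{1\})=10$, $v(\{2\})=1$, $v(\{1,2\})=10$ (non-decreasing, submodular), and costs $c_1=1$, $c_2=0.01$. The greedy at budget $B/2=5$ picks agent $2$ first (ratio $100$), then agent $1$ (marginal $9$, threshold $4.5\ge 1$), so $S=\{1,2\}$. But $v(2\mid S\setminus\{2\})=v(\{1,2\})-v(\{1\})=0$, so your bound says $p_2\le 0$, while by individual rationality $p_2\ge c_2>0$ (indeed $p_2=0.1$: once $b_2>0.1$, agent $1$ moves to the front and agent $2$ then has zero marginal and is rejected). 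The actual bound proved in the paper's Lemma~\ref{lem:greedy_budget-feasible} is in terms of the marginal along the greedy prefix, $p_i\le B\bigl(v([i])-v([i-1])\bigr)/v(S)$, which sums to $B$, not $B/2$. Your stronger ``last-in'' marginal bound cannot be recovered from the charging argument; it is simply not true, and the claimed conclusion $\sum p_i\le B/2$ is therefore unsupported (only $\sum p_i\le B$ is established). Since you explicitly flag this as the step you are outsourcing to~\cite{ChenGL11,JT17}, the right fix is just to quote the correct bound; the rest of the proposal stands.
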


A derandomized version of \nameref{fig:alg-1a.title} is also provided by \cite{ChenGL11}. It has all the desired properties, while suffering a small loss on the approximation factor. 
Here, following the improved analysis of \cite{JT17} for the ratio of \nameref{fig:alg-1a}, we  fine-tune this derandomized mechanism to obtain \nameref{fig:alg-1b-var} that has a better approximation guarantee. 
\smallskip
%

\begin{algorithm}[H]
	\DontPrintSemicolon 
	\NoCaptionOfAlgo
	\algotitle{\textsc{Mech-SM}}{fig:alg-1b-var.title}
	Set $A'=\{i\ |\ c_i\le B\}$ and $i^*\in\argmax_{i\in A'}v(i)$ \;
	\vspace{2pt} \If{$(2+\sqrt{6})\cdot v(i^*) \ge \opt(A\mysetminus\{i^*\},   B)$}{\vspace{2pt}\Return $i^*$}
	\Else{\Return \textsc{Greedy-SM}$(A, v, \mathbf{c}, B/2)$}
	\caption{\textsc{Mech-SM}$(A, v, \mathbf{c}, B)$} \label{fig:alg-1b-var} 
\end{algorithm}\medskip 

The next theorem summarizes the properties 
of \nameref{fig:alg-1a} and \nameref{fig:alg-1b-var}.


\begin{theorem}[inferred from \cite{ChenGL11}, \cite{JT17}, and Lemma \ref{lem:greedy_approximation}]\label{thm:mechSM}
\nameref{fig:alg-1a} runs in polynomial time, it is universally truthful, individually rational, budget-feasible, and has approximation ratio $5$. 
\nameref{fig:alg-1b-var} is deterministic, truthful, individually rational, budget-feasible, and has approximation ratio $3+\sqrt{6}$.
\end{theorem}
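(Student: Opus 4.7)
The plan is to verify each required property separately for \nameref{fig:alg-1a} and \nameref{fig:alg-1b-var}, leveraging the properties of the two subroutines (selecting the max-value singleton $i^*$ and running \nameref{fig:alg-2}).

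\textbf{Structural properties.} I would first dispatch truthfulness, individual rationality, and budget feasibility in one pass. Both outcome rules are monotone: in the singleton branch, the identity of $i^*$ and the condition $(2+\sqrt{6})v(i^*) \ge \opt(A\mysetminus\{i^*\},B)$ are both independent of $c_{i^*}$, so lowering $c_{i^*}$ keeps $i^*$ selected; in the \nameref{fig:alg-2} branch, monotonicity is guaranteed by Lemma~\ref{lem:GreedySM}. Truthfulness and individual rationality then follow from Myerson's lemma (Lemma~\ref{lem:myerson}). For budget feasibility, the singleton-branch threshold payment to $i^*$ cannot exceed $B$ because membership of $i^*$ in $A'$ is the binding constraint when raising $c_{i^*}$; in the \nameref{fig:alg-2} branch it is given by Lemma~\ref{lem:GreedySM}. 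Universal truthfulness of \nameref{fig:alg-1a} holds because it is a convex combination of two truthful deterministic mechanisms.

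\textbf{Approximation ratio of \nameref{fig:alg-1a}.} Let $S$ denote the set output by \nameref{fig:alg-2}. By Lemma~\ref{lem:GreedySM}, $v(S) \ge \tfrac{1}{3}\opt(A,B) - \tfrac{2}{3}v(i^*)$. Taking expectation over the internal coin flip yields
\[
\mathbb{E}\bigl[v(\text{output})\bigr] \;=\; \tfrac{2}{5}v(i^*) + \tfrac{3}{5}v(S) \;\ge\; \tfrac{2}{5}v(i^*) + \tfrac{1}{5}\opt(A,B) - \tfrac{2}{5}v(i^*) \;=\; \tfrac{1}{5}\opt(A,B).
\]
The probabilities $2/5$ and $3/5$ are chosen precisely so that the $v(i^*)$ terms cancel; this is the improved \cite{JT17} analysis of the original \cite{ChenGL11} mechanism.

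\textbf{Approximation ratio of \nameref{fig:alg-1b-var}.} The key inequality here is
\[
\opt(A,B) \;\le\; v(i^*) + \opt(A\mysetminus\{i^*\},B),
\]
which holds because, by submodularity, the marginal contribution of $i^*$ to any set is at most $v(i^*)$. In Case 1, this inequality combined with the branching condition yields $\opt(A,B) \le (3+\sqrt{6})v(i^*)$, so returning $i^*$ delivers the required ratio. In Case 2, we have $v(i^*) < \opt(A,B)/(2+\sqrt{6})$, and Lemma~\ref{lem:GreedySM} gives
\[
v(S) \;\ge\; \tfrac{1}{3}\opt(A,B) - \tfrac{2}{3}v(i^*) \;\ge\; \Bigl(\tfrac{1}{3} - \tfrac{2}{3(2+\sqrt{6})}\Bigr)\opt(A,B) \;=\; \tfrac{1}{3+\sqrt{6}}\opt(A,B),
\]
where the final identity is exactly what dictates the threshold $2+\sqrt{6}$: it is chosen so that the two case-bounds match. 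The main obstacle is thus pinning down this threshold, which requires equating Case~1's bound $(3+\sqrt{6})v(i^*)$ with the Case~2 estimate; everything else follows from routine algebra combined with Myerson's lemma and Lemma~\ref{lem:GreedySM}.
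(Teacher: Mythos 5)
Your proof is correct and follows essentially the same route as the paper: both rely on the key bound $\opt(A,B) \le 3v(S) + 2v(i^*)$ (stated in the paper as Lemma~\ref{lem:greedy_approximation_jt}, equivalently the bound in Lemma~\ref{lem:GreedySM}), take the $\tfrac{2}{5}$--$\tfrac{3}{5}$ expectation for the randomized mechanism, and do the same two-case analysis for the deterministic one with the same algebraic simplification to $3+\sqrt{6}$. The only difference is that you spell out the monotonicity and budget-feasibility arguments that the paper simply cites to \cite{ChenGL11}, and you make explicit the design rationale for the threshold $2+\sqrt{6}$; neither changes the substance of the proof.
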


\begin{proof}
	Monotonicity (and thus truthfulness and individual rationality) and budget-feasibility of both mechanisms directly follow from \cite{ChenGL11}. What is left to show are the approximation guarantees.
	
	The key fact here is the following lemma from \cite{JT17}. Note that the lemma also follows from Lemma \ref{lem:greedy_approximation} for $\epsilon = 0 $ and $\beta = 0.5$.
	For the rest of this proof, $S$ will denote the outcome of \nameref{fig:alg-2}$(A, v, \mathbf{c}, B/2)$.
	
	\begin{lemma}[\cite{JT17}]\label{lem:greedy_approximation_jt}
		$\opt(A, B) \le 3\cdot v(S)+2\cdot v(i^*)$.
	\end{lemma}

	\noindent For \nameref{fig:alg-1a}, if $X$ denotes the outcome of the mechanism, then directly by Lemma \ref{lem:greedy_approximation_jt} we have 
	$\mathrm E(v(X))\ge \frac{3}{5}v(S)+\frac{2}{5}v(i^*)  \ge \frac{1}{5} \opt(A,  B)$,
	thus proving the approximation ratio. \smallskip
	
	\noindent For \nameref{fig:alg-1b-var} we consider two cases:
	
	If $i^*$ is returned by the mechanism, then 
	$(2+\sqrt{6}) \cdot v(i^*) \ge \opt(A\mysetminus\{i^*\},   B) \ge \opt(A,   B) - v(i^*)$,
	and therefore $\opt(A,   B) \le (3+\sqrt{6})\cdot v(i^*)$.
	
	On the other hand, if  $S$ is returned, then
	$(2+\sqrt{6}) \cdot v(i^*) < \opt(A\mysetminus\{i^*\},   B) \le \opt(A,   B)$. 
	Combining this with Lemma \ref{lem:greedy_approximation_jt} we have 
	$\opt(A,  B) \le 3\cdot v(S)+\frac{2}{2+\sqrt{6}}\opt(A,   B)$ and therefore $\opt(A,  B) \le \frac{3(2+\sqrt{6})}{\sqrt{6}}\cdot v(S) = (3+\sqrt{6}) \cdot v(S)$.
\end{proof}


\begin{remark} 
	\label{rem:n-sub}
	In  \cite{BeiCGL12} the proposed mechanisms regard  XOS and non-decreasing subadditive objectives, but it is stated that their results can be extended for general 
	subadditive functions as well. 
	This is achieved by  defining $\hat{v}(S)=\max_{T\subseteq S} v(T)$. It is easily seen that $\hat{v}$ is non-decreasing, 
	subadditive,
	and any solution that maximizes $v$ is also an optimal solution for $\hat{v}$. Although this is true for 
	subadditive  functions, it does not hold for submodular functions. In particular, if $v$ is submodular, then  $\hat{v}$ is not necessarily submodular. Therefore the results of \cite{ChenGL11} cannot be extended to our setting, even when time complexity is not an issue. An example of how $\hat{v}$ may fail to be submodular when $v$ is a cut function is given in Appendix \ref{app:counterexample}, where we also discuss how it can be derived from \cite{GuptaNS17} that in such cases $\hat{v}$ is actually XOS. This also implies that the best previously known approximation factor for the class of symmetric submodular functions was indeed 768 (inherited by the use of  $\hat{v}$).
	
\end{remark}


\section{The Core Idea: A Simple Algorithm for Symmetric Submodular Objectives}
\label{sec:alg_ssm}
This section deals with the algorithmic version of the problem: given a symmetric submodular function $v$, the  goal is to find $S\subseteq A$ that maximizes $v(S)$ subject to the constraint $\sum_{i\in S} c_i \leq B$. 
The main result is a deterministic $\frac{2e}{e-1}$-approximation algorithm for symmetric submodular functions. 
For this section only, the costs and the budget are assumed to be integral. 

Since our function is not monotone, we cannot directly apply the result of \cite{Sviridenko04}, which gives an optimal simple greedy algorithm for non-decreasing submodular maximization subject to a knapsack constraint. Instead, our main idea is to combine appropriately the result of \cite{Sviridenko04} with the local search used for unconstrained symmetric submodular maximization \cite{FeigeMV11}. 
At a high level, what happens is that 
local search produces an approximate solution $S$ for the unconstrained problem, and while this does not look related to our goal at first sight, $v$ is ``close to being non-decreasing'' on both $S$ and $A\mysetminus S$. This  becomes precise in Lemma \ref{lem:almost-monotone} below, but the point is that running a modification of the greedy algorithm of \cite{Sviridenko04}, 
on both $S$ and $A\mysetminus S$ will now produce at least one good enough solution. 
\smallskip

\begin{algorithm}[H]
	\DontPrintSemicolon 
	\NoCaptionOfAlgo
	\algotitle{\textsc{LS-Greedy}}{fig:alg-lsgreedy.title}
		$S = \nameref{fig:alg-ls}(A, v, \epsilon/4)$ \;
		$T_1 = \nameref{fig:alg-greedy-enum}(S, v, \mathbf{c}_{S}, B)$ \;
		$T_2 = \nameref{fig:alg-greedy-enum}(A\mysetminus S, v, \mathbf{c}_{A\mysetminus S}, B)$ \;
		Let $T$ be the best solution among $T_1$ and $T_2$ \;
		\Return $T$
	\caption{\textsc{LS-Greedy}$(A, v, \mathbf{c}, B, \epsilon)$} \label{fig:alg-lsgreedy} 
\end{algorithm}\smallskip 

The first component of our algorithm is the local search algorithm of \cite{FeigeMV11}. 
By Lemma \ref{lem:LS} and the fact that $v$ is symmetric, both $S$ and $A\mysetminus S$ are $\left( 1+\frac{\epsilon}{4n^2}\right) $-approximate local optima. We can now quantify the crucial observation that $v$ is close to being non-decreasing within the approximate local optima $S$ and $A\mysetminus S$. Actually, we only need this property on the local optimum that contains the best feasible solution.

\begin{lemma}\label{lem:almost-monotone}
	%
	Let $S$ be a $\left( 1+\frac{\epsilon}{4n^2}\right) $-approximate local optimum and consider 
	\[X\in \argmax_{Z\in\{S, A\mysetminus S\}}\opt(Z, B) \,.\]
	Then, for every $T\subsetneq X$ and every $i \in X\mysetminus T$, we have $v(T\cup\{i\}) - v(T) > - \frac{\epsilon}{n} \opt(X, B)$. 
\end{lemma}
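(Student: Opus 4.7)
The plan is to combine three ingredients: approximate local optimality of $X$ (carried over from $S$ via symmetry), the marginal-decreasing form of submodularity, and a ``cheap-cover'' bound relating $v(X)$ to $\opt(X,B)$. Write $\delta=\epsilon/(4n^2)$ throughout. First, I would observe that symmetry of $v$ gives $v((A\mysetminus S)\cup\{i\}) = v(S\mysetminus\{i\})$ and $v((A\mysetminus S)\mysetminus\{i\}) = v(S\cup\{i\})$, so the $(1+\delta)$-approximate local optimality of $S$ guaranteed by Lemma~\ref{lem:LS} transfers immediately to $A\mysetminus S$; in particular, $X$ is a $(1+\delta)$-approximate local optimum. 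Fix now $T\subsetneq X$ and $i\in X\mysetminus T$. Since $T\subseteq X\mysetminus\{i\}$ while $T\cup\{i\}\subseteq X$, the marginal-decreasing form of submodularity yields $v(T\cup\{i\})-v(T)\geq v(X)-v(X\mysetminus\{i\})$, and the local-optimum bound $v(X\mysetminus\{i\})\leq(1+\delta)v(X)$ turns this into $v(T\cup\{i\})-v(T)\geq -\delta\,v(X)$.

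The main obstacle is the second step: bounding $v(X)$ in terms of $\opt(X,B)$, since $\mathbf{c}(X)$ can vastly exceed $B$ and $X$ itself need not be budget-feasible. My plan is to show $v(X)\leq 3n\,\opt(X,B)$ by exploiting both symmetry and the standing assumption that at most one item $\hat{j}$ has $c_{\hat{j}}>B$. Subadditivity, which holds for submodular $v$ with $v(\emptyset)=0$ on disjoint sets, gives $v(X)\leq \sum_{i\in X}v(\{i\})$. For every $i\in X$ with $c_i\leq B$, the singleton $\{i\}$ is a feasible subset of $X$, so $v(\{i\})\leq\opt(X,B)$, contributing at most $n\,\opt(X,B)$ in total. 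If the exceptional element $\hat{j}$ belongs to $X$, I would bound $v(\{\hat{j}\})$ via symmetry by writing $v(\{\hat{j}\})=v(A\mysetminus\{\hat{j}\})\leq(n-1)\,\opt(A,B)$ (using subadditivity again on $A\mysetminus\{\hat{j}\}$, whose elements all have cost $\leq B$). Finally, a standard splitting of an optimizer for $\opt(A,B)$ across $S$ and $A\mysetminus S$, together with subadditivity and the fact that $X$ attains the larger of $\opt(S,B)$ and $\opt(A\mysetminus S,B)$, yields $\opt(A,B)\leq\opt(S,B)+\opt(A\mysetminus S,B)\leq 2\,\opt(X,B)$. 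Combining these pieces produces $v(X)\leq 3n\,\opt(X,B)$.

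Plugging the cheap-cover bound into the inequality from the first step gives
\[
v(T\cup\{i\})-v(T)\;\geq\;-\delta\,v(X)\;\geq\;-\tfrac{3\epsilon}{4n}\opt(X,B)\;>\;-\tfrac{\epsilon}{n}\opt(X,B),
\]
which is the desired strict inequality in the generic case $\opt(X,B)>0$ (the degenerate case $\opt(X,B)=0$ forces $v(X)=0$ by the same cover bound, so the marginal is already nonnegative). The technical heart of the argument is the cheap-cover bound: it is the only place where the single-expensive-item assumption and the symmetry of $v$ must be combined, and it is what lets us absorb the potentially large value $v(X)$ into a constant factor times $\opt(X,B)$.
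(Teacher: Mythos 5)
Your proof is correct and follows essentially the same route as the paper: transfer approximate local optimality to $X$ via symmetry, apply the marginal-decreasing form of submodularity to reduce to $v(X)-v(X\mysetminus\{i\})$, and then bound $v(X)$ by $O(n)\cdot\opt(X,B)$ using subadditivity, symmetry, and the single-expensive-item assumption (where the paper factors through $\opt(X,\infty)\le 2n\,\opt(A,B)\le 4n\,\opt(X,B)$, you establish the slightly sharper $v(X)\le 3n\,\opt(X,B)$ directly). Your tighter constant also cleanly yields the strict inequality when $\opt(X,B)>0$, whereas the paper's chain only gives $\ge$ as written.
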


Before proving Lemma \ref{lem:almost-monotone}, we begin with a simple fact and a useful lemma. We note that Fact \ref{fact:simple-facts} and Lemma \ref{lem:opt_vs_kopt} below require only subadditivity. Submodularity is used later, within the proof of Lemma \ref{lem:almost-monotone}.

\begin{fact}\label{fact:simple-facts}
	For any $S\subseteq A$,  $\max\{\opt(S,  B), \allowbreak \opt(A\mysetminus S, B)\}\ge 0.5 \opt(A, B)$
	since $\opt(S, B)+\opt(A\mysetminus S, B)\ge \opt(A, B)$ by subadditivity.
\end{fact}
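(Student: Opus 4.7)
The plan is to unpack the one-line justification embedded in the statement. Let $O^*$ be an optimal solution of the global budgeted problem, so $v(O^*)=\opt(A,B)$ and $\sum_{i\in O^*}c_i\le B$. I would split $O^*$ along the partition $\{S,A\mysetminus S\}$ by setting $O_1=O^*\cap S$ and $O_2=O^*\cap(A\mysetminus S)$. Since each $O_j$ is a subset of $O^*$, its cost is at most $B$, and it is contained in the respective side of the partition, so each $O_j$ is feasible for the corresponding restricted instance. Hence $v(O_1)\le\opt(S,B)$ and $v(O_2)\le\opt(A\mysetminus S,B)$.

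Next I would invoke subadditivity of $v$ on the disjoint union $O^*=O_1\cup O_2$ to get $v(O^*)\le v(O_1)+v(O_2)$. Before citing this, I would briefly justify that $v$ is indeed subadditive in our setting: $v$ is submodular, non-negative, and satisfies $v(\emptyset)=0$ (as assumed at the start of Section~\ref{sec:prels}), and submodularity together with $v(O_1\cap O_2)=v(\emptyset)=0$ gives $v(O_1\cup O_2)+v(\emptyset)\le v(O_1)+v(O_2)$, which is exactly $v(O^*)\le v(O_1)+v(O_2)$. Combining with the previous paragraph yields $\opt(A,B)\le\opt(S,B)+\opt(A\mysetminus S,B)$.

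Finally, taking the maximum of two non-negative reals is at least half their sum, so
\[
\max\{\opt(S,B),\opt(A\mysetminus S,B)\}\ \ge\ \tfrac{1}{2}\bigl(\opt(S,B)+\opt(A\mysetminus S,B)\bigr)\ \ge\ \tfrac{1}{2}\opt(A,B),
\]
which is the claimed inequality. There is no real obstacle here; the only point worth being careful about is invoking subadditivity rather than symmetry or monotonicity, since the fact is stated to rely solely on subadditivity and must apply even before any structural property of the approximate local optimum is used.
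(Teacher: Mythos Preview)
Your proposal is correct and is precisely the unpacking of the paper's one-line justification: split an optimal $O^*$ across $S$ and $A\mysetminus S$, use subadditivity on the two feasible pieces, and take the max. The paper states the fact with its proof embedded (``since $\opt(S,B)+\opt(A\mysetminus S,B)\ge\opt(A,B)$ by subadditivity'') and gives no further argument, so you are doing exactly what it does, just spelled out.
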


\begin{lemma}\label{lem:opt_vs_kopt}
	For any $S\subseteq A$, $\opt(S, \infty)\le 2n \cdot \opt(A, B)$. 
\end{lemma}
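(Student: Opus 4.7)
The plan is to bound $\opt(S,\infty)$ by a simple subadditivity argument on a maximizer, together with the standing assumption (from the end of the Preliminaries) that at most one item has cost exceeding $B$, combined with symmetry to neutralize that item's contribution.

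Concretely, let $S^{*}\subseteq S$ be a maximizer of $v$, so that $v(S^{*})=\opt(S,\infty)$. Since $v$ is submodular and $v(\emptyset)=0$, $v$ is subadditive, and hence
\[
v(S^{*}) \;\le\; \sum_{i\in S^{*}} v(i) \;=\; v(i_{0}) \;+\; \sum_{i\in S^{*}\mysetminus\{i_{0}\}} v(i),
\]
where $i_{0}$ denotes the (at most one) element of $A$ with $c_{i_{0}}>B$, if it exists and belongs to $S^{*}$; otherwise the $v(i_{0})$ term is dropped. For every $i\in S^{*}\mysetminus\{i_{0}\}$ we have $c_{i}\le B$, so the singleton $\{i\}$ is budget-feasible and $v(i)\le \opt(A,B)$. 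Thus the second sum is bounded by $(n-1)\opt(A,B)$. If $i_{0}$ does not appear in $S^{*}$ (or does not exist), this already yields $v(S^{*})\le n\cdot \opt(A,B)\le 2n\cdot\opt(A,B)$.

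The only nontrivial step is bounding $v(i_{0})$ when $c_{i_{0}}>B$, since $\{i_{0}\}$ is not feasible. Here I would invoke symmetry: $v(i_{0})=v(A\mysetminus\{i_{0}\})$. Applying subadditivity again to $A\mysetminus\{i_{0}\}$, whose every element is now individually budget-feasible by the uniqueness of $i_{0}$,
\[
v(i_{0}) \;=\; v(A\mysetminus\{i_{0}\}) \;\le\; \sum_{i\in A\mysetminus\{i_{0}\}} v(i) \;\le\; (n-1)\cdot \opt(A,B).
\]
Combining the two bounds gives $v(S^{*})\le 2(n-1)\cdot\opt(A,B)\le 2n\cdot\opt(A,B)$, as required.

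The main (only) obstacle is the single element whose cost might exceed the budget, which prevents the direct singleton bound from being applied uniformly. Symmetry is exactly the tool that resolves it, by moving the estimate on $v(i_{0})$ to its complement where all items are individually feasible; this is also why the bound is $2n$ rather than $n$.
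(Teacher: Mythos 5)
Your proof is correct and follows essentially the same approach as the paper's: bound $v(S^*)$ by the sum of singleton values via subadditivity, bound each feasible singleton by $\opt(A,B)$, and use symmetry ($v(i_0)=v(A\mysetminus\{i_0\})$) plus another subadditivity step to handle the one possibly over-budget item. The paper presents this as three explicit cases whereas you unify them, but the argument is identical.
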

\begin{proof}
	Recall that
	$|\{i\in A\ |\ c_i> B\}|\le 1$. Let $S^*\subseteq S$ be such that $v(S^*) = \opt(S, \infty)$. 
	By subadditivity
	we have
	$\opt(S, \infty) = v(S^*) \le \sum_{i\in S^*} v(i)$. Consider three cases. \medskip
	
	\noindent If $\{i\in A\ |\ c_i> B\}=\emptyset$, then by the fact that every singleton  is a feasible solution we have
	$\sum_{i\in S^*} v(i) \le n\cdot \max_{i\in A}v(i) \le n \cdot \opt(A, B)$.\medskip
	
	\noindent If $\{i\in A\ |\ c_i> B\}=\{x\}\nsubseteq S^*$, then every singleton in $A\mysetminus\{x\}$ is a feasible solution, and like before we have
	$\sum_{i\in S^*} v(i) \le (n-1)\cdot \max_{i\in A\mysetminus\{x\}}v(i) \le (n-1) \cdot \opt(A, B)$.\medskip
	
	\noindent If $\{i\in A\ |\ c_i> B\}=\{x\}\subseteq S^*$, then we need to bound $v(x)$. Since $v$ is symmetric we have $v(x)=v(A\mysetminus \{x\})\le \sum_{i\in A\mysetminus \{x\}} v(i) \le  (n-1) \cdot \max_{i\in A\mysetminus \{x\}}v(i)$. Therefore, 
	by using again that every singleton in $A\mysetminus\{x\}$ is a feasible solution, we have 
	$\sum_{i\in S^*} v(i) \le v(x) + (n-1)\cdot \max_{i\in A\mysetminus\{x\}}v(i)  \le (2n-2)\cdot \max_{i\in A\mysetminus\{x\}}v(i) \le 2n \cdot \opt(A, B)$.
\end{proof}

\begin{proofof}{Proof of Lemma \ref{lem:almost-monotone}}
	By Fact \ref{fact:simple-facts} we have $\opt(X, B) \ge 0.5 \opt(A, B)$. Let $T\subseteq X\mysetminus\{i\}$ for some $i\in X$. By submodularity we have $v(T\cup\{i\}) - v(T) \ge v(X) - v(X\mysetminus\{i\})$. Since $S$ is a $\left( 1+\frac{\epsilon}{4n^2}\right) $-approximate local optimum and $v$ is symmetric, $X$ is also a $\left( 1+\frac{\epsilon}{4n^2}\right)$-approximate local optimum. As a result, $v(X\mysetminus\{i\}) \le \left( 1+\frac{\epsilon}{4n^2}\right)  v(X)$ and thus $v(X) - v(X\mysetminus\{i\}) \ge - \frac{\epsilon}{4n^2} v(X) \ge - \frac{\epsilon}{4n^2} \opt(X, \infty) \ge - \frac{\epsilon}{2n} \opt(A, B)\ge - \frac{\epsilon}{n} \opt(X, B)$, where the second to last inequality follows from Lemma \ref{lem:opt_vs_kopt}.
	\qed
\end{proofof}\bigskip

The second component of \nameref{fig:alg-lsgreedy} is an appropriate modification of the greedy algorithm of \cite{Sviridenko04} for non-monotone submodular functions. It first  enumerates all solutions of size at most 3. Then, starting from each 3-set, it builds a greedy solution, and it outputs the best among these $\Theta(n^3)$ solutions. Here this idea is adjusted for non-monotone functions. \medskip

\begin{algorithm}[H]
	\DontPrintSemicolon 
	\NoCaptionOfAlgo
	\algotitle{\textsc{Greedy-Enum-SM}}{fig:alg-greedy-enum.title}
	{\small
	Let $S_1$ be the best feasible solution of cardinality at most 3 (by enumerating all such solutions) \;
	$S_2=\emptyset$ \;
	\For{every $U\subseteq A$ with $|U|=3$}{
	$T=U,\ t=1,\ A^0=A\mysetminus U$ \;
	\While{$A^{t-1} \neq \emptyset$}{
	Find $\theta_t = \displaystyle{\max_{i\in A^{t-1}} \frac{v(T\cup\{i\})-v(T)}{c_i}}$,  and let $i_t$ be an  element of $A^{t-1}$ that attains $\theta_t$ \;
	\If{$\theta_t\ge 0$\ \ and\ \ $\sum_{i\in T\cup\{i_t\}}c_i \le B$}{$T = T\cup\{i_t\}$}
	$A^t = A^{t-1}\mysetminus\{i_t\}$ \;
	$t=t+1$ \;
	}
	\If{$v(T)>v(S_2)$}{$S_2=T$}
	}
	Let $S$ be the best solution among $S_1$ and $S_2$ \;
	\Return $S$ 
	}
	\caption{\textsc{Greedy-Enum-SM}$(A, v, \mathbf{c}, B)$} \label{fig:alg-greedy-enum} 
\end{algorithm}\medskip 

By Fact \ref{fact:simple-facts}, at least one of $S$ and $A\mysetminus S$ contains a feasible solution of value at least $0.5 \opt(A, B)$.
Lemma \ref{lem:almost-monotone} guarantees that in this set, $v$ is very close to a non-decreasing submodular function. This is sufficient  for \nameref{fig:alg-greedy-enum} to perform almost as well as if $v$ was  non-decreasing.

%

\begin{theorem} \label{thm:alg_ssm}
For any $\epsilon >0$, algorithm \nameref{fig:alg-lsgreedy} achieves a $\left( \frac{2e}{e-1}+\epsilon\right)$-approximation. 
\end{theorem}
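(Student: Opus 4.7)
The plan is to show that at least one of the two runs of \nameref{fig:alg-greedy-enum} inside \nameref{fig:alg-lsgreedy} recovers a $\bigl(1-\tfrac{1}{e}-O(\epsilon)\bigr)$-fraction of $\opt(X, B)$ for the better side $X\in\{S, A\mysetminus S\}$, which by Fact \ref{fact:simple-facts} is at least $\opt(A,B)/2$; this matches the target ratio $\frac{2e}{e-1}$. By Lemma \ref{lem:LS} applied with parameter $\epsilon/4$, both $S$ and $A\mysetminus S$ are $(1+\epsilon/(4n^2))$-approximate local optima. Letting $X$ be the side singled out by Lemma \ref{lem:almost-monotone}, every singleton marginal \emph{inside} $X$ satisfies $v(T\cup\{i\})-v(T) \ge -\delta$ with $\delta := (\epsilon/n)\opt(X, B)$. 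In other words, $v$ restricted to subsets of $X$ is ``almost non-decreasing,'' and the whole analysis reduces to running a Sviridenko-style partial-enumeration greedy on $X$.

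Next, I would adapt Sviridenko's \cite{Sviridenko04} analysis to the run of \nameref{fig:alg-greedy-enum} on $X$. Let $O\subseteq X$ be optimal with $v(O)=\opt(X, B)$. If $|O|\le 3$, the enumeration step $S_1$ already returns $O$, so assume $|O|\ge 4$. I would focus on the iteration of the outer loop in which the starting triple $U$ equals the distinguished $U^*\subseteq O$ used in Sviridenko's argument (the first three elements that would be chosen by density-greedy run on $O$ alone). From $U^*$ the density-greedy produces $U^*=T_0\subseteq T_1\subseteq\cdots\subseteq T_L$. The crux of the monotone analysis is the coverage inequality
\[ v(O) - v(T_k) \;\le\; \sum_{i\in O\mysetminus T_k}\bigl(v(T_k\cup\{i\})-v(T_k)\bigr), \]
which together with the density rule drives a telescoping recurrence yielding $v(T_L)\ge(1-1/e)v(O)$. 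This inequality fails for general non-monotone $v$, but on $X$ it holds up to an additive slack: applying Theorem \ref{thm:alt_SM} with $S=T_k$ and $T=O$, each correction term $v(T_k\cup O)-v(T_k\cup O\mysetminus\{i\})$ (for $i\in T_k\mysetminus O$) is the marginal of a singleton within $X$, hence bounded below by $-\delta$ via Lemma \ref{lem:almost-monotone}. Summing, the coverage inequality holds with an extra additive $|T_k\mysetminus O|\cdot\delta\le n\delta = \epsilon\cdot v(O)$ on the right.

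With this modified coverage bound, I would carry the standard density step through: at each greedy iteration, $v(T_{k+1})-v(T_k)\ge(c_{i_{k+1}}/B)\bigl(v(O)-v(T_k)-\epsilon v(O)\bigr)$, and solving the recurrence in the usual way yields $v(T_L)\ge (1-1/e)v(O)-O(\epsilon)v(O)$; this relies on the enumeration of $U^*$ to handle the fractional-vs-integral gap of the final greedy pick, exactly as in \cite{Sviridenko04}. Combined with Fact \ref{fact:simple-facts}, this gives
\[ v(T)\;\ge\;\tfrac{e-1}{2e}\,\opt(A,B)\;-\;O(\epsilon)\,\opt(A,B), \]
and reparameterizing the $\epsilon$ fed into \nameref{fig:alg-lsgreedy} by an absolute constant yields the claimed $\bigl(\tfrac{2e}{e-1}+\epsilon\bigr)$-approximation. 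The main obstacle is the error bookkeeping: Sviridenko's analysis is delicate (the 3-set enumeration is precisely what makes the leading coefficient attain $1-1/e$), and I must verify that the $\delta$-errors accumulated across up to $n$ greedy steps together with the one-shot $\epsilon v(O)$ slack in the coverage inequality sum to only $O(\epsilon)\opt(X,B)$. This is exactly what the $1/n$ scaling of $\delta$ in Lemma \ref{lem:almost-monotone}---and ultimately the $\epsilon/(4n^2)$ local-optimality parameter passed to \nameref{fig:alg-ls}---is tailored to guarantee.
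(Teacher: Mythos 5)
Your proposal is correct and follows essentially the same route as the paper's proof: reduce to the better side $X$ via Fact~\ref{fact:simple-facts}, invoke Lemma~\ref{lem:almost-monotone} together with Theorem~\ref{thm:alt_SM} to show Sviridenko's coverage inequality holds on $X$ with a non-accumulating additive slack of $\epsilon\,\opt(X,B)$, and then run the standard partial-enumeration greedy analysis (a Wolsey-style telescoping recurrence plus the marginal bound provided by the enumerated triple). The paper additionally spells out the easy subcase in which the greedy halts because remaining marginals over the optimum become negative, and it uses the marginal-value (not density) order to pick the distinguished triple $Y\subseteq S^*$, but neither detail changes the substance of your argument.
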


The proof is  deferred to Appendix \ref{app:alg_ssm}. 

Theorem \ref{thm:alg_ssm} suggests that a straightforward composition of two well known greedy algorithms achieves a good approximation for any symmetric submodular objective. We believe this is of independent interest and could be useful for other problems involving submodular optimization. 
From a mechanism design perspective, however, algorithm \nameref{fig:alg-lsgreedy} fails to be monotone and thus it cannot be used directly in the subsequent sections. In the next two sections, we remedy the problem by removing the enumeration part of the algorithm.  

\section{A First Take on Mechanism Design}
\label{sec:exp_ssm}
Utilizing the algorithmic approach of Section \ref{sec:alg_ssm} to get truthful mechanisms is not straightforward. One of the reasons is that \nameref{fig:alg-lsgreedy} is not monotone. We note that the algorithm \nameref{fig:alg-greedy-enum} without the enumeration part \textit{is}  monotone even for general objectives, but, to further complicate things, it is not guaranteed to be budget-feasible or have a good performance anymore.
Instead of computing approximate local optima 
like in Section \ref{sec:alg_ssm}, in this section we bypass most 
issues by computing exact local optima. The highlights of this simplified approach are polynomial mechanisms for unweighted cut functions with greatly improved guarantees. 

The price we have to pay, however, is that in general, finding exact local optima is not guaranteed to run in polynomial time \cite{SchafferY91}. Still, these mechanisms deepen our understanding of the problem. As mentioned in the Introduction, the problem seems to remain hard even when the running time is not an issue, and many existing mechanisms for various classes of functions are not polynomial. In particular, there are no better known mechanisms---even running in exponential time---for symmetric submodular objectives. We are going to deal further with the issue of running time in Section \ref{sec:poly_ssm}.

Below we give a randomized mechanism that reduces the known factor of 768 down to 10, as well as the first deterministic $O(1)$-approximation  mechanism for symmetric submodular objectives. In both mechanisms, local search produces a local maximum $S$ for the unbudgeted problem and then the budgeted problem is solved optimally on both $S$ and $A\mysetminus S$. As shown in Lemma \ref{lem:local-opt-monotone}, $v$ is non-decreasing on both $S$ and $A\mysetminus S$. Thus, running the mechanism
\nameref{fig:alg-1a} or \nameref{fig:alg-1b-var} of \cite{ChenGL11}, as described in Section \ref{sec:prels}, on $T\in \argmax_{X\in \{S, A\mysetminus S\}}\opt(X, B)$, directly implies a good solution. Since the resulting randomized and deterministic mechanisms are very similar, we state them together for succinctness. \smallskip


\begin{algorithm}[H]
	\DontPrintSemicolon 
	\NoCaptionOfAlgo
	\algotitle{\textsc{Rand-Mech-SymSM}}{fig:random-mech-symsm.title}
		$S = \nameref{fig:alg-ls}(A, v, 0)$ ~~ //find an exact local optimum \;
		\If{$\opt(S,  B)\ge \opt(A\mysetminus S,  B)$}{\Return \nameref{fig:alg-1a}$(S, v, \mathbf{c}_{S}, B)$ \quad(\textit{resp.}~\nameref{fig:alg-1b-var}$(S, v, \mathbf{c}_{S}, B)$)}
		\Else{\Return \nameref{fig:alg-1a}$(A\mysetminus S, v, \mathbf{c}_{A\mysetminus S}, B)$ \quad(\textit{resp.}~\nameref{fig:alg-1b-var}$(A\mysetminus S, v, \mathbf{c}_{A\mysetminus S}, B)$)}
	\caption{\textsc{Rand-Mech-SymSM}$(A, v, \mathbf{c}, B)$ \quad (\textit{resp.}~\textsc{Det-Mech-SymSM}$(A, v, \mathbf{c}, B)$)} \label{fig:random-mech-symsm} 
\end{algorithm}\smallskip 

The next simple lemma is crucial for the performance of both mechanisms for arbitrary submodular functions, and it shows how local search helps us exploit known results for non-decreasing submodular functions. Its proof is similar in flavor to the proof of Lemma \ref{lem:almost-monotone}. 

\begin{lemma} \label{lem:local-opt-monotone}
	Let $A$ be a set  and $v$ be an arbitrary submodular function defined on $2^A$. If $S$ is a local maximum of $v$, then $v$ is submodular and non-decreasing when restricted on $2^S$.
\end{lemma}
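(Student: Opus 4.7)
The plan is to split the statement into its two parts and handle them separately. Submodularity on $2^S$ is immediate and requires essentially no work: the defining inequality $v(T_1\cup\{i\})-v(T_1) \ge v(T_2\cup\{i\})-v(T_2)$ for $T_1\subseteq T_2 \subseteq S$ and $i\notin T_2$ is just a special case of the same inequality on $2^A$, so restricting $v$ to $2^S$ preserves submodularity. I would dispatch this in one sentence.

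The content of the lemma is showing that $v$ is non-decreasing on $2^S$. For this I would fix any $T\subseteq S$ and any $i\in S\mysetminus T$, and aim to prove $v(T\cup\{i\}) - v(T) \ge 0$. Since $T\subseteq S\mysetminus\{i\}$, I can apply the submodularity inequality with the chain $T \subseteq S\mysetminus\{i\}$ and the added element $i$, yielding
\[
v(T\cup\{i\}) - v(T) \;\ge\; v(S) - v(S\mysetminus\{i\}).
\]
Now the local maximality of $S$ enters: by definition, $v(S) \ge v(S\mysetminus\{i\})$ for every $i\in A$, and in particular for $i\in S$. So the right-hand side above is non-negative, which gives $v(T\cup\{i\}) \ge v(T)$ for every single-element extension inside $S$.

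To conclude that $v(T_1) \le v(T_2)$ for arbitrary $T_1\subseteq T_2\subseteq S$, I would simply induct on $|T_2\mysetminus T_1|$, at each step peeling off one element of $T_2\mysetminus T_1$ and applying the single-element monotonicity statement just proved. This finishes the argument.

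There is no real obstacle here; the only mildly delicate point is that local maximality alone does not give monotonicity of $v$ on $2^S$ directly---one needs submodularity to transfer the non-negativity of the marginal at $S\mysetminus\{i\}$ to the marginal at arbitrary $T\subseteq S\mysetminus\{i\}$. This is exactly the same mechanism as in Lemma \ref{lem:almost-monotone}, only cleaner because we are in the exact (rather than approximate) local optimum regime and the ``$-\tfrac{\epsilon}{n}\opt(X,B)$'' slack collapses to $0$.
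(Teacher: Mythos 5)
Your proof is correct, and in fact slightly cleaner than the paper's. The paper argues by contradiction: it assumes $v(T)>v(T')$ for some $T\subsetneq T'\subseteq S$, invokes the Nemhauser characterization (Theorem~\ref{thm:alt_SM}) to deduce that some element $i_\ell\in T'\mysetminus T$ has a strictly negative marginal $v(T')-v(T'\mysetminus\{i_\ell\})<0$, and then pushes this up via submodularity to $v(S)-v(S\mysetminus\{i_\ell\})<0$, contradicting local maximality. You instead prove the statement directly: the single inequality $v(T\cup\{i\})-v(T)\ge v(S)-v(S\mysetminus\{i\})\ge 0$ for $T\subseteq S\mysetminus\{i\}$ follows from the definition of submodularity plus local maximality of $S$, and then a routine induction (or telescoping) on $|T_2\mysetminus T_1|$ gives monotonicity. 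The two arguments hinge on the same mechanism---using submodularity to transfer non-negativity of the marginal at $S$ down to arbitrary subsets---but yours avoids the detour through the Nemhauser inequality and the extraction of a ``bad'' element, which makes it more elementary. As you correctly observe, this is the exact-local-optimum specialization of the argument in Lemma~\ref{lem:almost-monotone}, with the $\epsilon$-slack set to zero.
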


\begin{proof}
	The fact that $v$ is submodular when restricted on $2^S$ is trivial.
	Suppose now that the statement is not true and that $v$ is not non-decreasing on $2^S$. 
	That is, there exist $T, T' \subseteq S$ such that $T\subsetneq T'$ and $v(T)>v(T')$.
	Let $T'\mysetminus T = \{i_1, \ldots, i_r\}$. 
	
	By Theorem \ref{thm:alt_SM} we have 
	\[v(T) \le v(T') - \sum_{j\in[r]}(v(T') -v(T'\mysetminus \{i_j\})) \,, \]
	and therefore
	\[\sum_{j\in[r]}(v(T') -v(T'\mysetminus \{i_j\})) \le v(T')-v(T) <0 \,. \]
	We conclude that there is some $\ell\in[r]$ such that $v(T') -v(T'\mysetminus \{i_\ell\}) < 0$. 
	Then, by submodularity and the fact that $T'\mysetminus \{i_\ell\} \subseteq S\mysetminus \{i_\ell\}$, we get
	\[v(S) - v(S\mysetminus \{i_\ell\}) \le v(T') -v(T'\mysetminus \{i_\ell\}) < 0  \,.\]
	But then $v(S\mysetminus \{i_\ell\}) > v(S)$, which contradicts the fact that $S$ is a local maximum of $v$. So, it must be the case that $v$ is non-decreasing on the subsets of $S$.
\end{proof}

Since $v$ is symmetric, if $S$ is a local optimum, so is $A\mysetminus S$. Lemma \ref{lem:local-opt-monotone} suggests that we can use the mechanism \nameref{fig:alg-1a} (resp. \nameref{fig:alg-1b-var}) on $S$ and $A\mysetminus S$, to get the following implications. 

%
%

\begin{theorem} \label{thm:exp_ssm_det}
The mechanism \nameref{fig:random-mech-symsm} is universally truthful, individually rational, budget-feasible, and has approximation ratio $10$. The mechanism \textsc{Det-Mech-SymSM}  is deterministic, truthful, individually rational, budget-feasible, and has approximation ratio $6+2\sqrt{6}$. 
\end{theorem}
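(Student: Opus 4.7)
The mechanism has two stages: a cost-independent local search that produces a set $S$, and a call to one of the submodular sub-mechanisms of Theorem \ref{thm:mechSM} on whichever of $S$ or $A\mysetminus S$ has the larger budgeted optimum. I would derive every claimed property from three ingredients: (i) Lemma \ref{lem:local-opt-monotone} (applied to $S$ and, by symmetry of $v$, to $A\mysetminus S$), which turns $v$ into a non-decreasing submodular function on each of the two relevant sublattices; (ii) Fact \ref{fact:simple-facts}, which guarantees that one of $S$ and $A\mysetminus S$ already contains a solution of value at least $\frac{1}{2}\opt(A,B)$; and (iii) the guarantees of \nameref{fig:alg-1a} and \nameref{fig:alg-1b-var} from Theorem \ref{thm:mechSM}.

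I would first establish monotonicity and the accompanying structural properties. The key observation is that \nameref{fig:alg-ls} only queries $v$, so $S$ is fixed before any bid is inspected. Assuming agent $i$ is selected by the outer mechanism, $i$ must lie in the set $T\in\{S, A\mysetminus S\}$ on which the sub-mechanism was invoked. Lowering $b_i$ can only (weakly) increase $\opt(T,B)$ while leaving $\opt(A\mysetminus T,B)$ unchanged, so the outer branch remains $T$, and monotonicity of \nameref{fig:alg-1a} (resp.\ \nameref{fig:alg-1b-var}) preserves the selection of $i$. Truthfulness and individual rationality then follow from Myerson's lemma. For budget feasibility, the Myerson threshold of each selected agent under the outer mechanism is at most its threshold under the sub-mechanism---because both crossing the sub-mechanism's threshold and crossing the bid at which the outer branch flips unselect $i$---and summing over the selected agents yields a total payment bounded by $B$.

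Next I would bound the approximation ratio. Fix $T\in\argmax_{X\in\{S, A\mysetminus S\}}\opt(X,B)$, so that Fact \ref{fact:simple-facts} gives $\opt(T,B)\ge \frac{1}{2}\opt(A,B)$. Lemma \ref{lem:local-opt-monotone} makes $v$ non-decreasing submodular on $2^T$, so Theorem \ref{thm:mechSM} applies: \nameref{fig:alg-1a} returns a set of expected value at least $\frac{1}{5}\opt(T,B)\ge \frac{1}{10}\opt(A,B)$, while \nameref{fig:alg-1b-var} returns a set of value at least $\frac{1}{3+\sqrt{6}}\opt(T,B)\ge \frac{1}{6+2\sqrt{6}}\opt(A,B)$, giving the announced approximation ratios $10$ and $6+2\sqrt{6}$ respectively. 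Universal truthfulness of the randomized variant is preserved because the only randomness in the outer mechanism is the one coin flip inside \nameref{fig:alg-1a}, and each realization gives a deterministic truthful mechanism by the argument above.

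The main obstacle I expect is reconciling the cost-dependent outer branching with monotonicity and budget feasibility. Both collapse to the same observation about the monotonicity of $\opt(\cdot,B)$ in a single agent's bid; once this is pinned down, the rest of the proof is a clean composition of Lemma \ref{lem:local-opt-monotone}, Fact \ref{fact:simple-facts}, and Theorem \ref{thm:mechSM}.
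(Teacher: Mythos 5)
Your proof matches the paper's argument in every essential step: both rely on the cost-independence of local search, the monotonicity of the sub-mechanisms from \cite{ChenGL11}, Lemma \ref{lem:local-opt-monotone} to restore monotonicity on each side of the partition, and Fact \ref{fact:simple-facts} plus Theorem \ref{thm:mechSM} to bound the approximation by a factor of two times the sub-mechanism's ratio. Your budget-feasibility argument---that the outer mechanism's Myerson threshold is at most the sub-mechanism's threshold because the outer branching condition gives a second, independent way to unselect an agent when its bid rises---is a slightly more explicit justification of the claim that the paper simply asserts follows from \cite{ChenGL11}; this is a correct refinement, not a different route.
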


\begin{proof} 
	The fact that both mechanism \nameref{fig:random-mech-symsm} and \textsc{Det-Mech-SymSM} are  budget-feasible follows from the budget-feasibility of \nameref{fig:alg-1a} and \nameref{fig:alg-1b-var} respectively, established in \cite{ChenGL11}.
	
	For truthfulness and individual rationality, it suffices to show that  the allocation rule is monotone. Let us look first at \textsc{Det-Mech-SymSM}.
	Consider an agent $i$ with true cost $c_i$ in an instance $I$ where $i$ is included in the winning set. 
	Note first that the local search step is not affected by the costs, hence no player can influence the local optimum.
	Suppose now that it was the case that $\opt(S, B) \geq \opt(A\mysetminus S, B)$, hence $i\in S$.
	If player $i$ now declares a lower cost, then the optimal solution within $S$ can only get better, hence 
	the mechanism will
	run \nameref{fig:alg-1b-var} on $S$ as before. Since \nameref{fig:alg-1b-var} is monotone, player $i$ will again be selected in the solution.
	We conclude that the outcome rule is monotone and 
	\textsc{Det-Mech-SymSM} is truthful. 
	
	To prove that the randomized mechanism \nameref{fig:random-mech-symsm} is universally truthful
	we use similar arguments. 
	We fix the random bits of the mechanism and we consider a winning agent $i$ like before.
	Again, no player can influence the outcome of local search.
	Suppose it is the case that $\opt(S, B) \geq \opt(A\mysetminus S, B)$, hence $i\in S$.
	If  $i$  declares a lower cost, then the optimal solution within $S$ improves, and \nameref{fig:alg-1a} will still run  on $S$. Since \nameref{fig:alg-1a} is universally truthful, it is monotone given the random bits, and player $i$ will again be a winner.
	We conclude that 
	\nameref{fig:random-mech-symsm} is universally truthful.

	To argue now about the approximation ratio, suppose that we are in the case that $\opt(S, B) \geq \allowbreak \opt(A\mysetminus S, B)$ (the other case being symmetric). 
	We  know by Fact \ref{fact:simple-facts} that $\opt(S, B) \geq 0.5\cdot \opt(A, B)$. Hence, since we run either \textsc{Mech-SM} or \textsc{Rand-Mech-SM} on $S$, we will get twice their approximation ratio. This implies a ratio of $10$ for \nameref{fig:random-mech-symsm} and a ratio of $6+2\sqrt{6}$ for \textsc{Det-Mech-SymSM}.
\end{proof}

Lower bounds on the approximability have been obtained by Chen et al.~\cite{ChenGL11} for additive valuations.
Since additive functions are not 
symmetric, these lower bounds do not directly apply here. However, it is not hard to 
construct symmetric submodular functions that give the exact same bounds. 
\begin{lemma}
	\label{fact:lowerbounds}
	Independent of complexity assumptions, there is no deterministic (resp. randomized) truthful, 
	budget feasible mechanism that can achieve an approximation ratio better than $1+\sqrt{2}$ (resp. $2$).
\end{lemma}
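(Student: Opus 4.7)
The plan is to reduce the existing lower bounds of Chen et al.~\cite{ChenGL11} for additive valuations to the symmetric submodular setting via a small cut-function embedding. The goal is to identify a family of symmetric submodular instances on which any truthful, individually rational, budget-feasible mechanism is forced to behave exactly as a mechanism on the additive instances used in \cite{ChenGL11}, so that their information-theoretic hardness carries over unchanged.

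For the embedding I would use the three-element ground set $A=\{1,2,3\}$ and let $v$ be the cut function of the graph on $A$ with edges $\{1,3\}$ and $\{2,3\}$ of weights $v_1$ and $v_2$ respectively. Then $v$ is symmetric submodular (it is a cut function), and its restriction to subsets of $\{1,2\}$ coincides with the additive valuation with values $v_1,v_2$: one checks that $v(\emptyset)=0$, $v(\{1\})=v_1$, $v(\{2\})=v_2$, and $v(\{1,2\})=v_1+v_2$. Setting the true cost of agent $3$ to $c_3=B+1$ while leaving $c_1,c_2$ as the strategic variables is consistent with the single-expensive-item assumption. Moreover, individual rationality forces $p_3\ge c_3>B$ whenever agent $3$ is selected, which together with budget feasibility is a contradiction; hence no admissible mechanism ever selects agent $3$ on truthful reports, and must therefore choose among subsets of $\{1,2\}$.

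Since $v$ is additive on $\{1,2\}$, the optimum of the embedded instance equals the optimum of the additive two-agent instance with values $(v_1,v_2)$ and costs $(c_1,c_2)$. Because agent $3$ never receives a payment and never appears in the selected set on truthful reports, the monotonicity, threshold-payment, and budget-feasibility constraints imposed on $(p_1,p_2)$ by our symmetric submodular mechanism are exactly the constraints faced by a mechanism for the two-agent additive instance. Thus, plugging the hard configurations of \cite{ChenGL11} into this embedding (namely $v_1=1$, $v_2=x$ for an appropriately chosen $x$, with the same cost profiles varied as in their proof) transfers the deterministic $(1+\sqrt{2})$-hardness and randomized $2$-hardness verbatim.

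The only point that requires care is verifying the faithfulness of the reduction: one must check that the presence of the dummy agent $3$ does not weaken the constraints on the two active agents. This is immediate once one observes that, on truthful reports, agent $3$ contributes nothing to the payment side of the budget constraint and nothing to the selected set, so Myerson's lemma applied to agents $1$ and $2$ inside the symmetric submodular mechanism coincides with Myerson's lemma for the additive two-agent instance. No new difficulty beyond what already appears in \cite{ChenGL11} is expected; identifying the right cut-function embedding is the entire content of the argument.
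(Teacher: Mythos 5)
Your proposal is essentially the same as the paper's own proof: both construct a weighted star graph with a dummy center vertex whose cost exceeds the budget, so that the cut function restricted to the leaves coincides with the additive valuation, and the lower bounds of Chen et al.\ transfer directly. The paper states this embedding for a general Knapsack instance while you specialize to the two-agent case used in the hard configurations, and you spell out a bit more explicitly why the center vertex is never selected (individual rationality plus budget feasibility), but the reduction and its correctness argument are the same.
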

\begin{proof} 
	The lower bounds of \cite{ChenGL11} are both for additive objectives (Knapsack). It suffices to show that given an instance $(A, v, \mathbf{c}, B)$ of Knapsack, we can construct an equivalent $(A', v', \mathbf{c}', B')$ instance of Budgeted Max Weighted Cut. This is straightforward. Consider a graph $G$ with vertex set $A\cup x$ and edge set $\{(i, x)\ |\ i\in A\}$.  For $i\in A$, vertex $i$ has cost $c'_i = c_i$, while vertex $x$ has cost $c'_x = B+1$. Edge $(i, x)$ has weight $v(i)$. Finally, for $S\subseteq A\cup x$,  $v'(S)$ is equal to the weight of the cut defined by $S$.
	
	This correspondence between items and vertices creates a natural correspondence between solutions.  It is clear that each feasible solution of the Knapsack instance  essentially defines  a feasible solution  to the Budgeted Max Weighted Cut instance of the same value and vice versa. In particular $\opt(A, v, \mathbf{c}, B) = \opt(A', v', \mathbf{c}', B)$. We conclude that any lower bound for Knapsack gives a lower bound for Budgeted Max Weighted Cut.
\end{proof}

Clearly, both mechanisms presented in this section require superpolynomial time in general (due to their first two lines), unless $P=NP$.
In both cases, instead of $\opt(\cdot, B)$ we could use the optimal solution of a fractional relaxation of the problem, at the expense of somewhat worse guarantees.
This does not completely resolve the problem, although this way local search becomes the sole bottleneck.
For certain objectives, however, we can achieve similar guarantees in polynomial time. Unweighted cut functions are the most prominent such example, and it is the focus of the next subsection.

\subsection{Polynomial Time Mechanisms for Unweighted Cut Functions}
\label{subsec:cut}

We begin with the definition of the problem when $v$ is a cut function: 
\smallskip

\noindent\emph{Budgeted Max Weighted Cut.} Given a complete graph $G$ with vertex set $V(G)=[n]$, non-negative weights $w_{ij}$ on the edges, non-negative costs $c_{i}$ on the nodes, and a positive budget $B$, find $X\subseteq [n]$ so that $v(X)=\sum_{i \in X}\sum_{j\in [n]\mysetminus X} w_{ij}$ is maximized subject to $\sum_{j\in X} c_j \le B$.\smallskip

For convenience, we assume the problem is defined on a complete graph as we can use zero weights to model any graph. In this subsection, we  focus on the unweighted version, where all weights are equal to either $0$ or $1$. We call this special case \emph{Budgeted Max  Cut}. The weighted version is considered in Subsection \ref{subsec:weight_cut}.

The fact that local search takes polynomial time to find an exact local optimum for the unweighted version \cite{KT06} does not suffice to make
\nameref{fig:random-mech-symsm} a polynomial time mechanism, since one still needs to compute $\opt(S, B)$ and $\opt(A\mysetminus S, B)$.
However, a small modification so that \nameref{fig:alg-1a}$(S, B)$ and \nameref{fig:alg-1a}$(A\mysetminus S, B)$ are returned with probability $1/2$ each (see Appendix \ref{app:bwmcut}) yields a randomized 10-approximate polynomial time mechanism. 
\medskip

\begin{algorithm}[H]
	\DontPrintSemicolon 
	\NoCaptionOfAlgo
	\algotitle{\textsc{Rand-Mech-UCut}}{fig:random-mech-ucut.title}
	$S = \nameref{fig:alg-ls}(A, v, 0)$ ~~ //find an exact local optimum \;
	with probability $1/2$ {\Return \nameref{fig:alg-1a}$(S, v, \mathbf{c}_{S}, B)$ }\;
	with probability $1/2$ {\Return \nameref{fig:alg-1a}$(A\mysetminus S, v, \mathbf{c}_{A\mysetminus S}, B)$} \;
	\caption{\textsc{Rand-Mech-UCut}$(A, v, \mathbf{c}, B)$} \label{fig:random-mech-ucut} 
\end{algorithm} 

\begin{theorem} \label{thm:poly_cut_rand}
	\nameref{fig:random-mech-ucut} is a randomized, universally truthful, individually rational, budget-feasible mechanism  for Budgeted Max  Cut that has approximation ratio $10$ and runs in polynomial time. 
\end{theorem}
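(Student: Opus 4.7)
The idea is to follow the template of Theorem~\ref{thm:exp_ssm_det} almost verbatim, replacing only the argmax between the two sides of the local partition---which would force us to compute $\opt(S,B)$ and $\opt(A\mysetminus S,B)$ and thus spoil polynomiality---by an unbiased coin flip. The loss from randomizing the side is a factor of $2$, which combined with the $5$-approximation of \nameref{fig:alg-1a} (Theorem~\ref{thm:mechSM}) gives the claimed $10$-approximation. The only ingredient genuinely specific to cut functions is that for unweighted Max Cut the local-search step can be carried out exactly in polynomial time.

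For the polynomial running time I would argue as follows. Because $v$ is a $0$/$1$-weighted cut function on $n$ vertices, $v$ is integer-valued and bounded by $\binom{n}{2}$; since \nameref{fig:alg-ls} with $\epsilon = 0$ demands a strict integer improvement at every iteration, it terminates in $O(n^2)$ iterations, each requiring $O(n)$ oracle calls, as cited in \cite{KT06}. The two subsequent calls to \nameref{fig:alg-1a} run in polynomial time by Theorem~\ref{thm:mechSM}. Truthfulness, individual rationality, and budget-feasibility in the universal sense are inherited from \nameref{fig:alg-1a}: the local optimum $S$ depends only on $v$ and not on the reported costs, and the side-selection coin is independent of the reports; hence, once all random bits (of the coin flip and of \nameref{fig:alg-1a}) are fixed, the mechanism reduces to a deterministic truthful, IR, budget-feasible submechanism acting on a cost-independent subset of agents.

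For the approximation ratio, I would invoke Lemma~\ref{lem:local-opt-monotone}: since $v$ is symmetric submodular and $S$ is an exact local optimum, $A\mysetminus S$ is one too, and $v$ restricted to $2^S$ and to $2^{A\mysetminus S}$ is non-decreasing submodular. Applying Theorem~\ref{thm:mechSM} to each side gives expected output value at least $\tfrac{1}{5}\opt(S,B)$ and $\tfrac{1}{5}\opt(A\mysetminus S,B)$, respectively, so averaging over the $1/2$--$1/2$ coin and using Fact~\ref{fact:simple-facts} yields
\[
\mathbb{E}[v(\text{output})] \;\ge\; \tfrac{1}{10}\bigl(\opt(S,B)+\opt(A\mysetminus S,B)\bigr) \;\ge\; \tfrac{1}{10}\opt(A,B).
\]
The main conceptual obstacle is recognizing that the computational bottleneck in \nameref{fig:random-mech-symsm} is \emph{not} the local-search step but the deterministic comparison of $\opt$ on the two sides, and that this comparison can be harmlessly traded for a factor of $2$ in the approximation ratio, which the $5$-approximation of \nameref{fig:alg-1a} can absorb.
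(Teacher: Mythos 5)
Your proposal is correct and follows essentially the same route as the paper: you fill in the polynomial-time argument for local search that the paper dismisses with ``clearly,'' and your approximation bound is the same bound the paper proves, just stated at a slightly higher level of abstraction (you invoke the $5$-approximation of \nameref{fig:alg-1a} per side as a black box, whereas the paper unpacks that mechanism's $3/5$--$2/5$ mixture and applies Lemma~\ref{lem:greedy_approximation_jt} directly to each side before summing by subadditivity). Both versions rest on Lemma~\ref{lem:local-opt-monotone} to justify using non-decreasing submodular machinery on each side of the local optimum, and both recover $\mathbb{E}[v(\text{output})]\ge \tfrac{1}{10}\big(\opt(S,B)+\opt(A\mysetminus S,B)\big)\ge\tfrac{1}{10}\opt(A,B)$, so there is no substantive difference.
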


\begin{proof} 
	Clearly, for the unweighted version of Max Cut, the mechanism runs in polynomial time. The fact that the mechanism is truthful, individually rational and budget feasible follows from the same arguments as in the proof of Theorem \ref{thm:exp_ssm_det}. \smallskip 

	Finally, let 
	\begin{IEEEeqnarray*}{l}
		X_S = \nameref{fig:alg-2}(S, v, \mathbf{c}_S, B/2), \ \ 
		X_{A\mysetminus S} = \nameref{fig:alg-2}(A\mysetminus S, v, \mathbf{c}_{A\mysetminus S}, B/2), \\ 
		i_S\in \allowbreak \argmax_{\{i\in S\, |\, c_i\le B\}}v(i),\quad \text{and} \quad i_{A\mysetminus S}\in \argmax_{\{i\in {A\mysetminus S}\, |\, c_i\le B\}}v(i)\,.
	\end{IEEEeqnarray*}
	
	Since we run mechanism \textsc{Rand-Mech-SM} with probability $1/2$ on $S$ and $A\mysetminus S$, it is not hard to see that each of $X_S$ and $X_{A\mysetminus S}$ is returned with probability 3/10, while each of $i_S$ and $i_{A\mysetminus S}$ is returned with probability 2/10. If $X$ denotes the outcome of the mechanism, then using subadditivity and Lemma \ref{lem:greedy_approximation_jt}, we get 
	\begin{IEEEeqnarray*}{rCl}
		\mathrm E(v(X)) & = & \frac{3}{10}v(X_S)+\frac{2}{10}v(i_S) +\frac{3}{10}v(X_{A\mysetminus S})+\frac{2}{10}v(i_{A\mysetminus S})  \\
		& \ge & \frac{1}{10} \opt(S,  B) + \frac{1}{10} \opt(A\mysetminus S,  B) \\
		& \ge & \frac{1}{10} \opt(A,  B) \,,
	\end{IEEEeqnarray*}
	thus establishing an approximation ratio of $10$.
\end{proof}


In order to design deterministic mechanisms that run in polynomial time, we first optimize a deterministic mechanism of \cite{ABM16} to obtain \nameref{fig:alg-1b-frac} below, which is applicable for non-decreasing submodular functions. The difference with  \nameref{fig:alg-1b-var} from Section \ref{sec:prels}, is that now we assume that 
a fractional relaxation of our problem can be solved optimally and that the fractional optimal solution is within a constant of the integral solution. To do this, we will consider a LP relaxation of an ILP formulation of the problem, and also bound the solution, using pipage rounding.
In particular, let $v(\cdot)$ be a non-decreasing submodular function, $A'=\{i\in A\ |\ c_i\le B\}$, and consider a relaxation of our problem for which we have an exact algorithm. Moreover, suppose that $\fopt(A', v, \mathbf{c}_{A'}, B) \le \rho\cdot\opt(A', v, \mathbf{c}_{A'}, B) = \rho\cdot\opt(A, v, \mathbf{c}, B)$ for any instance,  where  $\fopt$ and $\opt$ denote the value of an optimal solution to the relaxed and the original problem respectively.

\begin{theorem}[inferred from \cite{ABM16} and Lemma \ref{lem:greedy_approximation}]\label{thm:mechSM-frac}
	\nameref{fig:alg-1b-frac} is deterministic, truthful, individually rational, budget-feasible, and has approximation ratio $\rho + 2 +\sqrt{\rho^2 +4\rho +1}$. Also, it runs in polynomial time as long as the exact algorithm for the relaxed problem runs in polynomial time.
\end{theorem}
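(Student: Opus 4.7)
The plan is to adapt the analysis of Theorem \ref{thm:mechSM} by treating the switching constant in \nameref{fig:alg-1b-frac} as a free parameter $\alpha > 2\rho$ and then tuning it optimally. Specifically, I will assume \textsc{Mech-SM-frac} has the same structure as \nameref{fig:alg-1b-var}, except that the switching condition reads $\alpha \cdot v(i^*) \ge \fopt(A\mysetminus\{i^*\}, B)$, where $\alpha$ is to be chosen at the end of the proof.

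For monotonicity of the allocation rule, the key structural points are that (a) $A'$, $i^*$, and $\fopt(A\mysetminus\{i^*\}, B)$ are all independent of $c_{i^*}$, and (b) $\fopt(A\mysetminus\{i^*\}, B)$ is weakly non-increasing in $c_j$ for every $j \neq i^*$, since reducing a cost only enlarges the feasible region of the relaxation. Hence: if $i^*$ is the winner and declares a lower cost, neither $i^*$ nor the switching condition changes, so $i^*$ remains selected; if $j$ is a winner output by \nameref{fig:alg-2} and declares a lower cost, $\fopt(A\mysetminus\{i^*\}, B)$ weakly increases, the ``else'' branch is still taken, and by the monotonicity of \nameref{fig:alg-2} (Lemma \ref{lem:GreedySM}) $j$ remains in the output. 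Truthfulness and individual rationality then follow from Myerson's lemma, and budget-feasibility follows exactly as in the proof of Theorem \ref{thm:mechSM}: in the first branch the threshold payment to $i^*$ is at most $B$ because $c_{i^*} \le B$, and in the second branch Lemma \ref{lem:GreedySM} guarantees budget-feasibility of \nameref{fig:alg-2}.

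For the approximation ratio, let $S$ denote the output of \nameref{fig:alg-2}. If \textsc{Mech-SM-frac} returns $i^*$, then
\[
\alpha\, v(i^*) \ge \fopt(A\mysetminus\{i^*\}, B) \ge \opt(A\mysetminus\{i^*\}, B) \ge \opt(A, B) - v(i^*),
\]
so $\opt(A, B) \le (\alpha + 1)\, v(i^*)$. If $S$ is returned, then
\[
\alpha\, v(i^*) < \fopt(A\mysetminus\{i^*\}, B) \le \rho\, \opt(A\mysetminus\{i^*\}, B) \le \rho\, \opt(A, B),
\]
and combining with the bound $\opt(A, B) \le 3\, v(S) + 2\, v(i^*)$ (Lemma \ref{lem:greedy_approximation}, specialized via $\epsilon = 0$, $\beta = 1/2$) yields
\[
\opt(A, B) \le 3\, v(S) + \tfrac{2\rho}{\alpha}\, \opt(A, B),
\]
and therefore $\opt(A, B) \le \frac{3\alpha}{\alpha - 2\rho}\, v(S)$.

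Balancing the two case bounds by setting $\alpha + 1 = \frac{3\alpha}{\alpha - 2\rho}$ gives the quadratic $\alpha^2 - (2\rho + 2)\alpha - 2\rho = 0$, whose positive root is $\alpha = (\rho+1) + \sqrt{\rho^2 + 4\rho + 1}$; the resulting approximation ratio is $\alpha + 1 = \rho + 2 + \sqrt{\rho^2 + 4\rho + 1}$. As a sanity check, at $\rho = 1$ this recovers $\alpha = 2 + \sqrt{6}$ and ratio $3 + \sqrt{6}$, matching \nameref{fig:alg-1b-var} and Theorem \ref{thm:mechSM}. Polynomial running time is immediate, since beyond the operations of \nameref{fig:alg-1b-var} the mechanism invokes only the (polynomial-time, by assumption) exact algorithm for the fractional relaxation. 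The main subtlety I anticipate is the monotonicity step---ensuring that replacing $\opt$ by $\fopt$ in the switching condition preserves monotonicity in every $c_j$ and independence from $c_{i^*}$; once that is verified, the rest is a direct recomputation of the algebra in the proof of Theorem \ref{thm:mechSM} with the optimally chosen threshold.
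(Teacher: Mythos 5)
Your proposal is correct and follows essentially the same route as the paper's proof: both verify monotonicity and budget-feasibility by reduction to the properties of \nameref{fig:alg-2} (Lemma \ref{lem:GreedySM}, cf.\ \cite{ChenGL11,ABM16}), and both establish the ratio by splitting on which branch of \nameref{fig:alg-1b-frac} fires, combining $\fopt \le \rho\cdot\opt$ with the inequality $\opt(A,B)\le 3v(S)+2v(i^*)$. The only presentational difference is that you derive the threshold $\alpha=(\rho+1)+\sqrt{\rho^2+4\rho+1}$ by solving the balance equation $\alpha+1=\frac{3\alpha}{\alpha-2\rho}$, whereas the paper plugs this value in directly and checks the algebra; the content is identical.
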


\begin{algorithm}[H]
	\DontPrintSemicolon 
	\NoCaptionOfAlgo
	\algotitle{\textsc{Mech-SM-frac}}{fig:alg-1b-frac.title}
	Set $A'=\{i\ |\ c_i\le B\}$ and $i^*\in\argmax_{i\in A'}v(i)$ \;
	\vspace{2pt} \If{$\left( \rho +1 +\sqrt{\rho^2 +4\rho +1}\right) \cdot v(i^*) \ge \fopt(A'\mysetminus\{i^*\},   B)$}{\vspace{2pt}\Return $i^*$}
	\Else{\Return \textsc{Greedy-SM}$(A, v, \mathbf{c}, B/2)$}
	\caption{\textsc{Mech-SM-frac}$(A, v, \mathbf{c}, B)$} \label{fig:alg-1b-frac} 
\end{algorithm}\smallskip 

Note that when the relaxed problem is the same as the original ($\rho = 1$), then \nameref{fig:alg-1b-frac} becomes \nameref{fig:alg-1b-var} and the two approximation ratios coincide.
A discussion on how different results combine into Theorem \ref{thm:mechSM-frac} is deferred to Appendix \ref{app:bumcut}. 

Hence, to obtain a deterministic mechanism for Budgeted Max Cut, we will use an LP-based approach, and we will run \nameref{fig:alg-1b-frac} on an appropriate local maximum. For this, we will first need to compare the value of an optimal solution of a fractional relaxation to the value of an optimal solution of the original problem. 
Ageev and Sviridenko \cite{AgeevS99} studied a different Max Cut variant, but we follow a similar approach to obtain the desired bound for our problem as well. We begin with a linear program formulation of the problem. 
Our analysis is carried out for the weighted version of the problem, as we are going to reuse some  results in Subsection \ref{subsec:weight_cut}, which deals with  weighted cut functions. To be more precise, we want to argue about a variant of the problem where we may only be allowed to choose the solution from a specified subset of $A$. That is, we formulate below the sub-instance $I = (X, v, \mathbf{c}_X, B)$ of $(A, v, \mathbf{c}, B)$, where in fact $X \subseteq A'=\{i\in A\ |\ c_i\le B\}$.

We associate a binary variable $x_i$ for each vertex $i$, and the partition of $A=[n]$ according to the value of the $x_i$s defines the cut. There is also a binary variable $z_{ij}$ for each edge $\{i, j\}$ which is the indicator variable of whether $\{i, j\}$ is in the cut.
\begin{IEEEeqnarray}{rCll}
	\mbox{maximize:\ \ } & & \sum_{i\in [n]}\sum_{j\in [n]\mysetminus[i]} w_{ij} z_{ij} & \label{eq:bwmcut1}\\
	\mbox{subject to:\ \ } & & z_{ij}\le x_i +x_j\ ,\qquad & \forall i\in [n], \forall j \in [n] \mysetminus [i]\\
	& & z_{ij}\le 2- x_i -x_j\ ,\qquad & \forall i\in [n], \forall j \in [n] \mysetminus [i]\\
	& & \sum_{i\in [n]} c_i x_i \le B &  \\
	& & x_i = 0\ ,\qquad & \forall i\in [n]\mysetminus X\\
	& & 0\le x_i, z_{ij} \le 1\ ,& \forall i\in [n],\, \forall j \in [n] \mysetminus [i] \label{eq:bwmcut5}\\
	& & x_i\in \{0,1\}\ ,& \forall i\in [n] \label{eq:bwmcut6}
\end{IEEEeqnarray} 

It is not hard to see that \eqref{eq:bwmcut1}-\eqref{eq:bwmcut6} is a natural ILP formulation for Budgeted Max Weighted Cut and \eqref{eq:bwmcut1}-\eqref{eq:bwmcut5} is its linear relaxation. Let $\opt(I)$ and ${\fopt}(I)$ denote the optimal solutions to \eqref{eq:bwmcut1}-\eqref{eq:bwmcut6} and \eqref{eq:bwmcut1}-\eqref{eq:bwmcut5} respectively for instance $I$. 
To show how these two are related we will use the technique of pipage rounding \cite{AgeevS99,AgeevS04}. Although we do not provide a general description of the  
technique, the proof of the next theorem is self-contained (see Appendix \ref{app:bumcut}). 

\begin{theorem}\label{lem:bwmcut}
	Given the fractional relaxation \eqref{eq:bwmcut1}-\eqref{eq:bwmcut5} for Budgeted Max Weighted Cut, we have that 
	$\fopt(I) \le (2+2\beta_I)\cdot \opt(I)$,
	for any instance $I$, where $\beta_I$ is such that $\max_{i\in A'}v(i) \le \beta_I \cdot \opt(I)$.
\end{theorem}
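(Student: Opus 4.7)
The plan is to apply the pipage rounding technique of Ageev and Sviridenko to convert an optimal solution $(\hat x,\hat z)$ of the LP relaxation~\eqref{eq:bwmcut1}--\eqref{eq:bwmcut5} into an integer feasible solution of value at least $\tfrac12\fopt(I)-\beta_I\opt(I)$; rearranging then yields the claimed bound. Since the $\hat z_{ij}$'s appear only in the objective and in the two upper-bound constraints, we may assume $\hat z_{ij}=\min(\hat x_i+\hat x_j,\,2-\hat x_i-\hat x_j)$, so $\fopt(I)=\sum_{i<j}w_{ij}\min(\hat x_i+\hat x_j,\,2-\hat x_i-\hat x_j)$. The auxiliary object I would introduce is the multilinear extension of the cut function, $F(x):=\sum_{i<j}w_{ij}(x_i+x_j-2x_ix_j)$, which is linear in each coordinate and agrees with $v(\{i:x_i=1\})$ on $\{0,1\}^n$.

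The first ingredient is the pointwise bound $x_i+x_j-2x_ix_j\ge\tfrac12\min(x_i+x_j,\,2-x_i-x_j)$ for $x_i,x_j\in[0,1]$, which reduces via the substitution $y_k=1-x_k$ to the case $x_i+x_j\le 1$, and there follows from $x_i+x_j\ge 2\sqrt{x_ix_j}\ge 4x_ix_j$. Summing over edges yields $F(\hat x)\ge\tfrac12\fopt(I)$.

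The second ingredient is the pipage loop. While $\hat x$ has at least two fractional entries $i,j$, consider moving along the line $\hat x+t(e_i/c_i-e_j/c_j)$; this keeps every coordinate other than $i,j$ fixed and preserves the budget $\sum_k c_k\hat x_k$. Because $F$ is linear in each of $x_i,x_j$ with the rest frozen, $F$ is linear in $t$, so at least one of the two directions does not decrease $F$; walking in that direction until $\hat x_i$ or $\hat x_j$ hits $\{0,1\}$ keeps $\hat x\in[0,1]^n$, preserves the budget and the already-integer coordinates, and strictly reduces the number of fractional entries. After at most $n$ such moves, at most one fractional coordinate $\ell$ remains, and I simply set $\hat x_\ell:=0$. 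This changes $F$ by at most $|\partial F/\partial x_\ell|\cdot\hat x_\ell\le\sum_{k\neq \ell}w_{\ell k}=v(\{\ell\})\le\beta_I\opt(I)$, the last inequality holding because $\ell\in X\subseteq A'$.

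Let $x^*$ be the resulting integer vector and $S^*=\{i:x_i^*=1\}$. By construction $S^*$ is budget-feasible (the final rounding only decreases the total cost), so $\opt(I)\ge v(S^*)=F(x^*)\ge F(\hat x)-\beta_I\opt(I)\ge\tfrac12\fopt(I)-\beta_I\opt(I)$, and rearranging gives $\fopt(I)\le(2+2\beta_I)\opt(I)$. The main delicate point, and the only place where the $\beta_I$ term enters, is the final rounding: absent a budget constraint the pipage loop alone would give a clean factor-$2$ integrality gap, but the "tail" fractional coordinate forced by the budget must be zeroed out, producing the additive $\beta_I\opt(I)$ loss and hence the $(2+2\beta_I)$ factor.
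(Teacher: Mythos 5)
Your proposal follows the paper's proof essentially line by line: the same auxiliary function $F(x)=\sum_{i<j}w_{ij}(x_i+x_j-2x_ix_j)$, the same pointwise factor-$\tfrac12$ inequality (proved identically via the substitution $a\mapsto 1-a$, $b\mapsto 1-b$ and $a+b\ge 4ab$), the same budget-preserving two-coordinate pipage loop reducing to at most one fractional entry, and the same accounting of that last fractional coordinate via $v(\{\ell\})\le\beta_I\opt(I)$. The paper's decomposition $F(x')\le F(x^0)+F(x^1-x^0)$ and your ``zero out $\hat x_\ell$ and bound the loss by $|\partial F/\partial x_\ell|\cdot\hat x_\ell$'' are the same estimate written in different orders.

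There is, however, one concrete error in your justification of the pipage step. You assert that because $F$ is linear in each coordinate separately, $F$ is linear in $t$ along the line $\hat x+t\bigl(e_i/c_i-e_j/c_j\bigr)$. Multilinearity gives no such thing when \emph{two} coordinates vary simultaneously: the term $-2w_{ij}x_ix_j$ contributes a genuine quadratic
\[
-2w_{ij}\Bigl(\hat x_i+\tfrac{t}{c_i}\Bigr)\Bigl(\hat x_j-\tfrac{t}{c_j}\Bigr)
\;=\;\text{(linear in }t\text{)}\;+\;\frac{2w_{ij}}{c_ic_j}\,t^2 ,
\]
so $F$ restricted to the line is a convex quadratic (positive leading coefficient), not a linear function. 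The conclusion you draw — that one endpoint of the feasible interval of $t$ has $F$-value at least as large as at $t=0$ — is still correct, but it follows from \emph{convexity} of the restriction (a convex function on an interval attains its maximum at an endpoint), which is exactly how the paper argues. You should replace the linearity rationale by this convexity argument; otherwise the pipage step as stated is unjustified.
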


Note that there always exists some $\beta_I \leq 1$ for every instance $I$, hence, the above theorem implies a worst-case upper bound of $4$.
Now, we may modify \textsc{Det-Mech-SymSM} to use $\fopt$ instead of $\opt$, and \nameref{fig:alg-1b-frac} instead of \nameref{fig:alg-1b-var}.
This results in the following deterministic mechanism that runs in polynomial time. \smallskip

\begin{algorithm}[H]
	\DontPrintSemicolon 
	\NoCaptionOfAlgo
	\algotitle{\textsc{Det-Mech-UCut}}{fig:det-mech-ucut.title}
		Set $A'=\{i\ |\ c_i\le B\}$ and $i^*\in\argmax_{i\in A'}v(i)$ \;
		\vspace{2 pt}\If{$26.25\cdot v(i^*) \ge \fopt(A'\mysetminus\{i^*\},   B)$ \label{line:cut1}}{\vspace{2pt}\Return $i^*$ \label{line:cut2}}
		\Else{$S = \nameref{fig:alg-ls}(A, v, 0)$ \;
			\If{$\fopt(S \cap A',  B)\ge \fopt(A'\mysetminus S,  B)$ \label{line:cut5}}{\vspace{2 pt}\Return \nameref{fig:alg-1b-frac}$(S, v, \mathbf{c}_{S}, B)$ \label{line:cut7}}
			\Else{\Return \nameref{fig:alg-1b-frac}$(A\mysetminus S, v, \mathbf{c}_{A\mysetminus S}, B)$ \label{line:cut9}}
		}
	\caption{\textsc{Det-Mech-UCut}$(A, v, \mathbf{c}, B)$} \label{fig:det-mech-ucut} 
\end{algorithm}\smallskip 

\begin{theorem} \label{thm:poly_cut_det}
	\nameref{fig:det-mech-ucut} is a deterministic, truthful, individually rational, budget-feasible mechanism for Budgeted Max  Cut that has approximation ratio $27.25$ and runs in polynomial time.
\end{theorem}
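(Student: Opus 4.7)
The statement decomposes into four properties: truthfulness and individual rationality (via monotonicity plus Myerson's lemma), budget-feasibility, polynomial running time, and the $27.25$-approximation. The first three are routine. Polynomial time uses \cite{KT06} (exact local search converges in polynomial time for unweighted cut), polynomial solvability of the LP~\eqref{eq:bwmcut1}--\eqref{eq:bwmcut5}, and Theorem~\ref{thm:mechSM-frac}. Budget-feasibility follows from $c_{i^*}\le B$ in Case~A and from \nameref{fig:alg-1b-frac}'s own budget-feasibility in Case~B. For monotonicity, note that \nameref{fig:alg-ls} is cost-oblivious, so $S$ is fixed by the objective. If a winner $i$ lowers its declaration then $v(i^*)$ is unchanged, $\fopt(A'\mysetminus\{i^*\},B)$ either grows or stays the same, and $\fopt(L\cap A',B)$ only grows while $\fopt((A\mysetminus L)\cap A',B)$ is unchanged; hence the Case-A test on line~\ref{line:cut1} preserves its outcome (so either Case~A still fires with $i^*$ returned, or we remain in Case~B), line~\ref{line:cut5} still picks the same~$L$, and the monotonicity of \nameref{fig:alg-1b-frac} (Theorem~\ref{thm:mechSM-frac}) keeps~$i$ selected. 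Myerson's lemma then delivers truthfulness and individual rationality. The hypothesis of Theorem~\ref{thm:mechSM-frac} that $v$ is non-decreasing submodular on~$2^L$ is secured by Lemma~\ref{lem:local-opt-monotone} together with the fact that $A\mysetminus S$ is a local maximum whenever $S$ is.

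For the $27.25$-approximation, Case~A is immediate: the threshold on line~\ref{line:cut1} together with $\opt(A\mysetminus\{i^*\},B)\le \fopt(A'\mysetminus\{i^*\},B)$ and sub-additivity $\opt(A,B)\le \opt(A\mysetminus\{i^*\},B)+v(i^*)$ yields $\opt(A,B)\le 27.25\,v(i^*)$, so returning $i^*$ achieves the ratio on the nose. Case~B is the substance. Here $\fopt(A'\mysetminus\{i^*\},B)>26.25\,v(i^*)$ and the output is that of \nameref{fig:alg-1b-frac}$(L)$. Sub-additivity of cut functions gives $\opt(A,B)\le \opt(S\cap A',B)+\opt(A'\mysetminus S,B)\le \fopt(S\cap A',B)+\fopt(A'\mysetminus S,B)$, so line~\ref{line:cut5} guarantees $\fopt(L\cap A',B)\ge \tfrac12\opt(A,B)$. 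Theorem~\ref{lem:bwmcut}, applied to the restricted instance on $L$, gives $\fopt(L\cap A',B)\le (2+2\beta_L)\opt(L,B)$ with $\beta_L\le v(i^*)/\opt(L,B)$; re-applying Theorem~\ref{lem:bwmcut} to the instance on $A'\mysetminus\{i^*\}$ together with the Case-B inequality caps $v(i^*)$ by a small multiple of $\opt(A,B)$, keeping $\rho_L:=2+2\beta_L$ strictly below the worst-case value of~$4$. Feeding this into Theorem~\ref{thm:mechSM-frac}'s guarantee $v(X)\ge \opt(L,B)/\bigl(\rho_L+2+\sqrt{\rho_L^2+4\rho_L+1}\bigr)$ and combining with $\opt(L,B)\ge \fopt(L\cap A',B)/\rho_L\ge \opt(A,B)/(2\rho_L)$ then yields $v(X)\ge \opt(A,B)/27.25$. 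The threshold~$26.25$ on line~\ref{line:cut1} is tuned precisely so that the Case-A and Case-B bounds match at~$27.25$.

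The main obstacle is the Case-B bookkeeping. Because $L$ is chosen by comparing $\fopt$-values rather than $\opt$-values, Fact~\ref{fact:simple-facts} is not available as a direct shortcut; the conversion from the $\fopt$-lower bound on~$L$ to an $\opt$-lower bound on~$L$ costs an extra factor~$\rho_L$, and keeping this factor small enough to hit~$27.25$ is exactly what the Case-B inequality on~$v(i^*)$ delivers---via a self-referential use of Theorem~\ref{lem:bwmcut} that pins down $\beta_L$, hence $\rho_L$. The constants in the mechanism are set so that this trade-off balances Case~A against Case~B.
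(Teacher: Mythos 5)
Your structural decomposition (monotonicity + Myerson, budget-feasibility, polynomial time, two-case approximation) matches the paper, and the handling of truthfulness, budget-feasibility, running time, and Case~A is essentially the paper's argument. But your Case~B bookkeeping has a real gap: it cannot deliver the constant~$27.25$.

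Concretely, you establish $\fopt(L\cap A',B)\ge \tfrac12\opt(A,B)$ by symmetrizing with subadditivity and line~\ref{line:cut5}, then pass to $\opt(L,B)\ge \fopt(L\cap A',B)/\rho_L\ge \opt(A,B)/(2\rho_L)$, and feed this into Theorem~\ref{thm:mechSM-frac}. That yields a final ratio of $2\rho_L\bigl(\rho_L+2+\sqrt{\rho_L^2+4\rho_L+1}\bigr)$, and with $\rho_L=2+8/\alpha\approx 2.305$ this is about $38$, not $27.25$; even the unattainable best case $\rho_L=2$ gives $4(4+\sqrt{13})\approx 30.4$. The issue is that you convert \emph{both} $\fopt$-values on the two sides back to $\opt$-values, paying the $\rho_L$-loss once to bound $\fopt(L\cap A',B)$ from below by $\tfrac12\opt(A,B)$ and implicitly a second time to relate $\fopt$ to $\opt$ on $L$. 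The paper instead uses an asymmetric chain: $\rho_L\,\opt(L,B)\ge \fopt(L\cap A',B)\ge \fopt(A'\mysetminus L,B)\ge \opt(A'\mysetminus L,B)=\opt(A\mysetminus L,B)\ge \opt(A,B)-\opt(L,B)$, using $\fopt\ge\opt$ on the \emph{losing} side (lossless) and paying the $\rho_L$-factor only on the winning side. This gives $\opt(L,B)\ge \opt(A,B)/(\rho_L+1)$, strictly stronger than your $\opt(A,B)/(2\rho_L)$ whenever $\rho_L>1$, and the resulting ratio $(\rho_L+1)\bigl(\rho_L+2+\sqrt{\rho_L^2+4\rho_L+1}\bigr)$ evaluates exactly to $27.25$ for $\alpha=26.25$. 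To repair the proof, replace the symmetric ``$\fopt(L\cap A',B)\ge\tfrac12\opt(A,B)$'' step with this asymmetric inequality chain.

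Two smaller remarks. First, you write $\beta_L\le v(i^*)/\opt(L,B)$; the direction should be $\ge$ (or: one may simply take $\beta_L=v(i^*)/\opt(L,B)$ as the tightest admissible choice). Second, the phrase ``the Case-A test on line~\ref{line:cut1} preserves its outcome'' needs the one-sided observation that lowering a winner's cost can only increase $\fopt(A'\mysetminus\{i^*\},B)$ (when the winner is not $i^*$) or leave it unchanged (when the winner is $i^*$), so a Case-B run never flips to Case~A; this is exactly what the paper says, and you should say it in the same one-sided way rather than suggesting both directions are preserved.
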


\begin{proof} 
	Clearly the mechanism runs in polynomial time.
	
	For truthfulness and individual rationality, it suffices to show that  the allocation rule is monotone, i.e., a winning agent $j$ remains a winner if he decreases his cost to be $c_j' < c_j$. If $j=i^*$ and he wins in line \ref{line:cut1}, then his bid is irrelevant and he remains a winner. If $j\neq i^*$, or $j=i^*$ but he wins after line \ref{line:cut2}, we may assume he wins at line \ref{line:cut7} (the case of line \ref{line:cut9} is symmetric). When bidding $c_{j}'<c_{j}$, the decision of the mechanism in line \ref{line:cut1} does not change (if $j\neq i^*$ then $\opt_f$ is improved, if $j= i^*$ nothing changes). Further, since he cannot influence the local search,  $j$ is still in $S$ and \nameref{fig:alg-1b-frac} is executed. By the monotonicity of \nameref{fig:alg-1b-frac} we have that $j$ is still a winner.
	Therefore, the mechanism is monotone.
	
	If the winner is $i^*$ in line \ref{line:cut2}, then his payment is $B$. Otherwise, budget-feasibility follows from the budget-feasibility of \nameref{fig:alg-1b-frac} and the observation that the comparison in line \ref{line:cut1} only gives additional upper bounds on the payments of winners from \nameref{fig:alg-1b-frac}. 
	
	It remains to prove the approximation ratio. We consider two cases. Let $\alpha = 26.25$.
	If $i^*$ is returned in line \ref{line:cut2}, then 
	\[\alpha \cdot v(i^*) \ge \opt_f(A'\mysetminus\{i^*\},  B) \ge \opt(A'\mysetminus\{i^*\},  B) = \opt(A\mysetminus\{i^*\},  B) \ge \opt(A, B) - v(i^*) \,,\]
	and therefore $\opt(A,  B)\le (\alpha +1) \cdot v(i^*) = 27.25 \cdot v(i^*)$. 
	
	On the other hand, if  $X= \nameref{fig:alg-1b-frac}(S, B)$ is returned in line \ref{line:cut7}, then by Theorem \ref{lem:bwmcut} with factor 4 we have
	\[\alpha \cdot v(i^*) < \opt_f(A'\mysetminus\{i^*\}, B) \le 4 \cdot \opt(A'\mysetminus\{i^*\}, B) = 4 \cdot \opt(A\mysetminus\{i^*\}, B) \le 4 \cdot \opt(A, B).\] Therefore, $v(i^*) <\frac{4}{\alpha} \opt(A,   B)$ and for the remaining steps of the mechanism  we can use  Theorem \ref{lem:bwmcut} with factor $2+8/\alpha$. 
	
	At line \ref{line:cut5} it must be the case that $\fopt(S\cap A', B)\ge \fopt(A'\mysetminus S, B)$. Thus,
\begin{IEEEeqnarray*}{rCl}
\Big( 2+\frac{8}{\alpha}\Big) \opt(S, B)  & = & \Big( 2+\frac{8}{\alpha}\Big) \opt(S\cap A', B) \ge \fopt(S\cap A', B)\ge \fopt(A'\mysetminus S, B)  \\
 & \ge & \opt(A'\mysetminus S, B) = \opt(A\mysetminus S, B) \ge \opt(A, B) - \opt(S, B)\,.
\end{IEEEeqnarray*}
	Therefore $\opt(S, B) \ge \frac{\alpha}{3\alpha+8} \opt(A, B)$. By Theorem \ref{thm:mechSM-frac} we have \[\left( 4+8/\alpha +\sqrt{(2+8/\alpha)^2 +32/\alpha +9}\right) v(X)\ge \opt(S, B) \ge \frac{\alpha}{3\alpha+8} \opt(A, B)\,,\]
	and by substituting $\alpha$ and doing the calculations we get $\opt(A,  B)\le 27.25 \cdot v(X)$.
	
	The case where $X= \nameref{fig:alg-1b-frac}(A\mysetminus S,  B)$ is returned in line \ref{line:cut9} is symmetric to the case above and it need not be considered separately. We conclude that  $\opt(A,  B)\le 27.25 \cdot \nameref{fig:det-mech-ucut}(A, B)$.
\end{proof}

\section{Symmetric Submodular Objectives Revisited}
\label{sec:poly_ssm}

Suppose that for a symmetric submodular function $v$, an optimal fractional solution can be found efficiently and that $\fopt(A', B) \le \rho\cdot\opt(A, B)$ for any instance,  where  $\fopt$ and $\opt$ denote the value of an optimal solution to the relaxed and the original problem respectively, and $A'=\{i\in A\ |\ c_i\le B\}$. 

A natural question is whether the approach taken for unweighted cut functions can be fruitful for other symmetric submodular objectives. In the mechanisms of Subsection \ref{subsec:cut}, however, the complexity of local search can be a bottleneck even for objectives where an optimal fractional solution can be found  fast and it is not far from the optimal integral solution. So, we now return to the idea of Section \ref{sec:alg_ssm}, where local search runs in polynomial time and produces an approximate local maximum; unfortunately, the nice property of monotonicity in each side of the partition (Lemma \ref{lem:local-opt-monotone}) does not hold any longer. 


This means that the approximation guarantees of such mechanisms do not follow in any direct way from existing work. Moreover, budget-feasibility turns out to be an even more delicate issue 
since it crucially depends on the (approximate) monotonicity of the valuation function. Specifically, when 
a set $X$ only contains a very poor solution to the original problem, every existing proof of budget feasibility for the restriction of $v$ on $X$ completely breaks down. 
Since we cannot expect that  an approximate local maximum $S$ and its complement $A\mysetminus S$ both contain a ``good enough'' solution to the original problem, we need to make sure that \nameref{fig:alg-2} never runs on the wrong set. 

The mechanism \nameref{fig:det-mech-ucut} for the unweighted cut problem seems to take care of this and we are going to build on it, in order to propose mechanisms for arbitrary symmetric submodular functions. To do so we replace the constant $26.25$ that appears in \nameref{fig:det-mech-ucut} by $\alpha = (1+\rho)\left( 2+\rho + \sqrt{\rho^2 +4 \rho +1}\right) -1$  
and we find an approximate local maximum instead of an exact local maximum. Most importantly, in order to achieve budget-feasibility we introduce a  modification of \nameref{fig:alg-1b-frac} (which we call \nameref{fig:alg-1b-frac-var}, see Appendix \ref{app:poly_ssm_proofs}) that runs \nameref{fig:alg-2} with a slightly reduced budget. The parameter $\epsilon'$ that appears in the description of the  mechanism below  is determined by the analysis of the mechanism and only depends on the constants $\rho$ and $\epsilon$.
\smallskip

\begin{algorithm}[H]
	\DontPrintSemicolon 
	\NoCaptionOfAlgo
	\algotitle{\textsc{Det-Mech-SymSM-frac}}{fig:det-mech-symsm-frac-xx.title}
		Set $A'=\{i\ |\ c_i\le B\}$ and $i^*\in\argmax_{i\in A'}v(i)$ \;
		\vspace{2 pt}\If{$\alpha \cdot v(i^*) \ge \fopt(A'\mysetminus\{i^*\},   B)$ }{\vspace{2pt}\Return $i^*$ } 
		\Else{$S = \nameref{fig:alg-ls}(A, v, \epsilon')$ \;
			\If{$\fopt(S\cap A',   B)\ge \fopt(A'\mysetminus S,  B)$ }{\vspace{2 pt}\Return \nameref{fig:alg-1b-frac-var}$(S, v, \mathbf{c}_{S}, B, (1-(\alpha_1+2) \epsilon'))$}
			\Else{\Return \nameref{fig:alg-1b-frac-var}$(A\mysetminus S, v, \mathbf{c}_{A\mysetminus S}, B, (1-(\alpha_1+2) \epsilon'))$}
		}
	\caption{\textsc{Det-Mech-SymSM-frac}$(A, v, \mathbf{c}, B, \epsilon)$} 
\end{algorithm}\smallskip 


Theorem \ref{thm:poly_ssm_det} below shows that for any objective for which we can establish a constant upper bound $\rho$ on the ratio of the fractional and the integral optimal solutions,
we have constant factor approximation mechanisms that run in polynomial-time .

\begin{theorem} \label{thm:poly_ssm_det}
	For any $\epsilon >0$, \nameref{fig:det-mech-symsm-frac} is a deterministic, truthful, individually rational, budget-feasible mechanism for symmetric submodular valuations, that has approximation ratio $\alpha +1 +\epsilon$ and runs in polynomial time.
\end{theorem}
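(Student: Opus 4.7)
The plan is to mirror the structure of the proof of Theorem \ref{thm:poly_cut_det}, handling each of the four properties in turn, but paying extra attention to budget feasibility, since this is where the loss of exact monotonicity on the two sides of the partition makes the argument delicate.

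First, polynomial running time is immediate: by Lemma \ref{lem:LS}, \textsc{Approx-Local-Search}$(A, v, \epsilon')$ runs in polynomial time, $\fopt$ can be computed in polynomial time by assumption on $\rho$, and \textsc{Mech-SM-frac-var} (the modified version of \nameref{fig:alg-1b-frac} run with a slightly reduced budget) also runs in polynomial time. Monotonicity is inherited almost exactly as in the proof of Theorem \ref{thm:poly_cut_det}: the local search output $S$ depends only on $v$ and not on the declared costs, so no agent can influence it; if an agent $j$ decreases her bid then $\fopt(A'\mysetminus\{j\}, B)$ weakly increases and $v(i^*)$ does not change (or improves if $j = i^*$), so the outcome of the first if-test can only move toward choosing the second branch, which is the branch in which $j$ was selected; finally, \nameref{fig:alg-1b-frac-var} is itself monotone, so a winner $j$ stays a winner. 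Individual rationality then follows from Myerson's lemma applied to this monotone allocation rule.

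The crux is budget feasibility. If $i^*$ is returned, the payment is at most $B$ by the definition of $A'$. Otherwise, the mechanism calls \nameref{fig:alg-1b-frac-var} on either $X = S$ or $X = A\mysetminus S$. The subtle point is that the proof of budget feasibility for \nameref{fig:alg-1b-frac} relies on $v$ being non-decreasing on $X$, whereas here $v$ is only \emph{approximately} non-decreasing on $X$ in the sense of Lemma \ref{lem:almost-monotone}: for every $T\subsetneq X$ and every $i\in X\mysetminus T$, we have $v(T\cup\{i\}) - v(T) > -\frac{\epsilon'}{n}\,\opt(X, B)$. The purpose of shrinking the budget in \nameref{fig:alg-1b-frac-var} by a factor of $1 - (\alpha_1 + 2)\epsilon'$ is precisely to absorb these negative marginal slacks. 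I would push through the threshold-payment calculation in \nameref{fig:alg-2} using the near-monotonicity estimate of Lemma \ref{lem:almost-monotone} in place of strict monotonicity; summing the slack over at most $n$ greedy steps shows that the total overpayment caused by non-monotone marginals is bounded by a term proportional to $\epsilon'\cdot\opt(X, B)$, which is in turn controlled by the budget shortfall and so the total payments still fit in $B$. This is the step I expect to be the main obstacle and where the choice of $\epsilon'$ as a function of $\rho$ and $\epsilon$ has to be fixed.

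For the approximation ratio, I would follow the case analysis of Theorem \ref{thm:poly_cut_det} with $\alpha = (1+\rho)\bigl(2+\rho+\sqrt{\rho^2+4\rho+1}\bigr) - 1$ replacing the numerical constant $26.25$, and with Theorem \ref{lem:bwmcut} replaced by the hypothesis $\fopt(A', B) \le \rho\cdot\opt(A, B)$. In the first branch, the threshold $\alpha\cdot v(i^*) \ge \fopt(A'\mysetminus\{i^*\}, B) \ge \opt(A\mysetminus\{i^*\}, B) \ge \opt(A, B) - v(i^*)$ yields $\opt(A, B) \le (\alpha + 1)v(i^*)$. In the second branch, the failure of the first test yields $v(i^*) < \frac{1}{\alpha}\fopt(A, B) \le \frac{\rho}{\alpha}\opt(A, B)$, so that combining with Fact \ref{fact:simple-facts} and the tie-breaking on $\fopt$, together with the bound $\fopt(X\cap A', B) \le \rho\cdot\opt(X, B)$, gives $\opt(X, B) \ge \frac{\alpha}{(2\rho+1)\alpha + 2\rho}\,\opt(A, B)$ or a comparable bound. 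Applying Theorem \ref{thm:mechSM-frac} (adjusted for the reduced budget, which costs a multiplicative $1 + O(\epsilon')$ factor) to the output of \nameref{fig:alg-1b-frac-var} on $X$, and tracing through the algebra, one obtains $\opt(A, B) \le (\alpha + 1 + \epsilon)\cdot v(\textsc{Det-Mech-SymSM-frac}(A, v, \mathbf{c}, B, \epsilon))$ for a suitable choice of $\epsilon'$ as a function of $\rho$ and $\epsilon$, which is exactly the claimed ratio. Substituting $\rho = 1$ recovers the exponential-time bound $6 + 2\sqrt{6}$ from Theorem \ref{thm:exp_ssm_det} as a consistency check.
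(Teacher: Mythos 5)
Your overall plan mirrors the paper's proof: establish polynomial running time, then monotonicity (hence truthfulness and IR) by tracing bids through the mechanism, then approximation ratio by the two-branch case analysis with $\alpha$ in place of $26.25$ and $\rho$ in place of the cut-specific bound, then budget feasibility. The monotonicity and approximation parts of your outline are essentially correct and match the paper, up to the exact constant in the intermediate bound ($\opt(X, B) \ge \frac{1}{\rho+1}\opt(A, B)$, which is cleaner than the quantity you propose).

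The gap is in the budget-feasibility argument. You frame it as ``the reduced budget $1 - (\alpha + 2)\epsilon'$ absorbs the negative marginal slacks, and summing over $n$ greedy steps the overpayment is $O(\epsilon' \opt(X, B))$, which is controlled by the budget shortfall.'' This is not quite the right mechanism, and as stated it has a unit mismatch: the shortfall $(\alpha+2)\epsilon' B$ is in cost units while $\epsilon' \opt(X, B)$ is in value units, so they do not directly compare. More importantly, you omit the key extra hypothesis that makes the paper's budget-feasibility lemma (Lemma \ref{lem:greedy_budget-feasible}) work: one needs $U \cdot v(X) \ge \opt(A, B)$ for $U = \alpha + 2$, where $X$ is the greedy output. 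The role of this bound is precisely to convert the quasi-monotone slack term $|S'_j \setminus S| \cdot \frac{\epsilon'}{n}\opt(A, B)$, which appears when applying Theorem \ref{thm:alt_SM}, into a term of the form $\epsilon' U\, v(X)$, so that it can be absorbed by the reduced greedy threshold and still yield the threshold-payment bound $p_i \le B(v([i]) - v([i-1]))/v(X)$. In the paper this hypothesis is a consequence of the approximation ratio (which therefore must be established \emph{before} budget feasibility). Your proposal proves budget feasibility first, does not invoke the approximation guarantee on $v(X)$, and therefore cannot close the argument: without $v(X) \ge \frac{1}{\alpha+2}\opt(A, B)$, the slack from quasi-monotonicity is measured against $\opt(X, B)$, which could in principle be far larger than $v(X)$, and the reduced budget alone does not compensate. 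You should reorder the argument so the approximation ratio is proved first, then quote it to discharge the hypothesis of the budget-feasibility lemma.

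One smaller point you do not flag: your quasi-monotonicity appeal via Lemma \ref{lem:almost-monotone} assumes $X$ maximizes $\opt(Z, B)$ over $Z\in\{S, A\mysetminus S\}$, but the mechanism selects the side by comparing $\fopt$, not $\opt$, so a short extra step (using $\fopt \le \rho\cdot\opt$) is needed to ensure $\opt(X, B) \ge \frac{1}{\rho+1}\opt(A, B)$ on the chosen side, which is what the quasi-monotonicity bound on $X$ ultimately rests on.
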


Appendix \ref{app:bwmcut} is mostly dedicated to the proof of the theorem. We view Theorem \ref{thm:poly_ssm_det} as the most technically demanding result of this work. There are several steps involved in the proof, since we need the good properties of \nameref{fig:alg-2} to still hold 
even for objectives that are not exactly non-decreasing.

\subsection{Weighted Cut Functions}
\label{subsec:weight_cut}
Let us return now to the Max Cut problem, and consider the weighted version. 
An immediate implication of Theorem 
\ref{thm:poly_ssm_det} is that we get a deterministic polynomial-time mechanism for Budgeted Max Weighted Cut with approximation ratio 58.72. This is just the result of substituting $\rho = 4$, as suggested by Theorem \ref{lem:bwmcut}, in the formula for $\alpha$. 

\begin{corollary}\label{cor:poly_wcut_rand}
	There is a deterministic, truthful, individually rational, budget-feasible mechanism for Budgeted Max Weighted Cut that has approximation ratio 58.72 and runs in polynomial time.
\end{corollary}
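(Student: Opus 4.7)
The plan is to instantiate Theorem~\ref{thm:poly_ssm_det} directly, using the weighted cut function as the symmetric submodular objective and the LP \eqref{eq:bwmcut1}--\eqref{eq:bwmcut5} as the fractional relaxation. Two prerequisites must be checked. First, the weighted cut function $v(S) = \sum_{i \in S, j \in A \mysetminus S} w_{ij}$ is non-negative, symmetric (since swapping $S$ and $A \mysetminus S$ leaves the set of cut edges unchanged), and submodular by the routine edge-by-edge verification of diminishing returns. Second, for any $X \subseteq A$, the LP \eqref{eq:bwmcut1}--\eqref{eq:bwmcut5} (with the extra constraints $x_i = 0$ for $i \notin X$) has $O(n^2)$ variables and constraints and is thus solvable exactly in polynomial time by standard LP methods, so every subroutine query of the form $\fopt(X, B)$ issued inside \textsc{Det-Mech-SymSM-frac} can be answered in polynomial time.

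Theorem~\ref{lem:bwmcut} then supplies the integrality-gap parameter $\rho$ required by Theorem~\ref{thm:poly_ssm_det}: for every sub-instance $I$ it gives $\fopt(I) \le (2+2\beta_I)\,\opt(I)$ with $\beta_I \le 1$, so $\rho = 4$ is a valid worst-case upper bound on the ratio of fractional to integral optimum. Substituting $\rho = 4$ into
\[\alpha \;=\; (1+\rho)\bigl(2+\rho+\sqrt{\rho^2+4\rho+1}\bigr)-1\]
yields $\alpha = 5(6+\sqrt{33})-1 = 29 + 5\sqrt{33}$, and hence an approximation ratio of $\alpha+1+\epsilon = 30 + 5\sqrt{33} + \epsilon \approx 58.72$ for an arbitrarily small choice of $\epsilon > 0$. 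Determinism, truthfulness, individual rationality, budget-feasibility, and polynomial running time are all inherited directly from Theorem~\ref{thm:poly_ssm_det}, so no additional argument is needed.

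I do not anticipate any real obstacle: the proof is essentially a one-line substitution. The only care worth noting is that Theorem~\ref{lem:bwmcut} must apply uniformly to each of the sub-instances $(X, v, \mathbf{c}_X, B)$ queried by the mechanism (notably $X = S$ and $X = A \mysetminus S$, as well as $X = A' \mysetminus \{i^*\}$), which is exactly the setting for which the LP was formulated with the constraints $x_i = 0$ for $i \in [n] \mysetminus X$. Hence Theorem~\ref{thm:poly_ssm_det} applies verbatim and the corollary follows.
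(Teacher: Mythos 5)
Your proof is correct and follows essentially the same one-line argument as the paper: verify that the weighted cut function is symmetric submodular, use Theorem~\ref{lem:bwmcut} with $\beta_I \le 1$ to obtain $\rho = 4$, and substitute into the formula for $\alpha$ in Theorem~\ref{thm:poly_ssm_det} to get $\alpha + 1 + \epsilon = 30 + 5\sqrt{33} + \epsilon \approx 58.72$. The extra checks you carry out (submodularity of the cut function, poly-time solvability of the LP on sub-instances with $x_i = 0$ for $i \notin X$) are implicit in the paper but correctly identified as the needed hypotheses.
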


However, Theorem \ref{lem:bwmcut} says something stronger: given that  $\max_{i\in A'}v(i)$ is small compared to $\opt(A, B)$, $\rho$ is strictly smaller than 4. Note that the first step in \nameref{fig:det-mech-symsm-frac}  is to compare $\max_{i\in A'}v(i)$ to $\fopt(A'\mysetminus\{i^*\},  B)$. This implies an upper bound on $\max_{i\in A'}v(i)$ in the following steps and we can use it to further fine-tune our mechanism. In particular, by setting $\alpha = 26.245$ instead of $(1+4)\left( 2+4 + \sqrt{16 +16 +1}\right) -1 = 57.72$ in \nameref{fig:det-mech-symsm-frac}, we prove in Appendix \ref{app:poly_ssm_proofs} 
the following improved result that matches the approximation guarantee for unweighted cut functions.

%

\begin{theorem} \label{thm:poly_wcut_det}
	There is a  deterministic, truthful, individually rational, budget-feasible mechanism for Budgeted Max Weighted Cut that has approximation ratio 27.25, and runs in polynomial time. 
\end{theorem}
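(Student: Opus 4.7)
The plan is to invoke essentially the same template as \nameref{fig:det-mech-symsm-frac-xx.title} from Theorem~\ref{thm:poly_ssm_det}, but with the internal constant $\alpha$ replaced by $\alpha = 26.245$ instead of the worst-case value $(1+\rho)(2+\rho+\sqrt{\rho^2+4\rho+1})-1 = 57.72$ obtained by plugging $\rho = 4$. The key observation enabling this improvement is that Theorem~\ref{lem:bwmcut} does not simply give a blanket $\rho=4$: it gives the sharper inequality $\fopt(I) \le (2 + 2\beta_I) \opt(I)$, where $\beta_I$ is any upper bound on $\max_{i\in A'} v(i)/\opt(I)$. Therefore, if we can bound $v(i^*)/\opt(A,B)$ a priori, the effective $\rho$ driving the subsequent analysis shrinks from $4$ towards $2$.

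First, I would verify that truthfulness, individual rationality, budget-feasibility, and polynomial running time all transfer verbatim from the proof of Theorem~\ref{thm:poly_ssm_det}: none of these properties depends on the numerical value of $\alpha$, only on the structural facts that the local search step is independent of declared costs, that \textsc{Mech-SM-frac-var} is monotone and budget-feasible under the reduced budget $(1-(\alpha_1+2)\epsilon')B$, and that the LP \eqref{eq:bwmcut1}--\eqref{eq:bwmcut5} is solvable in polynomial time.

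For the approximation guarantee I would split into the two standard cases. If $i^*$ is returned, then $\alpha\, v(i^*) \ge \fopt(A'\mysetminus\{i^*\},B) \ge \opt(A\mysetminus\{i^*\},B) \ge \opt(A,B) - v(i^*)$, yielding $\opt(A,B) \le (\alpha+1)\, v(i^*) = 27.245\, v(i^*) \le 27.25\, v(i^*)$. Otherwise, the failure of the first test gives $\alpha\, v(i^*) < \fopt(A'\mysetminus\{i^*\},B)$. Applying Theorem~\ref{lem:bwmcut} to this sub-instance, and using that $i^*\in\argmax_{i\in A'} v(i)$ dominates the singleton maxima of every sub-instance, yields $\fopt(A'\mysetminus\{i^*\},B) \le (2+2\beta)\opt(A,B)$ with $\beta = v(i^*)/\opt(A,B)$. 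Hence $\alpha\beta < 2 + 2\beta$, so $\beta < 2/(\alpha-2)$. This uniform a priori bound on $\beta$ is then used as the effective $\rho' = 2 + 2\beta$ in every subsequent invocation of \textsc{Mech-SM-frac-var} and of Theorem~\ref{thm:mechSM-frac} applied to the chosen side of the local-optimum partition.

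Finally, paralleling the partition step and the application of Theorem~\ref{thm:mechSM-frac} that appear in the proof of Theorem~\ref{thm:poly_cut_det}, I would substitute $\rho' = 2 + 4/(\alpha-2)$ into the formula $(1+\rho')(2 + \rho' + \sqrt{\rho'^2 + 4\rho' + 1})$. The value $\alpha = 26.245$ is tuned precisely so that this expression, together with the partition loss $\opt(X,B) \ge \opt(A,B)/(3+8/\alpha)$-type bound obtained exactly as in Theorem~\ref{thm:poly_cut_det} with $\rho'$ in place of $4$, comes out at most $\alpha + 1 \le 27.25$. The main obstacle is precisely this fixed-point calculation: the bound on $\beta$ extracted from the first branch's failure is only useful if, after feeding it back as $\rho'$, the resulting guarantee closes below $\alpha + 1$ rather than blowing up. All other pieces (monotonicity, budget-feasibility via the reduced budget, and the handling of approximate local maxima through the ``robustness under small deviations from monotonicity'' argument in the proof of Theorem~\ref{thm:poly_ssm_det}) carry over without change.
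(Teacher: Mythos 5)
Your proposal follows essentially the same route as the paper: keep the structure of \textsc{Det-Mech-SymSM-frac}, use the failure of the first test to bound $v(i^*)$ in terms of $\opt(A,B)$, feed the resulting effective $\rho'<4$ into the partition-loss and \textsc{Mech-SM-frac-var} bounds, and check the numbers close for $\alpha=26.245$. The truthfulness, budget-feasibility and running-time parts are handled correctly as direct imports from Theorem~\ref{thm:poly_ssm_det}.

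The one place where you genuinely deviate is the bound on $v(i^*)$. The paper first applies Theorem~\ref{lem:bwmcut} with the blanket factor $4$ to the sub-instance $A'\mysetminus\{i^*\}$, gets $v(i^*) < \tfrac{4}{\alpha}\opt(A,B)$, and only then switches to the refined $\rho' = 2 + 8/\alpha$. You instead use the ``$\fopt \le 2\opt + 2\,v(i^*)$'' reading of Theorem~\ref{lem:bwmcut} directly on $A'\mysetminus\{i^*\}$, which yields $\alpha\beta < 2 + 2\beta$ and hence the tighter $\beta < 2/(\alpha-2)$, i.e.\ $\rho' = 2 + 4/(\alpha-2)$. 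Both derivations are valid; yours is the sharper one. A consequence is that your claim that $\alpha=26.245$ is ``tuned precisely'' does not actually hold for your analysis: with your $\rho'\approx 2.165$ the second-branch formula $(1+\rho')\bigl(2+\rho'+\sqrt{\rho'^2+4\rho'+1}\bigr)$ evaluates to roughly $25.2$, well below $\alpha+1=27.245$, whereas the paper's $\rho'\approx 2.305$ is what balances the two branches at $\approx 27.25$. This is only an inefficiency in the choice of constant, not a gap---your bound is still $\le 27.25$ as required, and one could actually push $\alpha$ a bit lower with your tighter estimate. Everything else (the $\epsilon'$ budget shrinkage in \textsc{Mech-SM-frac-var}, Lemma~\ref{lem:app_D} with the reduced $\rho'$, the quasi-monotonicity handling) is correctly identified as carrying over unchanged.
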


\section{An Improved Upper Bound for XOS Objectives}
\label{sec:rand-xos}

In \cite{BeiCGL12} a randomized, universally truthful and budget-feasible 768-approximation mechanism was introduced for XOS functions.
For several of the results in Sections \ref{sec:exp_ssm} and \ref{sec:poly_ssm} the best previously known upper bound follows from this result (see also Remark \ref{rem:n-sub}). 
In this section we slightly modify their mechanism to improve its performance.\footnote{In a recent manuscript, Leonardi et al.~\cite{LeonardiMSZ16} also suggest a fine-tuning of this mechanism that yields a factor 436.} 
Although there is not much novelty in this result, it feels appropriate to provide this tighter analysis, given that the factor of 768 has been the benchmark against which our results are presented.

We begin with the definition of XOS functions.

\begin{definition}
A valuation function, defined on $2^A$ for some set $A$, is \emph{XOS} or \emph{fractionally subadditive}, if there exist non-negative additive functions $\alpha_1,\alpha_2, ...,\alpha_r$, for some finite $r$, such that 
$v(S) = \max \{\alpha_1(S), \alpha_2(S), ...,\allowbreak\alpha_r(S) \}$.
\end{definition}

Note that above we define non-decreasing XOS functions, and everything is stated and proved for those. However, there is a relatively straightforward way to extend any result to general XOS functions (as defined in \cite{GuptaNS17}); see Remark \ref{rem:n-sub} and Appendix \ref{app:counterexample}. 

Below we provide the mechanism \nameref{fig:XOS-mechanism-main-r}. 
Our mechanism has the same structure as the one presented in \cite{BeiCGL12} but we tune its parameters and perform a slightly different analysis in order to improve the approximation factor. 
The mechanism \textsc{Additive-Mechanism} of \cite{ChenGL11} for additive valuation functions is used as a subroutine. \textsc{Additive-Mechanism} is a universally truthful, 3-approximate mechanism (see Theorem B.2 in \cite{ChenGL11}).
Initially, we revisit the \textit{random sampling} part of the mechanism and modify the threshold bound of line \ref{line:xos_2}:
\medskip 

\begin{algorithm}[H]
	\DontPrintSemicolon 
	\NoCaptionOfAlgo
	\algotitle{\textsc{Sample-XOS}}{fig:XOS-random-sample-r.title}
	Pick each item independently at random with probability $\frac{1}{2}$ to form a set $T$ \;
	Compute $\opt(T, v, \mathbf{c}_T, B)$ and 
	set a threshold $t=\frac{\opt(T, v, \mathbf{c}_T, B)}{4.6 B}$ \; \label{line:xos_2}
	Find a set $S^*\in \argmax_{S\subseteq A\mysetminus T}\{v(S)-t\cdot \mathbf c(S)\}$ \;
	Find an additive function $\alpha$ in the XOS representation of $v(\cdot)$ with $\alpha(S^*)=v(S^*)$  \;
	\Return \textsc{Additive-Mechanism}$(S^*, \alpha, \mathbf{c}_{S^*}, B)$ 
	\caption{\textsc{Sample-XOS}} \label{fig:XOS-random-sample-r} 
\end{algorithm}\medskip 

This part is used as one of the two alternatives of the main mechanism. We modify the probabilities with which the two outcomes occur:
\medskip

\begin{algorithm}[H]
	\DontPrintSemicolon 
	\NoCaptionOfAlgo
	\algotitle{\textsc{Main-XOS}}{fig:XOS-mechanism-main-r.title}
	With probability $p=0.08$, pick a most valuable item as the only winner and pay him $B$  \;
	With probability $1-p$, run \nameref{fig:XOS-random-sample-r} \;
	\caption{\textsc{Main-XOS}} \label{fig:XOS-mechanism-main-r} 
\end{algorithm}\medskip 

By following a similar but more careful analysis, we improve the approximation ratio by a factor of 3 while also retaining its properties. Notice that like the mechanism of \cite{BeiCGL12}, \nameref{fig:XOS-mechanism-main-r} is randomized and has superpolynomial running time. In particular, it requires a demand oracle.

\begin{theorem}\label{thm:xos-m-m-r}
	\nameref{fig:XOS-mechanism-main-r} is  universally truthful, individually rational, budget-feasible,  and has approximation ratio $244$.
\end{theorem}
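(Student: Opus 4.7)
The plan is to verify the four properties of \nameref{fig:XOS-mechanism-main-r} in turn, with the bulk of the work lying in the approximation guarantee.

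I would first dispatch truthfulness, individual rationality, and budget feasibility by conditioning on the random bits so that it suffices to check the two branches as deterministic sub-mechanisms. The first branch is a trivial posted-price: a single winner is paid exactly $B$, which is clearly monotone, individually rational, and budget-feasible. For \nameref{fig:XOS-random-sample-r}, the sample $T$ is independent of bids, so no agent in $T$ can become a winner by misreporting (winners come from $S^*\subseteq A\mysetminus T$). For an agent $i \in A \mysetminus T$, the threshold $t$ is determined entirely by bids inside $T$ and is thus independent of $i$'s report; given $t$, lowering $c_i$ can only increase the Lagrangian objective $v(S) - t\cdot \mathbf{c}(S)$ for sets containing $i$, so membership in $S^*$ is monotone in $c_i$. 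Finally, \textsc{Additive-Mechanism} is universally truthful, individually rational, and budget-feasible by Chen et al., so these properties propagate to the overall mechanism.

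For the approximation ratio, I would set $v^* = \max_{i\in A'} v(i)$ and $\opt = \opt(A, v, \mathbf{c}, B)$, and split into two regimes against a threshold $v^* \gtrless \opt/C_1$ for a constant $C_1$ to be tuned. In the \emph{large-value} regime $v^* \ge \opt/C_1$, the first branch fires with probability $p=0.08$, returning value at least $\opt/C_1$ and contributing $0.08\,\opt/C_1$ to the expected value. In the \emph{small-value} regime $v^* < \opt/C_1$, I would analyse \nameref{fig:XOS-random-sample-r} by arguing, via the standard random-halving lemma for fractionally subadditive functions (Feige-style), that with constant probability both (a) $\opt(T, v, \mathbf{c}_T, B) \ge \opt/\mu_1$, so the threshold $t = \opt(T,\cdot)/(4.6B)$ is well calibrated, and (b) a constant fraction of the optimal value is witnessed by a subset of $A\mysetminus T$ via the XOS decomposition of the optimal additive certificate. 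The Lagrangian-optimal set $S^*$ then satisfies $t\cdot\mathbf{c}(S^*)\le v(S^*)$, giving the budget-type bound $\mathbf{c}(S^*)\le v(S^*)/t$, while $v(S^*)$ is at least a constant fraction of $\opt$ because $v(S^*) - t\cdot\mathbf{c}(S^*)$ dominates the corresponding Lagrangian value of the sub-optimal solution inside $A\mysetminus T$. Since $\alpha(S^*) = v(S^*)$ and $\alpha$ is additive, running \textsc{Additive-Mechanism} on $(S^*, \alpha, \mathbf{c}_{S^*}, B)$ recovers value at least $v(S^*)/3$ by the $3$-approximation guarantee for the additive case.

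The main obstacle is parameter calibration: balancing $p=0.08$, the threshold constant $4.6$, and the case-split constant $C_1$ so that both regimes yield expected value at least $\opt/244$ simultaneously. This amounts to carefully tracking the constants in (i) the random-halving inequality for XOS functions, (ii) the Lagrangian lower bound on $v(S^*)$, (iii) the multiplicative loss from the $1/t$ factor in the cost bound (which must fit inside budget $B$), and (iv) the factor-$3$ loss from \textsc{Additive-Mechanism}. None of these steps is new — the scheme is the one of Bei et al.\ — but verifying that the retuned parameters (in particular $4.6$ instead of the original $6$ and $p=0.08$ instead of the original $2/5$ or $3/5$ splits) jointly achieve $244$ requires a careful combined inequality, and the dominant regime (small-value) is where the factor $244$ is tight.
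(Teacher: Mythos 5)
Your treatment of truthfulness, individual rationality, and budget-feasibility is fine (the paper itself just inherits these from Theorem 3.1 of Bei et al.). Your sketch of the approximation argument is also in the right neighborhood: the random-halving lemma for subadditive functions supplies both (a) a lower bound on $\opt(T,\cdot)$ (calibrating the Lagrangian threshold $t$) and (b) a budget-feasible witness $X\setminus T$ of value $\approx \tfrac{1}{4}\opt$; the Lagrangian-optimal $S^*$ is then handled by splitting on whether $\mathbf{c}(S^*)\le B$ or not, and \textsc{Additive-Mechanism} costs another factor $3$. You correctly name these pieces.

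The genuine gap is the case decomposition. You propose splitting on a single threshold $v^*\gtrless \opt/C_1$, but the paper's proof crucially uses \emph{three} regimes with two distinct values of $k$ in Lemma 2.1: Case 1 ($v^*>\tfrac{3.8}{\kappa}\opt$) gets everything from the singleton branch; Case 2 ($\tfrac{1}{\kappa}\opt<v^*\le\tfrac{3.8}{\kappa}\opt$) simultaneously credits the singleton branch $p\cdot v^*\ge p\cdot\opt/\kappa$ \emph{and} the random-halving bound with $k=\kappa/3.8$; Case 3 ($v^*\le\tfrac{1}{\kappa}\opt$) relies only on random halving with $k=\kappa$. This piecewise-linear lower bound is tight at both endpoints of the middle regime, which is what pushes the overall worst case down to $\approx 243.5$. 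With only two regimes, one is forced to choose a single $k=C_1$: either $C_1$ is small (strong singleton credit, weak random halving) or large (the reverse). Optimizing $p$, $C_1$, and the threshold constant $\tau$ jointly, the best a two-case bound of this form delivers is about $258$, not $244$. So your framework would not close the claimed factor; the missing ingredient is precisely the extra breakpoint and the hybrid middle case, which the paper flags as the place where "things get a little more complicated" relative to Bei et al.
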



\begin{proof} 
	Universal truthfulness, individual rationality and budget feasibility follow directly from the proof of Theorem 3.1 of \cite{BeiCGL12}.
	We next prove the approximation ratio of the mechanism. Let  $i^*\in\argmax_{i\in A}v(i)$ and $X_S$ be the output of \nameref{fig:XOS-random-sample-r}. In addition, let $X$ be an optimal solution, i.e., a subset of $A$ such that $v(X)=\opt(A, v, \mathbf{c}, B)$ and $\mathbf c(X)\le B$.
	We  need the following two lemmata:
	
	\begin{lemma}[Claim 3.1 in  \cite{BeiCGL12}]\label{lem:3.1of6} 
		For any $S \subseteq S^*, \alpha(S)-t\cdot \mathbf c(S) \geq 0$.
	\end{lemma}
	
	\begin{lemma}[Lemma 2.1 in \cite{BeiCGL12}]\label{lem:2.1of6} 
		Consider any subadditive function $v(\cdot)$.  For any given subset $S \subseteq A$ and a positive integer $k$ assume that $v(S) \geq k \cdot v(i)$ for any $i \in S$. Further, suppose that $S$ is divided uniformly at random into two groups $T_1$ and $T_2$. Then with probability at least $\frac{1}{2}$, we have that $v(T_1) \geq \frac{k-1}{4k}v(S)$ and $v(T_2) \geq \frac{k-1}{4k}v(S).$
	\end{lemma}
	
	Let $\kappa=74$.  We are going to use $\kappa$ to set different values for $k$ in Lemma \ref{lem:2.1of6} for different cases. 
	We follow a similar analysis as \cite{BeiCGL12},  but since we use two different values for $k$, things get a little more complicated. 
	Let $X_M$ to be the output of \nameref{fig:XOS-mechanism-main-r}.  We have the following three case regarding the value of item $i^*$:
	\begin{enumerate}[leftmargin=*]
		\item $v(i^*)>\frac{3.8}{\kappa}\opt(A, v, \mathbf{c}, B)$. In this case we have that 
		\[\mathrm E(v(X_M))=p\cdot v(i^*)+ (1-p)\cdot \mathrm E(v(X_S)) \ge p \cdot v(i^*)\ge p  \frac{3.8}{\kappa} \opt(A, v, \mathbf{c}, B)\,,\] and thus, $\opt(A, v, \mathbf{c}, B) \leq 243.5\cdot \mathrm E(v(X_M))$. \medskip
		
		\item $ \frac{1}{\kappa}\opt(A, v, \mathbf{c}, B)<v(i^*)\leq \frac{3.8}{\kappa}\opt(A, v, \mathbf{c}, B)$. Note that now we can apply Lemma \ref{lem:2.1of6} with $k=\frac{\kappa}{3.8}$. We split this case into two subcases:
		
		\begin{itemize} 
			\item $\mathbf c(S^*)>B$. 
			Since $\mathbf c(S^*)$ is more than $B$ we can find a subset $S' \subseteq S^*$, such that $\frac{B}{2} \leq \mathbf c(S') \leq B$. By  Lemma \ref{lem:3.1of6} we have 
			\[\alpha(S') \geq t \cdot \mathbf c(S') \geq \frac{\opt(T, v, \mathbf{c}_T, B)}{4.6\cdot B}\cdot\frac{B}{2} \geq \frac{\opt(T, v, \mathbf{c}_T, B)}{9.2}\,.\] 
			By using Lemma \ref{lem:2.1of6} we have
			\begin{IEEEeqnarray*}{rCl}
				\opt(T, v, \mathbf{c}_T, B)  & \geq & \opt(T\cap X, v, \mathbf{c}_{T\cap X}, B) \geq {\frac{\kappa-3.8}{4\kappa}}\opt(X, v, \mathbf{c}_X, B)   \\
				& = & {\frac{\kappa-3.8}{4\kappa}}\opt(A, v, \mathbf{c}, B)\,. 
			\end{IEEEeqnarray*} 
			Since $\opt(S^*, \alpha, \mathbf{c}_{S^*}, B)$ is the value of an optimal solution and $S'$ a particular solution under budget constraint $B$, we  conclude that 
			\[\opt(S^*, \alpha, \mathbf{c}_{S^*}, B) \geq \alpha(S') \ge \frac{\opt(T, v, \mathbf{c}_{T}, B)}{9.2} \geq {\frac{(\kappa-3.8) }{36.8\cdot\kappa}} \opt(A, v, \mathbf{c}, B)\,,\] 
			with probability at least $\frac{1}{2}$.\smallskip
			
			\item $\mathbf c(S^*)\leq B$. 
			In this case we have that \[\alpha(S^*)= \opt(S^*, \alpha, \mathbf{c}_{S^*}, B) = \opt(S^*, v, \mathbf{c}_{S^*}, B)=v(S^*)\,.\] Let $S'=X \mysetminus\, T$. We have $\mathbf c(S') \leq \mathbf c(X) \leq B$ and also, by using  Lemma \ref{lem:2.1of6}, $v(S') \geq \frac{\kappa-3.8}{4\kappa}\opt(A, v, \mathbf{c}, B)$ with probability of at least $\frac{1}{2}$. Then, we have  
			\begin{IEEEeqnarray*}{rCl}
				\opt(S^*, \alpha, \mathbf{c}_{S^*}, B) & = & v(S^*)  \geq  v(S^*)-t \cdot \mathbf c(S^*) \geq  v(S')-t \cdot \mathbf c(S')   \\
				& \geq & \frac{\kappa-3.8}{4\kappa}\opt(A, v, \mathbf{c}, B)- \frac{\opt(T, v, \mathbf{c}_T, B)}{4.6 \cdot B}\cdot B\\
				& \geq & \left( \frac{\kappa-3.8}{4\kappa}- \frac{1}{4.6}\right)  \opt(A, v, \mathbf{c}, B) \,,
			\end{IEEEeqnarray*}
			with probability at least $\frac{1}{2}$.
		\end{itemize}
		
		By substituting $\kappa$, it is easy to see that $\frac{\kappa-3.8}{4\kappa}- \frac{1}{4.6} < \frac{(\kappa-3.8) }{36.8\cdot\kappa}$. So, in both cases we have that $\opt(S^*, \alpha, \mathbf{c}_{S^*}, \allowbreak B)  \geq \left( \frac{\kappa-3.8}{4\kappa}- \frac{1}{4.6}\right) \opt(A, v, \mathbf{c}, B)$ with probability at least $\frac{1}{2}$. Now recall that \textsc{Additive-Mechanism} has an approximation factor of at most 3 with respect to $\opt(S^*, \alpha, \mathbf{c}_{S^*}, B)$. So we can finally derive that 
		\[\mathrm E(v(X_S)) \geq \frac{1}{3} \opt(S^*, \alpha, \mathbf{c}_{S^*}, B) \geq \frac{1}{3}\cdot \frac{1}{2}\cdot \left( \frac{\kappa-3.8}{4\kappa}- \frac{1}{4.6}\right)\cdot \opt(A, v, \mathbf{c}, B)\,.\] 
		Thus the solution that \nameref{fig:XOS-mechanism-main-r} returns has expected value 
		\begin{IEEEeqnarray*}{rCl}
			\mathrm E(v(X_M)) & = & p\cdot v(i^*)+(1-p)\cdot \mathrm E(v(X_S))   \\
			& \geq & \frac{p}{\kappa}\cdot \opt(A, v, \mathbf{c}, B) + \frac{(1-p)}{6} \cdot \left( \frac{\kappa-3.8}{4\kappa}- \frac{1}{4.6}\right) \cdot \opt(A, v, \mathbf{c}, B) \,.
		\end{IEEEeqnarray*}
		By substituting the values for $p, \kappa$ we  get $\opt(A, v, \mathbf{c}, B) \leq 243.2 \cdot \mathrm E(v(X_M))$.
		\medskip
		
		\item $v(i^*)\leq\frac{1}{\kappa}\opt(A, v, \mathbf{c}, B)$. The analysis of case 2 holds here as well, so we omit the details. The only difference is that
		now Lemma \ref{lem:2.1of6} should be applied with $k=\kappa$.
		The subcase where $\mathbf c(S^*)> B$ gives $\opt(S^*, \alpha, \mathbf{c}_{S^*}, B) \geq \frac{(\kappa-1) }{36.8\cdot\kappa}  \opt(A, v, \mathbf{c}, B)$ with probability at least $\frac{1}{2}$, 
		while the subcase where $\mathbf c(S^*)\leq B$ gives $\opt(S^*, \alpha, \mathbf{c}_{S^*}, B) \geq \left( \frac{\kappa-1}{4\kappa}- \frac{1}{4.6}\right)  \opt(A, v, \mathbf{c}, B)$ with probability at least $\frac{1}{2}$.
		By substituting $\kappa$, we see that $\frac{(\kappa-1) }{36.8\cdot\kappa} < \frac{\kappa-1}{4\kappa}- \frac{1}{4.6}$. So, in both cases we have that $\opt(S^*, \alpha, \mathbf{c}_{S^*}, B) \geq \frac{(\kappa-1) }{36.8\cdot\kappa} \opt(A, v, \mathbf{c}, B)$ with probability at least $\frac{1}{2}$. The above analysis gives that the solution returned by \nameref{fig:XOS-mechanism-main-r} has expected value 
		\[\mathrm E(v(X_M))\geq  \frac{(1-p)}{6} \cdot \frac{(\kappa-1) }{36.8\cdot\kappa} \cdot \opt(A, v, \mathbf{c}, B)\,.\] 
		By substituting the values for $p, \kappa$ we  get $\opt(A, v, \mathbf{c}, B) \leq 243.3 \cdot \mathrm E(v(X_M))$.
	\end{enumerate}
	We conclude that  $\opt(A, v, \mathbf{c}, B) \leq 244 \cdot E(v(X_M))$.
\end{proof}

\section{Conclusions}
\label{sec:conclusions}

In this work, we have made progress on the design of budget-feasible mechanisms for symmetric submodular objectives. The highlights of our results are polynomial time algorithms for the Budgeted Max Cut problem (weighted and unweighted) with significant improvements over previously known approximation factors. Although for general symmetric submodular functions we have exponential running times, the results imply polynomial time algorithms for any objective, where we can bound the optimal fractional solution with respect to the integral one. 
These results make further progress on the questions posed by \cite{DobzinskiPS11}. It remains an open problem however, whether any of these approximation ratios are tight.
Regarding our techniques, 
we expect that the idea of utilizing local search in order to identify monotone regions of a general submodular function may have further applications on mechanism design. Finally, apart from the mechanism design problem, the algorithmic result of Section \ref{sec:alg_ssm} is an interesting consequence for submodular optimization.

\newpage


{\small
\bibliographystyle{plain}
\bibliography{budgetedMechanismDesign}

}

\newpage

\appendix

\section{Missing Material from Section \ref{sec:prels}}
\label{app:prels}

%
%

\subsection{Instances with Costs Exceeding the Budget}
\label{app:costs}
Consider an instance with a symmetric submodular function, where there exist agents with $c_i > B$. It may seem at first sight that we could just discard such agents, since too expensive agents are not included in any feasible solution anyway. However, the presence of such agents can create  infeasible solutions of very high value and make an analog of Lemma \ref{lem:almost-monotone} impossible to prove. Simply discarding them could also destroy the symmetry of the function (e.g., if we had a cut function defined on a graph, we could not just remove a node). 

Let $\mathcal{I}$ denote the set of all instances of the problem with symmetric submodular functions, and let $\mathcal{J}$ denote the set of all instances where at most one agent has cost more than $B$. Given $X\subseteq A$, we let $\mathbf{c}(X) = \sum_{i\in X} c_i$.
The next lemma, together with its corollary, show that, when dealing with symmetric submodular functions, we may only consider instances in $\mathcal{J}$ without any loss of generality. 

\begin{lemma}\label{lem:Costs}
Given an instance $I=(A, v, \mathbf{c}, B)\in \mathcal{I}$, we can efficiently construct an instance $J=(A', v', \mathbf{c}', B)\in \mathcal{J}$ such that
\begin{itemize}
\item Every feasible solution of $I$ is a feasible solution of $J$ and vice versa. 
\item If $X$ is a feasible solution of $I$, then $v(X)=v'(X)$ and $\mathbf{c}(X) = \mathbf{c'}(X)$. In particular, $\opt(J)=\opt(I)$.
\end{itemize}
\end{lemma}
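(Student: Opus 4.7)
The plan is to handle the case where multiple agents in $A$ have $c_i>B$ by collapsing all such agents into a single ``super-agent'', while carefully defining a new valuation function on the reduced ground set so that symmetry and submodularity are preserved. As observed just before the lemma, simply deleting over-budget agents cannot work, because it may destroy symmetry (e.g., for a cut function it would delete nodes). Let $H=\{i\in A : c_i>B\}$; if $|H|\le 1$ we set $J=I$ and are done, so assume $|H|\ge 2$. Let $A'=(A\mysetminus H)\cup\{h\}$ for a fresh element $h$, put $c'_i=c_i$ for $i\in A\mysetminus H$, $c'_h=B+1$, and define a lifting
\[
\phi:2^{A'}\to 2^A,\qquad \phi(S)=\begin{cases} S, & h\notin S,\\ (S\mysetminus\{h\})\cup H, & h\in S,\end{cases}
\]
and set $v'(S)=v(\phi(S))$. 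Intuitively, ``buying'' the super-agent $h$ means buying all of $H$ at once.

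The first key step will be to verify that $v'$ is symmetric submodular. For symmetry, a direct case check shows $\phi(A'\mysetminus S)=A\mysetminus\phi(S)$, so $v'(A'\mysetminus S)=v(A\mysetminus\phi(S))=v(\phi(S))=v'(S)$ by symmetry of $v$. For submodularity, fix $S\subseteq T\subseteq A'$ and $i\in A'\mysetminus T$; I would compare the marginals $v'(S\cup\{i\})-v'(S)$ and $v'(T\cup\{i\})-v'(T)$ in two cases. If $i\ne h$, then $\phi$ simply adds $i$ on each side, so both marginals coincide with the corresponding marginals of $v$ at $\phi(S)\subseteq\phi(T)$, and submodularity of $v$ yields the inequality. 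If $i=h$ (so $h\notin T$), the marginal equals $v(\phi(S)\cup H)-v(\phi(S))$ on one side and $v(\phi(T)\cup H)-v(\phi(T))$ on the other; a telescoping argument over a fixed enumeration of $H$ and repeated application of submodularity of $v$ gives $v(\phi(S)\cup H)-v(\phi(S))\ge v(\phi(T)\cup H)-v(\phi(T))$. This second case is the only mildly delicate step, since a single marginal on the $J$ side corresponds to adding many elements on the $I$ side simultaneously.

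The second step is to match feasible solutions. Any feasible $X$ in $I$ satisfies $X\cap H=\emptyset$ (each element of $H$ alone already violates the budget), so $X\subseteq A\mysetminus H\subseteq A'$, and since $\phi$ is the identity on such sets, $v'(X)=v(X)$ and $\mathbf{c}'(X)=\mathbf{c}(X)$, hence $X$ is feasible in $J$ with the same value. Conversely, any feasible $X$ in $J$ excludes $h$ because $c'_h=B+1>B$, so again $X\subseteq A\mysetminus H$ and the same equalities hold. This immediately gives $\opt(J)=\opt(I)$. Finally, the construction is efficient: the ground set and cost vector are obtained in linear time, and value queries to $v'$ are simulated by a single query to $v$ after evaluating $\phi$. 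Since $h$ is the only element of $A'$ with cost greater than $B$, $J\in\mathcal{J}$, completing the reduction.

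The main obstacle I anticipate is the submodularity check in the case $i=h$: one has to verify that ``adding $H$ in one shot'' preserves the diminishing-returns inequality, which requires iterating submodularity of $v$ along a chosen ordering of $H$. Everything else (symmetry, the feasibility correspondence, and efficiency) is essentially bookkeeping.
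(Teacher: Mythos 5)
Your construction is exactly the one in the paper: merge the set $H$ of over-budget agents into a single super-agent of cost $B+1$, define $v'$ by evaluating $v$ after substituting $H$ back in for that super-agent, and observe that budget-feasible sets on either side avoid all over-budget elements, so values, costs, and hence optima coincide. You actually go a step further than the paper's written proof by explicitly checking that $v'$ remains symmetric and submodular (including the telescoping argument for the $i=h$ marginal), a detail the paper leaves implicit but which is needed for $J$ to lie in $\mathcal{J}\subseteq\mathcal{I}$; your verification is correct.
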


\begin{proof}
Let $E=\{i\in A\ |\ c_i>B\}$ be the set of expensive agents and define $A'=(A\mysetminus E)\cup \{i_E\}$, where $i_E$ is a new agent replacing the whole set $E$. For $i\in A\mysetminus E$ we define $c'_i=c_i$, while $c'_{i_E}=B+1$. Finally, $v'$ is defined as follows
\begin{displaymath}
v'(T)= \left\{ \begin{array}{ll}
v(T)\,, & \text{if\ \  } T\subseteq A\mysetminus E\\
v((T\mysetminus \{i_E\}) \cup E)\,, & \text{otherwise}
\end{array} \right.
\end{displaymath}

Now suppose $X$ is a budget-feasible solution of $I$. Then $\mathbf{c}(X)\le B$ and thus $X\subseteq A\mysetminus E$. But then, by the definition of $\mathbf{c}'$, $\mathbf{c}'(X)=\mathbf{c}(X)\le B$ as well, and therefore $X$ is also a budget-feasible solution of $J$. Moreover, $v'(X)=v(X)$ by the definition of $v'$. We conclude that $\opt(I)\le\opt(J)$.

The proof that  every feasible solution of $J$ is a feasible solution of $I$ is almost identical. This implies $\opt(J)\le\opt(I)$, and therefore $\opt(J)=\opt(I)$.
\end{proof}

Now, it is not hard to see that we can turn any algorithmic result on $\mathcal{J}$ to the same algorithmic result on $\mathcal{I}$. However, we need a somewhat stronger statement to take care of issues like truthfulness and budget-feasibility. This is summarized in the following corollary.

\begin{corollary}\label{cor:Costs}
Given a (polynomial time) algorithm $\textsc{alg}'$ that achieves a $\rho$-approximation on instances in $\mathcal{J}$, we can efficiently construct a (polynomial time) $\rho$-approximation algorithm $\textsc{alg}$ that works for all instances in $\mathcal{I}$. Moreover, if $\textsc{alg}'$ is monotone and  budget-feasible on instances in $\mathcal{J}$, assuming Myerson's threshold payments, then $\textsc{alg}$ is monotone and budget-feasible on instances in $\mathcal{I}$.
\end{corollary}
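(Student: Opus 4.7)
The plan is to define $\textsc{alg}(I)$ as follows: given an instance $I = (A, v, \mathbf{c}, B) \in \mathcal{I}$, build the companion instance $J = (A', v', \mathbf{c}', B) \in \mathcal{J}$ using the construction of Lemma~\ref{lem:Costs}, run $\textsc{alg}'(J)$, and return the resulting set (viewed as a subset of $A$). The first easy observation is that since $c'_{i_E} = B+1 > B$, budget-feasibility of $\textsc{alg}'$ on $J$ guarantees $i_E \notin \textsc{alg}'(J)$, so the output is automatically a subset of $A \mysetminus E \subseteq A$. Polynomial running time is immediate from the fact that the construction of $J$ takes linear time.

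For the approximation guarantee, I would invoke Lemma~\ref{lem:Costs}: feasible solutions of $J$ coincide with feasible solutions of $I$, preserve values, and $\opt(J) = \opt(I)$. Hence if $v'(\textsc{alg}'(J)) \ge \opt(J)/\rho$, the same set satisfies $v(\textsc{alg}(I)) \ge \opt(I)/\rho$.

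For monotonicity, I would split on the declared cost of a winning agent $i$. If $i$ declares $b_i > B$, then $i \in E$ in the constructed instance and is absorbed into $i_E$, so $i \notin \textsc{alg}(b_i, \mathbf{b}_{-i})$; thus the monotonicity condition is vacuous on this range. If $i$ declares $b_i \le B$, then $i \notin E$ and $i$ appears as a regular agent in $J$ with cost $b_i$; any decrease to $b'_i \le b_i$ keeps $i$ outside $E$, so the only change in $J$ is that $i$'s cost drops from $b_i$ to $b'_i$ while all other agents (including $i_E$) remain untouched. Monotonicity of $\textsc{alg}'$ on $J$ then gives that $i$ is again selected.

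For budget-feasibility under Myerson payments, I would compare the threshold payments in $I$ and $J$ agent by agent. For $i \in E$, $i$ is never selected, so $p_i^{\textsc{alg}}(\mathbf{b}) = 0$. For $i \in A \mysetminus E$, the Myerson threshold is $p_i^{\textsc{alg}}(\mathbf{b}) = \sup\{b : i \in f(b, \mathbf{b}_{-i})\}$; since $i$ is pushed into $E$ once $b > B$, this supremum is attained on $b \le B$, and on that range the constructed $J$-instance uses precisely the same agent set $A'$ and the same costs on $A' \mysetminus \{i\}$ as the $J$-instance used when computing $p_i^{\textsc{alg}'}$ in $J$. Hence $p_i^{\textsc{alg}}(\mathbf{b}) \le p_i^{\textsc{alg}'}(\mathbf{b}')$ where $\mathbf{b}'$ is the corresponding cost profile on $A'$. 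Since $p_{i_E}^{\textsc{alg}'} = 0$ as $i_E$ is never selected, summing yields $\sum_{i \in A} p_i^{\textsc{alg}} \le \sum_{j \in A'} p_j^{\textsc{alg}'} \le B$.

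The main technical subtlety is the payment comparison: one must notice that varying a single agent's cost in $I$ can in principle alter the set $E$ and therefore change the derived instance $J$, which would invalidate a direct pointwise comparison of thresholds. The reason the argument goes through is that the threshold supremum for $i$ in $I$ is effectively capped at $B$ (since $b > B$ forces $i$ into $E$ and thus out of the output), and for $b \le B$ the two $J$-instances coincide on $A' \mysetminus \{i\}$, making the coupling with $\textsc{alg}'$'s threshold payments clean.
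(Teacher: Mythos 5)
Your proof is correct and follows essentially the same route as the paper's: reduce $I$ to $J$ via Lemma~\ref{lem:Costs}, run $\textsc{alg}'$, and transfer feasibility, value, monotonicity, and threshold payments across the correspondence. You are somewhat more explicit than the paper about why $i_E$ is never output and why the Myerson threshold for any $i\in A\mysetminus E$ is effectively capped at $B$ (so that varying $b_i$ never changes the derived set $E$), but the underlying argument is the same.
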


\begin{proof}
%

The description of $\textsc{alg}$ is quite straightforward. Given an instance $I=(A, v, \mathbf{c}, B)\in \mathcal{I}$, $\textsc{alg}$ first constructs instance $J=(A', v', \mathbf{c}', B)\in \mathcal{J}$, as described in the proof of Lemma \ref{lem:Costs}. Then $\textsc{alg}$ runs  $\textsc{alg}'$ with input $J$ and returns its output. Clearly, if $\textsc{alg}'$ runs in polynomial time, so does $\textsc{alg}$.

If $X=\textsc{alg}'(J)=\textsc{alg}(I)$, then $X$ is feasible with respect to $J$ and $\opt(J)\le \rho\cdot v'(X)$. By Lemma \ref{lem:Costs} we get that $X$ is feasible with respect to $I$ and $\opt(I) = \opt(J) \le \rho\cdot v'(X) = \rho\cdot v(X)$. This establishes the approximation ratio of $\textsc{alg}$. 

Next, assume that $\textsc{alg}'$ is monotone and---assuming Myerson's threshold payments---budget-feasible on instances in $\mathcal{J}$. Suppose that agent $j\in \textsc{alg}(I)$ reduces his cost from $c_j$ to $b_j < c_j$. This results in a new instance $I_*=(A, v, (b_j, \mathbf{c}_{-j}), B)\in \mathcal{I}$. Since it must be the case where $c_j\le B$, the corresponding instance of $\mathcal{J}$ is $J_*=(A', v', (b_j, \mathbf{c}'_{-j}), B)$. 
Due to the monotonicity of $\textsc{alg}'$ we have 
\[j\in \textsc{alg}(I) = \textsc{alg}'(J) \Rightarrow  j\in \textsc{alg}'(J_*) = \textsc{alg}(I_*) \,, \]
and therefore $\textsc{alg}$ is monotone as well. 

The budget-feasibility of $\textsc{alg}$ follows from the budget-feasibility of $\textsc{alg}'$ by observing that $i\in \textsc{alg}(A, v,\allowbreak (b_j, \mathbf{c}_{-j}), B)$ if and only if $i\in \textsc{alg}'(A', v', (b_j, \mathbf{c}_{-j}), B)$ 
for all $i\in A'$.
\end{proof}

\subsection{Regarding Remark \ref{rem:n-sub}}\label{app:counterexample}

We use the cut function on a very simple graph and show that although $v$ is submodular, $\hat{v}$ is not. Consider the following graph where each edge has unit weight:
\begin{center}
{\scalebox{0.255} {\includegraphics{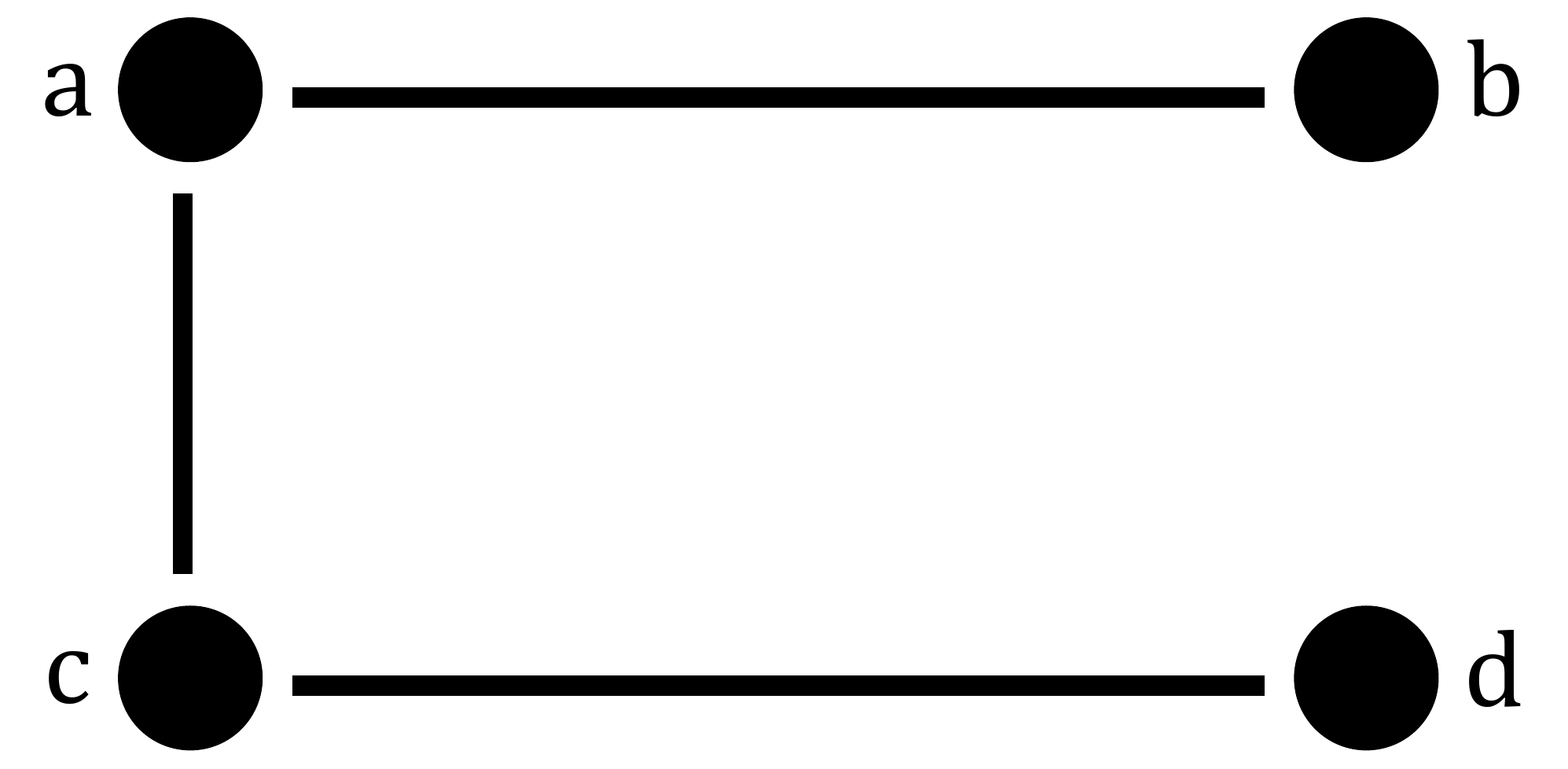}}}
\end{center}
We compute the value of the following sets:
\begin{itemize}
\item[ ] $\hat{v}(\{a\})=v(\{a\})=2$
\item[ ] $\hat{v}(\{a, b\})=v(\{a\})=2$
\item[ ] $\hat{v}(\{a, c\})=v(\{a\})=v(\{c\})=v(\{a, c\})=2$
\item[ ] $\hat{v}(\{a, b, c\})=v(\{b, c\})=3$
\end{itemize}
Now it is easy to see that although $\{a\} \subseteq \{a, c\}$ we have  
$\hat{v}(\{a\}\cup \{b\})-\hat{v}(\{a\})=0 < 1= \hat{v}(\{a, c\}\cup\{b\})-\hat{v}(\{a, c\})$.

So an interesting question is if $\hat{v}$ can be classified when $v$ is submodular. In \cite{GuptaNS17} the general XOS class was introduced, where it is allowed for a function to be non-monotone (recall here that the XOS class contains only non-decreasing functions  by definition). They proved that when $v$ is  general XOS then $\hat{v}$ is XOS, while in addition they observed that the class of non-negative submodular functions is a strict subset of the general XOS class. Thus since any non-negative submodular function $v$ is also general XOS we conclude $\hat{v}$ is XOS. 

\section{Proof of Theorem \ref{thm:alg_ssm}}
\label{app:alg_ssm}
To facilitate the exposition of the analysis, we restate algorithm \nameref{fig:alg-greedy-enum}, so that instead of the set variable $T$, we have a variable $S^t$ that describes the set constructed at iteration $t$. 
\smallskip

\begin{algorithm}[H]
	\DontPrintSemicolon 
	\NoCaptionOfAlgo
	\algotitle{\textsc{Var-Greedy-Enum-SM}}{fig:var-alg-greedy-enum.title}
	Let $S_1$ be the best feasible solution of cardinality at most 3 (by enumerating them all) \;
	$S_2=\emptyset$ \;
	\For{every $U\subseteq A$ with $|U|=3$}{
		$S^0=U,\ t=1,\ A^0=A\mysetminus U$ \;
		\While{$A^{t-1} \neq \emptyset$}{
			Find $\theta_t = \displaystyle{\max_{i\in A^{t-1}} \frac{v(S^{t-1}\cup\{i\})-v(S^{t-1})}{c_i}}$,  and let $i_t$ be an  element of $A^{t-1}$ that attains $\theta_t$ \;
			\If{$\theta_t\ge 0$ and $\sum_{i\in S^{t-1}\cup\{i_t\}}c_i \le B$}{$S^t = S^{t-1}\cup\{i_t\}$}
			\Else{$S^t = S^{t-1}$}
			$A^t = A^{t-1}\mysetminus\{i_t\}$ \;
			$t=t+1$ \;
		}
		\If{$v(S^{t-1})>v(S_2)$}{$S_2=S^{t-1}$}
	}
	\Return $S \in \displaystyle\argmax_{X\in \{S_1, S_2\}} v(X)$
	\caption{\textsc{Var-Greedy-Enum-SM}$(A, v, \mathbf{c}, B)$} \label{fig:var-alg-greedy-enum} 
\end{algorithm}\smallskip 

Recall that \nameref{fig:var-alg-greedy-enum} runs on both $S$ and $A\mysetminus S$ and \nameref{fig:alg-lsgreedy} returns the best solution of these two.
We may assume, without loss of generality that $\opt(S, B)= \max\{\opt(S, B), \opt(A\mysetminus S, B)\}$ (the case for $A\mysetminus S$ being symmetric).
By Fact \ref{fact:simple-facts} we have $\opt(S, B) \ge 0.5 \opt(A, B)$.
So, it suffices to show  that running \nameref{fig:var-alg-greedy-enum} on $S$ outputs a set of value at least $(1-1/e-\epsilon)\opt(S, B)$.

In what follows we analyze the approximation ratio achieved by \nameref{fig:var-alg-greedy-enum}($S, v, \mathbf{c}, B$) with respect to $\opt(S, B)$. For this, we follow closely the proof of the main result in \cite{Sviridenko04}.

If there is an optimal solution for our problem restricted on $S$, of cardinality one, two or three, then the set $S_1$ of \nameref{fig:var-alg-greedy-enum.title} will be such a solution. Hence, assume that
the cardinality of any optimal solution is at least four and let $S^*$ be such a solution.
If necessary, reorder the elements of $S^* = \{j_1,\ldots, j_{|S^*|}\}$ so that 
$j_{1} = \arg\max_{\ell} v(\{j_{\ell}\})$, and
$j_{k+1} = \arg\max_{\ell > k} \left[ v(\{j_1, \ldots, j_k, j_{\ell}\}) - v(\{j_1, \ldots, j_k\}) \right]$. 

Let $Y=\{j_1, j_2, j_3\}$. For notational convenience, we will use the function $g(\cdot) = v(\cdot) -v(Y)$. It is straightforward that $g(\cdot)$ is submodular. Moreover, the following fact follows from \cite{Sviridenko04}.
\begin{fact}\label{fact:g}
	$g(X\cup \{i\}) - g(X) \le \frac{1}{3} v(Y)$, for any $Y \subseteq X \subseteq S$ and $i\in S^*\mysetminus X$.
\end{fact}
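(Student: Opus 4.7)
The plan is to reduce the claim to a statement about the first four marginal gains in the greedy ordering of $S^*$, and then combine two simple facts: submodularity and the maximality built into the greedy choice of each $j_k$.

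\textbf{Step 1: Reduce from $X$ to $Y$ via submodularity.} Since $g = v - v(Y)$ is a constant shift of $v$, we have $g(X\cup\{i\}) - g(X) = v(X\cup\{i\})-v(X)$. Using $Y \subseteq X$ and $i \notin X$ (so in particular $i \notin Y$), submodularity of $v$ yields
\[
v(X\cup\{i\}) - v(X) \le v(Y\cup\{i\}) - v(Y).
\]
Thus it suffices to bound the marginal contribution of an arbitrary $i \in S^*\mysetminus X \subseteq S^*\mysetminus Y = \{j_4,\ldots,j_{|S^*|}\}$ to the set $Y=\{j_1,j_2,j_3\}$.

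\textbf{Step 2: Use the greedy definition of $j_4$.} By the construction of the ordering, $j_4$ is the element of $S^*\mysetminus Y$ that maximizes $v(Y\cup\{j_\ell\}) - v(Y)$. Hence for any $i \in S^*\mysetminus Y$,
\[
v(Y\cup\{i\}) - v(Y) \le v(Y\cup\{j_4\}) - v(Y) = \Delta_4,
\]
where I write $\Delta_k = v(\{j_1,\ldots,j_k\}) - v(\{j_1,\ldots,j_{k-1}\})$ for the marginal gain at step $k$ of the greedy chain.

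\textbf{Step 3: Show the marginals are non-increasing along the greedy chain.} The goal is $\Delta_1 \ge \Delta_2 \ge \Delta_3 \ge \Delta_4$, which then implies $v(Y) = \Delta_1+\Delta_2+\Delta_3 \ge 3\Delta_4$ and completes the proof via Steps~1--2. For the base case, $\Delta_1 = v(\{j_1\}) = \max_\ell v(\{j_\ell\}) \ge v(\{j_2\}) \ge v(\{j_1,j_2\}) - v(\{j_1\}) = \Delta_2$, where the last step is submodularity of $v$. For the inductive step $\Delta_k \ge \Delta_{k+1}$ with $k\in\{2,3\}$: the greedy choice of $j_k$ against the candidate $j_{k+1}$ gives
\[
v(\{j_1,\ldots,j_{k-1},j_{k+1}\}) - v(\{j_1,\ldots,j_{k-1}\}) \le \Delta_k,
\]
and submodularity of $v$ yields $\Delta_{k+1} \le v(\{j_1,\ldots,j_{k-1},j_{k+1}\}) - v(\{j_1,\ldots,j_{k-1}\})$. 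Chaining the two inequalities gives $\Delta_{k+1}\le \Delta_k$.

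The main technical point to get right is Step~3: the monotonicity of the $\Delta_k$'s does not assume any sign on the marginals (they may be negative, since $v$ is not monotone on $S^*$), only that each $j_k$ was chosen greedily and that $v$ is submodular. Everything else is a one-line application of submodularity or of the definition of the ordering, so no further obstacles are expected.
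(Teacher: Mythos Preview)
Your proof is correct. The paper itself does not prove this fact but merely cites \cite{Sviridenko04}; your argument is precisely the standard one underlying that reference, combining submodularity with the greedy ordering to get the non-increasing marginal chain $\Delta_1\ge\Delta_2\ge\Delta_3\ge\Delta_4$ and hence $v(Y)=\Delta_1+\Delta_2+\Delta_3\ge 3\Delta_4$.
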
 

Consider the execution of the greedy algorithm with initial set $U=Y$. 
Let $t^*+1$ be the first time when an element $i_{t^* +1} \in S^*$ is not added to $S^{t^*}$. 
In fact,  we assume that $t^* + 1$ is the first time when $S^t = S^{t-1}$.
(To see that this is without loss of generality, if there is some time $\tau < t^* + 1$ such that $i_{\tau}$ is not added to $S^{\tau -1}$, then---by the definition of $t^*+1$---it must be the case that $i_{\tau} \notin S^*$. But then, we may consider the instance $(S\mysetminus\{i_\tau\}, v, \mathbf{c}_{S\mysetminus\{i_\tau\}}, B)$ instead. We have $v(S^*) = \opt(S\mysetminus\{i_\tau\}, B)=\opt(S, B)$ and the greedy solution constructed in the iteration where $S_0 = Y$ is exactly the same as before.)
We are going to distinguish two cases.
\smallskip

\noindent \textbf{Case 1.} For all $t\in [t^*], \theta_t \ge 0$, but $\theta_{t^*+1} < 0$. Using Theorem \ref{thm:alt_SM} for $S^*$ and  $S^{t^*}$ we have 
\begin{IEEEeqnarray*}{rCl}
g(S^*) & \le & g(S^{t^*}) + \sum_{i \in S^* \mysetminus S^{t^*}} (g(S^{t^*}\cup\{i\}) - g(S^{t^*})) - \sum_{i \in S^{t^*} \mysetminus S^*} (g(S^{t^*}\cup S^*) - g(S^{t^*}\cup S^* \mysetminus \{i\})) \\
 & = & g(S^{t^*}) + \sum_{i \in S^* \mysetminus S^{t^*}} (v(S^{t^*}\cup\{i\}) - v(S^{t^*})) - \sum_{i \in S^{t^*} \mysetminus S^*} (v(S^{t^*}\cup S^*) - v(S^{t^*}\cup S^* \mysetminus \{i\})) \\
 & \le & g(S^{t^*}) + \sum_{i \in S^* \mysetminus S^{t^*}} c_i \theta_{{t^*}+1} - |S^{t^*} \mysetminus S^*| \left( - \frac{\epsilon}{n} \opt(S, B)\right) \\
 & \le & g(S^{t^*}) +  \epsilon \opt(S, B) \,,
\end{IEEEeqnarray*}
Here, the second to last inequality holds by Lemma \ref{lem:almost-monotone} and by the assumptions we have made. That is, for every 
$i\in S^* \mysetminus S^{t^*}$, we have that $i\in A^{t^*}$, since we assumed that $t^*+1$ is the first time when $S^t = S^{t-1}$, hence up until time $t^*$, $A^{t^*}$ contains all the agents apart from $S^{t^*}$. This implies that for every $i\in S^* \mysetminus S^{t^*}$, we have that 
$v(S^{t^*}\cup\{i\}) - v(S^{t^*}) \leq c_i\theta_{t^*+1}$, by the definition of $\theta_{t^*+1}$.

Therefore, we can conclude that
\[v(S^{t^*})  =  v(Y)+ g(S^{t^*}) \ge v(Y)+ g(S^{*}) - \epsilon \opt(S, B) = (1-\epsilon)\opt(S, B) \,.\]

\noindent \textbf{Case 2.} For all $t\in [t^*+1], \theta_t \ge 0$, but $\sum_{i\in S^{t^*}\cup\{i_{t^*+1}\}}c_i > B$ while $\sum_{i\in S^{t^*}}c_i \le B$.
Using Theorem \ref{thm:alt_SM} for $S^*$ and each of $S^t$, $t\in [t^*]$, as well as Lemma \ref{lem:almost-monotone}, we have 
\begin{IEEEeqnarray*}{rCl}
g(S^*) & \le & g(S^t) + \sum_{i \in S^* \mysetminus S^t} (g(S^t\cup\{i\}) - g(S^t)) - \sum_{i \in S^t \mysetminus S^*} (g(S^t\cup S^*) - g(S^t\cup S^* \mysetminus \{i\})) \\
 & = & g(S^t) + \sum_{i \in S^* \mysetminus S^t} (v(S^t\cup\{i\}) - v(S^t)) - \sum_{i \in S^t \mysetminus S^*} (v(S^t\cup S^*) - v(S^t\cup S^* \mysetminus \{i\})) \\
 & \le & g(S^t) + \sum_{i \in S^* \mysetminus S^t} (v(S^t\cup\{i\}) - v(S^t)) - |S^t \mysetminus S^*| \left( - \frac{\epsilon}{n} \opt(S, B)\right) \\
 & \le & g(S^t) + \sum_{i \in S^* \mysetminus S^t} (v(S^t\cup\{i\}) - v(S^t)) + \epsilon \opt(S, B) \,,
\end{IEEEeqnarray*}
and therefore
\begin{IEEEeqnarray*}{rCl}
g(S^*) -\epsilon \opt(S, B) & \le & g(S^t) + \sum_{i \in S^* \mysetminus S^t} (v(S^t\cup\{i\}) - v(S^t))  \\
 & \le & g(S^t) + \sum_{i \in S^* \mysetminus S^t} c_i \theta_{t+1}  \\
 & \le & g(S^t) + \bigg( B - \sum_{i\in Y} c_i \bigg)\theta_{t+1} \,,
\end{IEEEeqnarray*}
for all $t\in [t^*]$.

For the last part of the proof we need the following inequality of Wolsey \cite{Wolsey82}.

\begin{theorem}[Wolsey \cite{Wolsey82}] \label{thm:ineq}
Let $k$ and $s$ be arbitrary positive integers, and $\rho_1, \ldots, \rho_k$ be arbitrary reals with $z_1= \sum_{i=1}^k \rho_i$ and $z_2=\min_{t\in [k]} \big( \sum_{i=1}^{t-1} \rho_i + s \rho_t \big)   >0$. Then $z_1 / z_2 \ge 1 - (1-1/s)^k \ge 1- e^{-\frac{k}{s}}$. 
\end{theorem}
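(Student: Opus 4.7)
The plan is a clean induction on partial sums. Define $S_t = \sum_{i=1}^t \rho_i$ for $t = 0, 1, \ldots, k$, so $S_0 = 0$ and $S_k = z_1$. The hypothesis $z_2 = \min_{t\in[k]}\bigl(\sum_{i=1}^{t-1}\rho_i + s\rho_t\bigr)$ translates into the pointwise constraint $S_{t-1} + s\rho_t \ge z_2$ for every $t \in [k]$. Substituting $\rho_t = S_t - S_{t-1}$ and rearranging gives the linear recursive inequality
\[
S_t \;\ge\; \frac{z_2}{s} + \Bigl(1 - \frac{1}{s}\Bigr)\, S_{t-1}, \qquad t \in [k].
\]
Note that since $s$ is a positive integer, $1 - 1/s \ge 0$, so multiplying a lower bound on $S_{t-1}$ by $1 - 1/s$ preserves the direction of the inequality. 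This is the only place where the integrality (or at least $s\ge 1$) of $s$ enters, and it is the one subtle sign check worth flagging.

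Next, I would prove by induction on $t$ that $S_t \ge z_2\bigl(1 - (1-1/s)^t\bigr)$. The base case $t=0$ is immediate since $S_0 = 0 = z_2(1-1)$. For the inductive step, chain the recursion with the hypothesis:
\[
S_t \;\ge\; \frac{z_2}{s} + \Bigl(1 - \frac{1}{s}\Bigr)\, z_2\bigl(1 - (1-1/s)^{t-1}\bigr)
\;=\; z_2\Bigl(1 - (1-1/s)^t\Bigr),
\]
after collecting terms. Setting $t = k$ yields $z_1 = S_k \ge z_2\bigl(1 - (1-1/s)^k\bigr)$, and dividing by $z_2 > 0$ (permitted by the standing assumption) gives the first inequality $z_1/z_2 \ge 1 - (1-1/s)^k$.

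For the second inequality, use the standard bound $1 - x \le e^{-x}$ (valid for all real $x$) with $x = 1/s$, raise to the $k$-th power to get $(1-1/s)^k \le e^{-k/s}$, and subtract from $1$. Combining with the previous display yields $z_1/z_2 \ge 1 - e^{-k/s}$. I do not anticipate a real obstacle: the whole argument is a two-line telescoping induction plus the elementary exponential bound; the only care needed is (i) reducing the minimum constraint to the pointwise linear recursion and (ii) using $s \ge 1$ to keep the monotonicity of the inductive step intact.
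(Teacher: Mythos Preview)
Your proof is correct. The induction on the partial sums $S_t$ is the standard way to establish Wolsey's inequality, and you have handled the one delicate point (the sign of $1-1/s$, which requires $s\ge 1$) explicitly.

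Note, however, that the paper does not actually prove this statement: Theorem~\ref{thm:ineq} is quoted from Wolsey~\cite{Wolsey82} and invoked as a black box in the analysis of \textsc{Greedy-Enum-SM}. So there is no ``paper's own proof'' to compare against; you have simply supplied a clean self-contained argument for a result the authors chose to cite rather than reprove.
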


For any $\tau$, define $B_{\tau}=\sum_{t=1}^{\tau} c_{i_t}$, and let $k= B_{t^*+1}$ and $s= B - \sum_{i\in Y} c_i $. We also define $\rho_1, \ldots, \rho_k$ as follows; for $i\le c_1, \rho_i=\theta_1$ and for $B_{\tau} < i \le B_{\tau+1}, \rho_i=\theta_{\tau+1}$. It is easy to see that $g(S^{t^*}\cup\{i_{t^*+1}\}) = \sum_{t=1}^{t^*+1} c_{i_t} \theta_{i_t} = \sum_{i=1}^k \rho_i$ and similarly $g(S^{t}) = \sum_{t=1}^{\tau} c_{i_t}\theta_{i_t} = \sum_{i=1}^{B_{\tau}} \rho_i$. 

As noted in \cite{Sviridenko04}, since the $\rho_i$s are nonnegative we have
\[\min_{t\in [k]} \left(  \sum_{i=1}^{t-1} \rho_i + s \rho_t \right)  = \min_{\tau\in [t^*]} \left(  \sum_{i=1}^{B_{\tau}} \rho_i + s \rho_{B_{\tau}+1} \right) \,,\]
and therefore
\[g(S^*) -\epsilon \opt(S, B)\le \min_{t\in [k]} \left(  \sum_{i=1}^{t-1} \rho_i + s \rho_t \right) \,.\]
So, as a direct application of Theorem \ref{thm:ineq} we have
\[\frac{g(S^{t^*}\cup\{i_{t^*+1}\})}{g(S^*) -\epsilon \opt(S, B)} \ge 1- e^{-\frac{k}{s}} > 1- e^{-1} \,.\]
Finally, using the above inequality and Fact \ref{fact:g}, we get
\begin{IEEEeqnarray*}{rCl}
v(S^{t^*}) & = & v(Y)+ g(S^{t^*}) =  v(Y)+ g(S^{t^*}\cup\{i_{t^*+1}\})- (g(S^{t^*}\cup\{i_{t^*+1}\}) - g(S^{t^*}))  \\
 & \ge & v(Y) + (1- e^{-1}) g(S^*) - (1- e^{-1}) \epsilon \opt(S, B) - (v(S^{t^*}\cup\{i_{t^*+1}\}) - v(S^{t^*})) \\
 & \ge & v(Y) + (1- e^{-1}) g(S^*) - \epsilon \opt(S, B) - \frac{1}{3}v(Y) \\
 & \ge & (1- e^{-1}- \epsilon) \opt(S, B) \,.
\end{IEEEeqnarray*}
 
Since in both cases the final output $T^*$ of the algorithm has value at least $v(S^{t^*})$, this implies that
\[v(T^*) \ge  (1- e^{-1}- \epsilon) \opt(S, B) \ge \frac{1- e^{-1}- \epsilon}{2} \opt(A, B) > \Big(\frac{e-1}{2e} - \epsilon \Big) \opt(A, B) \,,\]
thus concluding the analysis of the performance of the algorithm.
\qed

\section{Missing Proofs from Section \ref{sec:exp_ssm}}
\label{app:bumcut}

\begin{proofof}{Proof of Theorem \ref{thm:mechSM-frac}}
Monotonicity (and therefore truthfulness and individual rationality) and budget-feasibility of the  mechanism directly follow from \cite{ChenGL11} and \cite{ABM16}. Next we show  the approximation guarantee.
As with the proof of Theorem \ref{thm:mechSM}, Lemma \ref{lem:greedy_approximation_jt} is crucial.

Let $S$  denote the outcome of \nameref{fig:alg-2}$(A, v, \mathbf{c}, B/2)$.
We also consider two cases. Let $\alpha = \rho +1 +\sqrt{\rho^2 +4\rho +1}$.
\begin{itemize}
	\item If $i^*$ is returned, then 
	$\alpha \cdot v(i^*) \ge \opt_f(A'\mysetminus\{i^*\},   B) \ge \opt(A'\mysetminus\{i^*\},   B) = \opt(A\mysetminus\{i^*\},   B) \ge \opt(A,   B) - v(i^*)$, 
	and therefore $\opt(A,   B)\le (\alpha +1) \cdot v(i^*)$. 
	\item On the other hand, if  $S$ is returned, then
	$\alpha \cdot v(i^*) < \opt_f(A'\mysetminus\{i^*\},   B) \le \rho\cdot \opt(A'\mysetminus\{i^*\},   B) \le \rho\cdot \opt(A,   B)$. 
	Combining this with Lemma \ref{lem:greedy_approximation_jt} we have 
	$\opt(A,  B) \le 3\cdot v(S)+\frac{2\rho}{\alpha}\opt(A,   B)$ and therefore $\opt(A,  B) \le \frac{3\alpha}{\alpha -2\rho}\cdot v(S) = (\alpha +1)  \cdot v(S)$, where the last equality is only a matter of calculations.\qed
\end{itemize}
\end{proofof}
\bigskip

\begin{proofof}{Proof of Theorem \ref{lem:bwmcut}}
The  proof takes the same approach as the proof of Lemma 3 from \cite{ABM16}. 
We begin with a nonlinear program such that if all the $x_i$s are integral then the objectives \eqref{eq:bwmcut1} and \eqref{eq:nlbwmcut1} have the same value.
\begin{IEEEeqnarray}{rCl}
\mbox{maximize:\ \ } & & F(x) = \sum_{i\in [n]}\sum_{j\in [n]\mysetminus[i]} w_{ij} (x_i +x_j - 2x_i x_j)   \label{eq:nlbwmcut1}\\
\mbox{subject to:\ \ }& & \sum_{i\in [n]} c_i x_i \le B \label{eq:nlbwmcut2} \\
& & x_i = 0\ ,\quad  \forall i\in [n]\mysetminus X\\
& & 0\le x_i \le 1\ , \quad \forall i\in [n] \label{eq:nlbwmcut3}
\end{IEEEeqnarray}
So, if $x$ is any feasible integral vector to our problem, we have $F(x)\le \opt(I)$. Moreover, given any feasible solution $x, z$ to \eqref{eq:bwmcut1}-\eqref{eq:bwmcut5}, the value of \eqref{eq:bwmcut1} is upper bounded by $L(x)=  \sum_{i\in [n]}\sum_{j\in [n]\mysetminus[i]} w_{ij}  \min\{x_i +x_j, 2-x_i -x_j \}$, since $z_{ij} \le \min\{x_i +x_j, 2-x_i -x_j \}$ for any $i\in [n], j \in [n] \mysetminus [i]$. 

Next, we show that $F(x)\ge 0.5 L(x)$. This follows from the inequality 
\[2(a +b - 2ab) \ge \min\{ a+b, 2-a -b \} \text{ for any }a, b\in [0,1]\,, \]
proven by Ageev and Sviridenko \cite{AgeevS99}. For completeness we prove it here as well. Notice that by replacing $a$ and $b$ by $1-a$ and $1-b$ respectively both sides of the inequality remain exactly the same. Therefore, it suffices to prove 
$2(a +b - 2ab) \ge a + b $ for any $a, b\in [0,1]$ such that $a+b \le 1$ (since otherwise, $1-a + 1-b \leq 1$). This however is equivalent to $a+b\ge 4ab$ which is true since $a+b\ge (a+b)^2 \ge 4ab$.

Hence, if $x^*, z^*$ is an optimal fractional solution to \eqref{eq:bwmcut1}-\eqref{eq:bwmcut5}, then the value of \eqref{eq:bwmcut1} is 
$\opt_f(I) = L(x^*)$ and thus $F(x^*)\ge 0.5L(x^*) = 0.5 \opt_f(I)$.
However, $x^*$ may have several fractional coordinates. Our next step is to transform $x^*$ to a vector $x'$ that has at most one fractional coordinate and at the same time $F(x')\ge F(x^*)$. To this end, we show how to reduce the fractional coordinates by (at least) one in any feasible vector with at least two such coordinates.

Consider a feasible vector $x$, and suppose $x_i$ and $x_j$ are two non integral coordinates. Note that $i, j\in X$. Let $x^{i,j}_{\epsilon}$ be the vector we get if we replace $x_i$ by $x_i+\epsilon$ and $x_j$ by  $x_j - \epsilon c_i / c_j$ and leave every other coordinate of $x$ the same. Note that the function $\bar{F}(\epsilon)=F(x^{i,j}_{\epsilon})$, with respect to $\epsilon$, is either linear or a polynomial of degree 2 with positive leading coefficient. That is, $\bar{F}(\epsilon)$ is convex. 
	
Notice now that $x^{i,j}_{\epsilon}$ always satisfies the budget constraint \eqref{eq:nlbwmcut2}, and also satisfies \eqref{eq:nlbwmcut3} as long as 
$\epsilon\in \left[ \max\left\lbrace -x_i, (x_j - 1)c_j / c_i\right\rbrace  , \min\left\lbrace 1-x_i, x_j c_j / c_i\right\rbrace \right]$.
Due to convexity, $\bar{F}(\epsilon)$ attains a maximum on one of the endpoints of this interval, say at $\epsilon^*$. Moreover, at either endpoint at least one of $x_i+\epsilon^*$ and $x_j - \epsilon^* c_i/c_j$ is integral. That is, $x^{i,j}_{\epsilon^*}$ has at least one more integral coordinate than $x$ and $F(x^{i,j}_{\epsilon^*})\ge F(x)$.
%
%

So, initially $x\leftarrow x^*$. As long as there exist two non integral coordinates $x_i$ and $x_j$ we set $x\leftarrow x^{i,j}_{\epsilon^*}$ as described above. This happens at most $n-1$ times, and outputs a feasible vector $x'$ with at most one non-integral coordinate, and with $F(x')\ge F(x^*)$. We have then the following implications:
\begin{equation}
\opt_f(I)  =  L(x^*) \le 2 \cdot F(x^*) \le 2 \cdot F(x')\,. \label{eq:cut-ineq1}
\end{equation}

\noindent If $x'$ is integral, then by \eqref{eq:cut-ineq1} we have 
$\opt_f(I)\le 2 \cdot F(x') \le  2 \cdot \opt(I)$,
and we are done. So, suppose that $x_r'$ is the only fractional coordinate of $x'$. Let $x^{0}$ and $x^{1}$ be the vectors we get if we set $x_r'$ to $0$ or $1$ respectively and leave every other coordinate of $x'$ the same. 
Notice that for $a, b\in [0,1]$ the inequality $(1-a)(b-1)\le 0$ implies $a+b-ab\le 1$ and therefore $a+b-2ab\le 1$, so we have
%
\begin{IEEEeqnarray*}{rCl}
	F(x')-F(x^{0}) & = & \sum_{j\in [n]\mysetminus r}w_{r j} (x_r' - 2x_r'  x_j')  \\
	& \le & \sum_{j\in [n]\mysetminus r}w_{r j} (x_r' +x_j' - 2x_r'  x_j')  \\
	& \le & \sum_{j\in [n]\mysetminus r}w_{r j} = F(x^{1}-x^{0}) \,,
\end{IEEEeqnarray*}
and thus 
\begin{equation}
F(x') \le F(x^{0}) + F(x^{1}-x^{0})\,. \label{eq:cut-ineq2}
\end{equation}
Combining \eqref{eq:cut-ineq1}and \eqref{eq:cut-ineq2}, we have  
\[\opt_f(I) \le 2 \cdot F(x') \le  2 \cdot  \left( F(x^{0}) + F(x^{1}-x^{0}) \right) \,, \]

Using the fact that $F$ is upper bounded by $\opt(I)$ on integral vectors, we have that $F(x^{0}) \leq \opt(I)$. 
Observe now also that $F(x^{1}-x^{0}) = \sum_{j\in [n]\mysetminus r}w_{r j} = v(r) \leq \beta_I \opt(I)$, by the definition of $\beta_I$.
Hence, overall we get
\[ \opt_f(I) \le (2 + 2\beta_I) \opt(I)\,,\]
thus completing the proof.
\qed
\end{proofof}
\section{Missing Material from Section \ref{sec:poly_ssm}}
\label{app:bwmcut}

\subsection{Going Beyond Non-Decreasing Objectives}
\label{app:proof_of_greedy}
Here we  pave the way for the main results of Section \ref{sec:poly_ssm}, and therefore we need to prove that certain mechanisms work even for objectives that are not exactly non-decreasing. To make this precise, given a ground set $A$, a budget $B$ and a constant $\epsilon \ge 0$, we say that a set function $v$ is \textit{$(B, \epsilon)$-quasi-monotone} (or just quasi-monotone) on a set $X\subseteq A$ if for every $T\subsetneq X$ and every $i \in X\mysetminus T$, we have $v(T\cup\{i\}) - v(T) \ge - \frac{\epsilon}{n} \opt(X,  B)$. Clearly, $(B, 0)$-quasi-monotone on $X$ just means non-decreasing on $X$.

The main lemmas needed for our proofs are about \nameref{fig:alg-2}, as it all boils down to the monotonicity, budget-feasibility, and approximation ratio of this simple mechanism. As mentioned in Lemma \ref{lem:GreedySM}, \nameref{fig:alg-2} is monotone, since any item out of the winning set remains out of the winning set if it increases its cost.

\begin{lemma}\label{lem:greedy_budget-feasible}
Suppose $v$ is a $(B, \epsilon)$-quasi-monotone submodular function on $A$ such that $U\cdot v(S)\ge \opt(A,  B)$, where $S$ is the set output by \nameref{fig:alg-2}$(A, v, \mathbf{c}, (1-U \epsilon)B/2)$ and $U$ is a constant.\footnote{Typically, in our case, $U$ is a constant associated with the constant $\rho$ that determines how the optimal solution to the relaxed problem is bounded by the optimal solution to the original problem. In particular, throughout this work, $U$ is upper bounded by $(1+\rho)\big( 2+\rho + \sqrt{\rho^2 +4 \rho +1}\big)+1$.}
Assuming the payments of Myerson's lemma, $S$ is budget-feasible.
\end{lemma}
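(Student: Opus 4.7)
The plan is to adapt the budget-feasibility argument of Chen et al.~\cite{ChenGL11} for \nameref{fig:alg-2}, carefully tracking the extra slack introduced by $(B,\epsilon)$-quasi-monotonicity. Let $B' = (1-U\epsilon)B/2$ denote the reduced budget that is actually passed to the greedy, and write $S = \{i_1,\dots,i_m\}$ for the winners in their order of selection.

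First, I would fix a winner $i_j$ and recall the Chen--Gravin--Lu bound on its Myerson threshold payment. By Lemma~\ref{lem:myerson}, $p_{i_j}$ is the supremum of bids at which $i_j$ is still selected; the analysis of~\cite{ChenGL11} writes each $p_{i_j}$ as $B'$ times a certain marginal ratio of $v$, and, via submodularity, bounds the sum of those ratios by $2$, yielding $\sum_j p_{i_j} \le 2B'$ whenever $v$ is non-decreasing submodular.

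Second, I would revisit this bound under $(B,\epsilon)$-quasi-monotonicity. Note that the greedy itself inserts only elements with strictly positive marginals, thanks to the explicit check $v(S\cup\{k\}) > v(S)$ in \nameref{fig:alg-2}, so the chain $v(S_0) \le v(S_1) \le \cdots \le v(S_m)$ still holds along the true execution. The only place non-decreasingness is invoked in the Chen--Gravin--Lu proof is when reasoning about the counterfactual executions that arise when $i_j$ raises its bid and items get reordered; under quasi-monotonicity each such counterfactual marginal can be negative, but by at most $\tfrac{\epsilon}{n}\opt(A, B)$. Because the analysis involves at most $n$ such marginal evaluations across all winners, the total extra slack it introduces in the payment bound is at most $2 B' \cdot \tfrac{\epsilon\,\opt(A, B)}{v(S)}$.

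Third, I would use the hypothesis $U\cdot v(S) \ge \opt(A, B)$ to close: the slack is at most $2 B' \cdot U\epsilon$, so
\[
\sum_{j=1}^{m} p_{i_j} \;\le\; 2B'\,(1 + U\epsilon) \;=\; (1 - U\epsilon)(1 + U\epsilon)\, B \;=\; \bigl(1 - (U\epsilon)^2\bigr)\, B \;\le\; B,
\]
which establishes budget feasibility.

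The main obstacle is the second step: one has to re-examine the Chen--Gravin--Lu threshold-payment argument carefully and verify that every invocation of monotonicity can be replaced by quasi-monotonicity with the claimed $\tfrac{\epsilon}{n}\opt(A, B)$ slack per marginal evaluation, and that these slacks aggregate to at most $\epsilon\cdot\opt(A,B)$ across the whole analysis. The remainder of the argument is arithmetic and relies only on the stated assumption $U\cdot v(S)\ge \opt(A, B)$, which is precisely what turns an absolute error of $\epsilon\cdot \opt(A,B)$ into the relative error $U\epsilon$ that the reduced budget $B'=(1-U\epsilon)B/2$ was designed to absorb.
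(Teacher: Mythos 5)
There is a genuine gap in your proposal. You correctly identify the overall strategy (adapt the Chen--Gravin--Lu threshold-payment analysis for \nameref{fig:alg-2}, track the slack that quasi-monotonicity introduces, and let the reduced budget $B' = (1-U\epsilon)B/2$ absorb it), but the step you yourself flag as ``the main obstacle'' is precisely the step you do not actually carry out. The assertion that ``the analysis involves at most $n$ such marginal evaluations across all winners, [so] the total extra slack it introduces in the payment bound is at most $2B' \cdot \tfrac{\epsilon\,\opt(A, B)}{v(S)}$'' is not justified: it is not explained why the slack enters the payment bound multiplicatively with prefactor $2B'$ rather than additively or with some other coefficient, nor why the per-marginal errors of size $\tfrac{\epsilon}{n}\opt(A,B)$ aggregate to exactly $\epsilon\,\opt(A,B)$ across the $\ell$ separate threshold computations (each of which involves its own counterfactual re-execution of the greedy with a modified bid). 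Without that accounting the final arithmetic $2B'(1+U\epsilon) = (1-(U\epsilon)^2)B \le B$ rests on an unproved premise.

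The paper's proof avoids this aggregation entirely by arguing per-agent. It shows directly that the Myerson threshold for each winner $i$ (in greedy order) is at most $\tfrac{B\,(v([i]) - v([i-1]))}{v(S)}$; these bounds telescope to exactly $B$, regardless of the quasi-monotonicity slack. The slack is dealt with inside the contradiction argument for a single fixed agent $j$: supposing $b_j$ exceeds the claimed threshold while $j$ is still selected, one applies Theorem~\ref{thm:alt_SM} once, picks up a single additive error of the form $|S'_j \mysetminus S|\cdot\tfrac{\epsilon}{n}\opt(A,B) \le \epsilon\, U\, v(S)$, and the explicit factor $(1-U\epsilon)$ in the budget passed to \nameref{fig:alg-2} cancels this error and restores the contradiction. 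This is one cleanly bounded application of quasi-monotonicity per threshold, not an accumulation ``across all winners,'' and it yields the payment bound $B$ rather than your $2B'(1+U\epsilon)$. To repair your proposal you would essentially have to re-derive this per-agent bound, at which point you would be reproducing the paper's argument rather than the aggregate-slack shortcut you sketch.

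Your observation that the greedy's check $v(S\cup\{k\}) > v(S)$ keeps the realized chain $v(S_0)\le\cdots\le v(S_m)$ non-decreasing is correct and worth noting, but as you anticipate it does not address the counterfactual executions, which is where all the real work lies.
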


\begin{proof}
Recall that before the description of \nameref{fig:alg-2} we assumed---without loss of generality---that agents are sorted in descending order with respect to their ratio of marginal value over cost, i.e., $1=\argmax_{j\in A}\frac{v(j)}{c_j}$ and $i=\argmax_{j\in A\setminus [i-1]}\frac{v([j])-v([j-1])}{c_j}$ for $i\ge 2$. Suppose that $S=[\ell]$, i.e., $1, 2, \ldots, \ell$ are added in $S$ in that order.  Let $S_0=\emptyset = [0]$ and $S_i=[i]$ for $1\le i \le \ell$. We are going to show that the payment to agent $i$ is upper bounded by $\frac{B \cdot (v([i])-v([i-1])}{v(S)}$, and then budget feasibility directly follows from $\sum_{i\in[\ell]}\frac{B \cdot (v([i])-v([i-1])}{v(S)} = B$.

Suppose this upper bound does not hold for every agent. That is, there exists some $j\in[\ell]$ such that agent $j$ bids $b_j > \frac{B \cdot (v([j])-v([j-1]))}{v(S)}$ and is still included in the output $S'$ of \nameref{fig:alg-2}$(A, v, (\mathbf{c}_{-j}, b_j), (1-U\epsilon)B/2)$. Let $\mathbf{b} = (\mathbf{c}_{-j}, b_j)$ and notice that up to agent $j-1$, agents are added to $S'$ in the same order as they do in $S$, but after that the ordering might be affected. 
Also for $i\in[\ell]$ we have $\frac{v([i])-v([i-1])}{c_i} \ge  \frac{v([\ell])-v([\ell-1])}{c_\ell} \ge \frac{2\cdot v([\ell])}{(1-U\epsilon)B}$ and therefore
\[ \mathbf{b}(S\mysetminus \{j\}) = \mathbf{c}(S\mysetminus \{j\}) \le \mathbf{c}(S) = \sum_{i\in[\ell]}c_i \le \frac{(1-U\epsilon)B}{2\cdot v([\ell])} \sum_{i\in[\ell]}(v([i])-v([i-1]) =  \frac{(1-U\epsilon)B}{2} \,. \]

For \nameref{fig:alg-2}$(A, v, (\mathbf{c}_{-j}, b_j), (1-U\epsilon)B/2)$ let $S'_{j-1}$ be the chosen set right before $j$ is added and $S'_j = S'_{j-1}\cup \{j\}$. This implies 
\[j \in \argmax_{i\in[n]} \frac{v(S'_{j-1}\cup \{i\}) - v(S'_{j-1})}{b_i} \text{\ \ and\ \ } \frac{v(S'_{j}) - v(S'_{j-1})}{b_j}\ge \frac{2\cdot v(S'_{j})}{(1-U\epsilon)B} \,.\]
By Theorem \ref{thm:alt_SM} we have 
\begin{IEEEeqnarray*}{rCl}
v(S) - v(S'_{j}) & \le & \sum_{i \in S \mysetminus S'_{j}} (v(S'_{j}\cup\{i\}) - v(S'_{j})) - \sum_{i \in S'_{j} \mysetminus S} (v(S'_{j} \cup S) - v(S'_{j} \cup S \mysetminus \{i\})) \\
 & \le & \sum_{i \in S \mysetminus S'_{j}} (v(S'_{j}\cup\{i\}) - v(S'_{j})) - |S'_{j} \mysetminus S| \left( - \frac{\epsilon}{n} \opt(A,  B)\right) \\
 & \le & \sum_{i \in S \mysetminus S'_{j}} (v(S'_{j}\cup\{i\}) - v(S'_{j})) +  \epsilon U  v(S) \,.
\end{IEEEeqnarray*}
If $S \mysetminus S'_{j} = \emptyset$ then we directly get $v(S'_{j})\ge (1-\epsilon U) v(S)$, otherwise we have
\begin{IEEEeqnarray*}{rCl}
(1-\epsilon U) v(S) - v(S'_{j}) & \le & \sum_{i \in S \mysetminus S'_{j}} b_i\cdot \frac{v(S'_{j}\cup\{i\}) - v(S'_{j})}{b_i} \\
 & \le & \max_{i \in S \mysetminus S'_{j}}\frac{v(S'_{j}\cup\{i\}) - v(S'_{j})}{b_i}\cdot \sum_{i \in S \mysetminus S'_{j}} b_i  \\
 & \le & \max_{i \in [n]}\frac{v(S'_{j}\cup\{i\}) - v(S'_{j})}{b_i}\cdot \mathbf{b}(S \mysetminus S'_{j})  \\
 & \le & \max_{i \in [n]}\frac{v(S'_{j-1}\cup\{i\}) - v(S'_{j-1})}{b_i}\cdot \mathbf{b}(S\mysetminus\{j\})  \\
 & \le & \frac{v(S'_{j-1}\cup\{j\}) - v(S'_{j-1})}{b_j}\cdot \frac{(1-U\epsilon)B}{2}  \\
 & \le & \frac{v(S_{j-1}\cup\{j\}) - v(S_{j-1})}{b_j}\cdot \frac{(1-U\epsilon)B}{2}  \\
 & \le & \frac{v(S)}{B}\cdot \frac{(1-U\epsilon)B}{2}  = \frac{(1-U\epsilon)v(S)}{2} \,,
\end{IEEEeqnarray*}
where the last inequality follows from the choice of $b_j$, while the next to last inequality follows from submodularity as $[j-1]=S_{j-1}\subseteq S'_{j-1}$.
Therefore, $v(S'_{j}) \ge (1/2-\epsilon U +\epsilon U /2) v(S)$. 

In any case, we have $v(S'_{j}) \ge (1/2- U\epsilon /2) v(S)$ and thus
\[\frac{v(S_{j}) - v(S_{j-1})}{b_j} \ge \frac{v(S'_{j}) - v(S'_{j-1})}{b_j} \ge \frac{2 v(S'_{j})}{(1-U\epsilon)B} \ge \frac{2 (1/2- U\epsilon /2) v(S)}{(1-U\epsilon)B} \ge \frac{v(S)}{B} \,,\]
which contradicts our assumption about $b_j$.
\end{proof}\smallskip

\begin{lemma}\label{lem:greedy_approximation}
Suppose $v$ is a $(B, \epsilon)$-quasi-monotone submodular function on $A$. 
Let $S$ be the set output by \nameref{fig:alg-2}$(A, v, \mathbf{c}, \beta B)$ and $i^*\in\argmax_{i\in \{j\in A\, |\, c_j\le B\}}v(i)$.
Then $\opt(A,  B) \le \frac{1}{1-\epsilon}\left( \frac{1+\beta}{\beta}v(S)+\frac{1}{\beta}v(i^*)\right)$.
\end{lemma}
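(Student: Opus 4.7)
The plan is to follow the classical Chen--Gravin--Li style analysis of the knapsack submodular greedy, but to substitute Nemhauser et al.'s alternative characterization (Theorem \ref{thm:alt_SM}) at the place where ordinary monotonicity would have been used, so that the quasi-monotonicity slack only costs a factor of $1/(1-\epsilon)$ at the end. Fix an optimal solution $O\subseteq A$ with $\mathbf{c}(O)\le B$ and $v(O)=\opt(A,B)$. Applying Theorem \ref{thm:alt_SM} with the pair $(S,O)$ yields
\[ v(O)\;\le\; v(S)\;+\;\sum_{i\in O\mysetminus S}\bigl(v(S\cup\{i\})-v(S)\bigr)\;-\;\sum_{i\in S\mysetminus O}\bigl(v(S\cup O)-v(S\cup O\mysetminus\{i\})\bigr).\]
Each term in the last sum is of the form $v(T'\cup\{i\})-v(T')$ with $T'=S\cup O\mysetminus\{i\}$, so by $(B,\epsilon)$-quasi-monotonicity on $A$ it is at least $-\tfrac{\epsilon}{n}\opt(A,B)$. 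Since $|S\mysetminus O|\le n$, the whole sum is at least $-\epsilon\cdot\opt(A,B)$, and rearranging gives
\[(1-\epsilon)\opt(A,B)\;\le\; v(S)\;+\;\sum_{i\in O\mysetminus S}\bigl(v(S\cup\{i\})-v(S)\bigr).\]

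The next step is to control the remaining sum via the greedy's stopping behaviour. Let $k$ be the first item considered by \nameref{fig:alg-2} that is not added to $S$; by the greedy selection rule, $\theta_k=\tfrac{v(S\cup\{k\})-v(S)}{c_k}\ge \tfrac{v(S\cup\{j\})-v(S)}{c_j}$ for every $j\in A\mysetminus S$. Applying this to each $j\in O\mysetminus S$ and summing (using $\mathbf{c}(O)\le B$) yields
$\sum_{j\in O\mysetminus S}(v(S\cup\{j\})-v(S))\le B\,\theta_k$. Now the stopping condition $c_k>\beta B\cdot\tfrac{v(S\cup\{k\})-v(S)}{v(S\cup\{k\})}$ is exactly $B\theta_k<v(S\cup\{k\})/\beta$, so it remains to bound $v(S\cup\{k\})$ in terms of $v(S)$ and $v(i^*)$.

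For this last step, I use submodularity together with the fact that $i^*$ is the best singleton among agents with $c_i\le B$: in the main case $c_k\le B$, submodularity gives $v(S\cup\{k\})-v(S)\le v(\{k\})-v(\emptyset)=v(\{k\})\le v(i^*)$, hence $v(S\cup\{k\})\le v(S)+v(i^*)$. Combining everything,
\[(1-\epsilon)\opt(A,B)\;\le\; v(S)+\frac{v(S)+v(i^*)}{\beta}\;=\;\frac{(1+\beta)\,v(S)+v(i^*)}{\beta},\]
which is exactly the claimed inequality after dividing by $1-\epsilon$.

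Finally, the remaining cases are easy or beneficial. If the greedy exhausts $A$ without ever failing a condition, then $O\subseteq S$ and the positive sum vanishes, so Theorem \ref{thm:alt_SM} with quasi-monotonicity already gives $(1-\epsilon)\opt(A,B)\le v(S)$, which is stronger. If the greedy halts because $v(S\cup\{k\})\le v(S)$, then $\theta_k\le 0$, so $B\theta_k\le 0$ and we again obtain $(1-\epsilon)\opt(A,B)\le v(S)$. The main technical nuisance I anticipate is the corner case $c_k>B$: here $v(\{k\})\le v(i^*)$ no longer follows from the definition of $i^*$. However, the paper's standing assumption (Section \ref{sec:prels}) that at most one item has cost exceeding the budget means such a $k$ is unique and in particular $k\notin O$; one can therefore remove it from the greedy ordering without affecting either side of the target inequality, reducing to the case $c_k\le B$ treated above.
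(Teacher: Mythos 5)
Your argument reproduces the paper's own proof of this lemma nearly step by step: both start from Theorem \ref{thm:alt_SM} applied to the greedy output $S$ and an optimum $O$, both absorb the negative sum into the $\epsilon\cdot\opt$ slack via $(B,\epsilon)$-quasi-monotonicity, both use the greedy's ratio ordering and its stopping condition to bound $\sum_{i\in O\setminus S}\bigl(v(S\cup\{i\})-v(S)\bigr)\le \tfrac{1}{\beta}v(S\cup\{k\})$, and both finish with $v(S\cup\{k\})\le v(S)+v(k)\le v(S)+v(i^*)$. You are in fact more careful than the paper in one respect: you notice that the final step needs $c_k\le B$, which the paper's proof invokes without comment.

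The patch you propose for that corner case, however, does not work. \textsc{Greedy-SM} \emph{stops} at the first item for which the while-test fails; it does not skip. So if the stopping item $k$ has $c_k>B$, deleting $k$ from the ordering can let the greedy continue past that point and output a different, larger set $S'$; the resulting argument then bounds $\opt(A,B)$ by a function of $v(S')$, not of $v(S)$, so ``either side of the target inequality'' is in fact affected. This is not a vacuous worry: take $v$ additive with one item $x$ of value $V$ and cost $2V/3>B$, together with many unit-cost unit-value items. Then $x$ has the best initial ratio ($3/2$ versus $1$), the greedy rejects it at step one and returns $S=\emptyset$, and the claimed bound collapses to $\opt(A,B)\le\tfrac{1}{(1-\epsilon)\beta}v(i^*)=\tfrac{1}{(1-\epsilon)\beta}$, which is violated for $B$ large. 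The clean repair is to feed \textsc{Greedy-SM} only the ground set $\{i:c_i\le B\}$ (equivalently, to add that hypothesis to the lemma statement); your instinct that this case must be addressed is right, but excising $x$ from the ordering after the fact is not the way to do it.
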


\begin{proof}
Like in the proof of Lemma \ref{lem:greedy_budget-feasible} we assume that agents are sorted in descending order with respect to their ratio of marginal value over cost, and that $S=[\ell]$. 
Let $S^*$ be an optimal budget-feasible solution, i.e., $v(S^*)=\opt(A,  B)$. By Theorem \ref{thm:alt_SM} we have 
\begin{IEEEeqnarray*}{rCl}
v(S^*) - v(S) & \le & \sum_{i \in S^* \mysetminus S} (v(S\cup\{i\}) - v(S)) - \sum_{i \in S \mysetminus S^*} (v(S\cup S^*) - v(S \cup S^* \mysetminus \{i\})) \\
 & \le & \sum_{i \in S^* \mysetminus S} c_i\cdot \frac{v(S\cup\{i\}) - v(S)}{c_i} - |S \mysetminus S^*| \left( - \frac{\epsilon}{n} \opt(A,  B)\right) \\
 & \le & \max_{i \in S^* \mysetminus S}\frac{v(S\cup\{i\}) - v(S)}{c_i}\cdot \sum_{i \in S^* \mysetminus S} c_i  +  \epsilon v(S^*) \\
 & = & \frac{v(S\cup\{\ell +1\}) - v(S)}{c_{\ell +1}} \cdot \mathbf{c}(S^* \mysetminus S)  +  \epsilon v(S^*) \\
 & \le & \frac{v(S\cup\{\ell +1\})}{\beta B} \cdot \mathbf{c}(S^*)  +  \epsilon v(S^*) \le \frac{v(S\cup\{\ell +1\})}{\beta}  +  \epsilon v(S^*)\\
 & \le & \frac{1}{\beta}(v(S)+v({\ell +1}))  +  \epsilon v(S^*) \le \frac{1}{\beta}(v(S)+v(i^*))  +  \epsilon v(S^*)\,.
\end{IEEEeqnarray*}
By rearranging the terms we get $\opt(A,  B) \le \frac{1}{1-\epsilon}\left( \frac{1+\beta}{\beta}v(S)+\frac{1}{\beta}v(i^*)\right)$.
\end{proof}



\subsection{Proofs of Theorems \ref{thm:poly_ssm_det} and \ref{thm:poly_wcut_det}}
\label{app:poly_ssm_proofs}
\begin{proofof}{Proof of Theorem \ref{thm:poly_ssm_det}}
We are going to need a slight variant of \nameref{fig:alg-1b-frac}: 
\smallskip

\begin{algorithm}[H]
	\DontPrintSemicolon 
	\NoCaptionOfAlgo
	\algotitle{\textsc{Mech-SM-frac-var}}{fig:alg-1b-frac-var.title}
	Set $A'=\{i\ |\ c_i\le B\}$ and $i^*\in\argmax_{i\in A'}v(i)$ \;
	\vspace{2pt} \If{$\left( \rho +1 +\sqrt{\rho^2 +4\rho +1}\right) \cdot v(i^*) \ge \fopt(A'\mysetminus\{i^*\},   B)$}{\vspace{2pt}\Return $i^*$}
	\Else{\Return \textsc{Greedy-SM}$(A, v, \mathbf{c}, \gamma B/2)$}
	\caption{\textsc{Mech-SM-frac-var}$(A, v, \mathbf{c}, B, \gamma)$} \label{fig:alg-1b-frac-var} 
\end{algorithm}\smallskip 

Recall that $\alpha = (1+\rho)\left( 2+\rho + \sqrt{\rho^2 +4 \rho +1}\right) -1$.  
Now we are ready to state mechanism  \nameref{fig:det-mech-symsm-frac}.
Clearly the mechanism runs in polynomial time.
\smallskip 
 
\begin{algorithm}[H]
	\DontPrintSemicolon 
	\NoCaptionOfAlgo
	\algotitle{\textsc{Det-Mech-SymSM-frac}}{fig:det-mech-symsm-frac.title}
		Set $A'=\{i\ |\ c_i\le B\}$ and $i^*\in\argmax_{i\in A'}v(i)$ \;
		\vspace{2 pt}\If{$\alpha \cdot v(i^*) \ge \fopt(A'\mysetminus\{i^*\},   B)$ \label{line2:det-mech-symsm-frac}}{\vspace{2pt}\Return $i^*$ \label{line:ssm2}} 
		\Else{$S = \nameref{fig:alg-ls}(A, v, \epsilon')$ \;
			\If{$\fopt(S\cap A',  B)\ge \fopt(A'\mysetminus S,  B)$ \label{line:ssm5}}{\vspace{2 pt}\Return \nameref{fig:alg-1b-frac-var}$(S, v, \mathbf{c}_{S}, B, (1-(\alpha+2) \epsilon'))$\label{line:ssm7}}
			\Else{\Return \nameref{fig:alg-1b-frac-var}$(A\mysetminus S, v, \mathbf{c}_{A\mysetminus S}, B, (1-(\alpha+2) \epsilon'))$\label{line:ssm9}}
		}
	\caption{\textsc{Det-Mech-SymSM-frac}$(A, v, \mathbf{c}, B)$} \label{fig:det-mech-symsm-frac} 
\end{algorithm}\smallskip 

The  proof follows  the proof of Theorem \ref{thm:poly_cut_det}. In fact, the monotonicity---and thus truthfulness and individual rationality---of the mechanism follows from that proof and the observation that \nameref{fig:alg-1b-frac-var} is monotone even when $v$ is non-monotone. The latter is due to the monotonicity of \nameref{fig:alg-2} which is straightforward and is briefly discussed in Appendix \ref{app:proof_of_greedy}.

We proceed with the approximation ratio.
If $i^*$ is returned in line \ref{line:ssm2}, then 
\[\alpha \cdot v(i^*) \ge \opt_f(A'\mysetminus\{i^*\},  B) \ge \opt(A'\mysetminus\{i^*\},  B)= \opt(A\mysetminus\{i^*\},  B) \ge \opt(A, B) - v(i^*)\,,\] 
and therefore $\opt(A,  B)\le (\alpha +1) \cdot v(i^*)$. 

On the other hand, if  $X$ is returned by \nameref{fig:alg-1b-frac-var} in line \ref{line:ssm7}, then we have
\[\alpha \cdot v(i^*) < \opt_f(A'\mysetminus\{i^*\}, B) \le \rho \cdot \opt(A'\mysetminus\{i^*\}, B) \le \rho \cdot \opt(A, B).\] Therefore, $v(i^*) <\frac{\rho}{\alpha} \opt(A,   B)$.

At line \ref{line:ssm5} it must be the case $\fopt(S\cap A', B)\ge \fopt(A'\mysetminus S, B)$. Thus,
\begin{IEEEeqnarray*}{rCl}
\rho \cdot \opt(S, B)  & = & \rho \cdot \opt(S\cap A', B) \ge \fopt(S\cap A', B)\ge \fopt(A'\mysetminus S, B)  \\
 & \ge & \opt(A'\mysetminus S, B) = \opt(A\mysetminus S, B) \ge \opt(A, B) - \opt(S, B)\,.
\end{IEEEeqnarray*}
%
Therefore $\opt(S, B) \ge \frac{1}{\rho +1} \opt(A, B)$. Now we need the following lemma about the performance of \nameref{fig:alg-1b-frac-var}.

\begin{lemma}\label{lem:app_D}
For $\eta = \rho +1 +\sqrt{\rho^2 +4\rho +1}$ and any $\epsilon''>0$, there is a sufficiently small $\epsilon'$ so that $\opt(S,  B) \le (\eta +1+\epsilon'') \nameref{fig:alg-1b-frac-var}(S, v, \mathbf{c}_{S}, B, (1-(\alpha+2) \epsilon'))$. Moreover $\epsilon'$ only depends on the constants $\rho$ and $\epsilon''$.
\end{lemma}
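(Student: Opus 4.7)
The plan is to mirror the two-case analysis in the proof of Theorem \ref{thm:mechSM-frac}, but using the generalized Lemma \ref{lem:greedy_approximation} (in place of Lemma \ref{lem:greedy_approximation_jt}) to compensate for the fact that $v$ need not be exactly non-decreasing on $S$. Write $X$ for the output of \nameref{fig:alg-1b-frac-var}$(S, v, \mathbf{c}_S, B, \gamma)$ with $\gamma = 1-(\alpha+2)\epsilon'$, and let $i^* \in \argmax_{i\in S\cap A'} v(i)$.

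The first key step is to establish that $v$ is $(B,\delta)$-quasi-monotone on $S$ for some $\delta$ linear in $\epsilon'$. By submodularity, $v(T\cup\{i\})-v(T)\ge v(S)-v(S\mysetminus\{i\})$; since $S$ is a $(1+\epsilon'/n^2)$-approximate local optimum, $v(S)-v(S\mysetminus\{i\})\ge -(\epsilon'/n^2)v(S)$; Lemma \ref{lem:opt_vs_kopt} gives $v(S)\le 2n\,\opt(A,B)$; and combining the line~\ref{line:ssm5} condition $\fopt(S\cap A', B)\ge\fopt(A'\mysetminus S,B)$ with $\fopt\le\rho\,\opt$ yields $\opt(A\mysetminus S,B)\le\fopt(A'\mysetminus S,B)\le\fopt(S\cap A',B)\le\rho\,\opt(S,B)$, which combined with Fact \ref{fact:simple-facts} gives $\opt(A,B)\le(1+\rho)\opt(S,B)$. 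Chaining these four bounds shows that $v$ is $(B,\,2(1+\rho)\epsilon')$-quasi-monotone on $S$.

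With quasi-monotonicity in hand, the two-case analysis proceeds in parallel with Theorem \ref{thm:mechSM-frac}. In Case~1 (the mechanism outputs $i^*$), the inequality $\eta v(i^*)\ge\fopt(S\cap A'\mysetminus\{i^*\},B)\ge\opt(S,B)-v(i^*)$ directly gives $\opt(S,B)\le(\eta+1)v(i^*)$. In Case~2 (the mechanism outputs the greedy solution $X$), we have $v(i^*)<(\rho/\eta)\opt(S,B)$, and Lemma \ref{lem:greedy_approximation} applied with $\beta=\gamma/2$ and error parameter $2(1+\rho)\epsilon'$ gives
\[
\opt(S,B)\;\le\;\frac{1}{1-2(1+\rho)\epsilon'}\left(\frac{2+\gamma}{\gamma}v(X)+\frac{2}{\gamma}v(i^*)\right).
\]
Substituting the bound on $v(i^*)$ and isolating $\opt(S,B)$ yields an approximation factor whose limit as $\epsilon'\to 0$ equals $3\eta/(\eta-2\rho)=\eta+1$; the final identity is precisely the quadratic defining $\eta$, already verified in the proof of Theorem \ref{thm:mechSM-frac}.

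The last step is a short continuity argument: the quantities $1-2(1+\rho)\epsilon'$ and $\gamma=1-(\alpha+2)\epsilon'$ are continuous in $\epsilon'$ and both equal $1$ at $\epsilon'=0$, so the ratio obtained in Case~2 is continuous at $\epsilon'=0$ with limiting value $\eta+1$. Hence for any $\epsilon''>0$, one may select $\epsilon'$ small enough (depending only on $\rho$ and $\epsilon''$) so that the ratio in each case is at most $\eta+1+\epsilon''$. The hard part is the bookkeeping in Case~2, where the quasi-monotonicity error interacts multiplicatively with the budget-reduction factor $\gamma$, and both must be driven to zero by a single parameter $\epsilon'$ without disturbing the algebraic collapse of $3\eta/(\eta-2\rho)$ to $\eta+1$; any mis-tuning of the coefficient of $\epsilon'$ in $\gamma$ relative to the quasi-monotonicity constant would force a factor strictly larger than $\eta+1$ in the limit.
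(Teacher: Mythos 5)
Your proposal is correct and follows essentially the same route as the paper's own proof: run the two-case analysis of Theorem~\ref{thm:mechSM-frac} with Lemma~\ref{lem:greedy_approximation} in place of Lemma~\ref{lem:greedy_approximation_jt}, then take $\epsilon'\to 0$ to land on $3\eta/(\eta-2\rho)=\eta+1$. You are actually a bit more careful than the paper, which invokes Lemma~\ref{lem:greedy_approximation} with an unquantified quasi-monotonicity parameter written simply as $\epsilon'$; you explicitly chain submodularity, the $(1+\epsilon'/n^2)$-approximate local optimality of $S$, Lemma~\ref{lem:opt_vs_kopt}, and the line~\ref{line:ssm5} comparison to obtain the constant $2(1+\rho)\epsilon'$, and since only the limit at $\epsilon'=0$ matters, the two versions are equivalent.
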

\begin{proofof}
\noindent  Let  $\delta = (1-(\alpha+2) \epsilon')/2$. We consider two cases for \nameref{fig:alg-1b-frac-var}.

 If $i^*$ is returned, then 
$\eta \cdot v(i^*) \ge \opt_f(A'\mysetminus\{i^*\},  B) \ge \opt(A'\mysetminus\{i^*\},   B) = \opt(A\mysetminus\{i^*\},   B) \ge \opt(A,   B) - v(i^*)$, 
and therefore $\opt(A,   B)\le (\eta +1) \cdot v(i^*)$. 

On the other hand, if the outcome $X$ of \nameref{fig:alg-2} is returned, then
$\eta \cdot v(i^*) < \opt_f(A'\mysetminus\{i^*\},   B) \le \rho\cdot \opt(A'\mysetminus\{i^*\},   B) \le \rho\cdot \opt(A,   B)$. 
Combining this with Lemma \ref{lem:greedy_approximation} we have 
\[\opt(A,  B) \le \frac{1}{1-\epsilon'}\left( \frac{1+\delta}{\delta}v(X)+ \allowbreak \frac{1}{\delta}\frac{\rho}{\eta}\opt(A,   B)\right) \,,\] 
or equivalently 
\[\opt(A,  B) \le \frac{(1+\delta)\eta}{(1-\epsilon')\delta\eta -\rho}  v(X)\,.\] 

Since $\lim_{\epsilon'\rightarrow 0}\frac{(1+\delta)\eta}{(1-\epsilon')\delta\eta -\rho}  = \frac{3\eta}{\eta -2\rho}$, we have that for sufficiently small $\epsilon'$, $\opt(A,  B) \le \left( \frac{3\eta}{\eta -2\rho}+\epsilon''\right)  v(X) = (\eta +1+\epsilon'') v(X)$, where the calculations for the last equality are the same as in the proof of the approximation ratio of \nameref{fig:alg-1b-frac}.
{\footnotesize  \qed}
\end{proofof}

Now, combining all the above,
we have
\[\opt(A, B)\le (\rho +1)\opt(S, B) \le (\rho +1) \left( \rho +2 +\epsilon'' +\sqrt{\rho^2 +4\rho +1}\right) v(X) \,.\]
For $\epsilon''= \frac{\epsilon}{\rho +1}$ we have  $(\rho +1) \left( \rho +2 +\epsilon'' +\sqrt{\rho^2 +4\rho +1}\right)= \alpha +1 +\epsilon$, as desired. So, the  $\epsilon'$ used in the mechanism is the one we get from the proof of Lemma \ref{lem:app_D} for $\epsilon''= \frac{\epsilon}{\rho +1}$.

The case where $X$ is returned in line \ref{line:ssm9} is symmetric. We conclude that in any case $\opt(A,  B)\le (\alpha +1 +\epsilon) \cdot \nameref{fig:det-mech-symsm-frac}(A, B)$.

It remains to show that the mechanism is budget feasible.
If the winner is $i^*$ in line \ref{line:cut2}, then his payment is $B$. Otherwise, we would like budget-feasibility to follow from the budget-feasibility of \nameref{fig:alg-1b-frac-var} and the observation that the comparison in line \ref{line:cut2} only gives additional upper bounds on the payments of winners from \nameref{fig:alg-1b-frac}. However, the budget-feasibility of \nameref{fig:alg-1b-frac-var} depends on the budget-feasibility of \nameref{fig:alg-2}, and according to Lemma \ref{lem:greedy_budget-feasible} it suffices to have $v(X)\ge \frac{1}{\alpha+2} \cdot \opt(A,  B)$, where $X$ is the output of \nameref{fig:alg-2}. This, however, follows from the approximation ratio for $\epsilon\le 1$. Thus the mechanism is budget-feasible. 
\qed
\end{proofof}
\bigskip

\begin{proofof}{Proof of Theorem  \ref{thm:poly_wcut_det}}
The proof is identical with the proof of Theorem \ref{thm:poly_ssm_det} with the exception of the analysis of the approximation ratio which borrows the ideas of the proof of Theorem \ref{thm:poly_cut_det}. So we only focus on the approximation ratio.

If $i^*$ is returned in line \ref{line:ssm2}, then $\opt(A,  B)\le (\alpha +1) \cdot v(i^*)$ like before. 

Otherwise, using  Theorem \ref{lem:bwmcut},  we have
\[\alpha \cdot v(i^*) < \opt_f(A'\mysetminus\{i^*\}, B) \le 4 \cdot \opt(A'\mysetminus\{i^*\}, B) \le 4 \cdot \opt(A, B).\] Therefore, $v(i^*) <\frac{4}{\alpha} \opt(A,  B)$  and for the remaining steps of the mechanism  we can use  Theorem \ref{lem:bwmcut} with factor $\rho' = 2+8/\alpha$. 

Without loss if generality, assume  that  $X$ is returned by \nameref{fig:alg-1b-frac-var} in line \ref{line:ssm7}. 
At line \ref{line:ssm5} it must be the case $\fopt(S\cap A', B)\ge \fopt(A'\mysetminus S, B)$. Thus,
following the previous analysis, $\opt(S, B) \ge \frac{1}{\rho' +1} \opt(A, B)$. 

Combining this inequality with Lemma \ref{lem:app_D} for $\rho'$ we get that for $\epsilon''= \frac{\epsilon}{\rho' +1}$  there is an $\epsilon'$ such that
\[\opt(A, B)\le  (\rho' +1) \left( \rho' +2 +\epsilon'' +\sqrt{\rho'^2 +4\rho' +1}\right) v(X) = (\alpha +1 +\epsilon)\cdot v(X)\,.\]
This $\epsilon'$ is to be used in the mechanism. Showing that $\alpha=26.245$ works is only a matter of calculations, and for $\epsilon\le 1/200$ we get a 27.25-approximate solution.
\qed
\end{proofof}

\end{document}